\documentclass[10pt,journal,compsoc]{IEEEtran}
% *** CITATION PACKAGES ***
%
\ifCLASSOPTIONcompsoc
  % IEEE Computer Society needs nocompress option
  % requires cite.sty v4.0 or later (November 2003)
  \usepackage[nocompress]{cite}
\else
  % normal IEEE
  \usepackage{cite}
\fi

% correct bad hyphenation here
\hyphenation{op-tical net-works semi-conduc-tor}

\usepackage{amsmath,amssymb,amsfonts}
\usepackage{algorithm}
\usepackage{algpseudocode}
\usepackage{etoolbox}
\usepackage{multirow}
\usepackage[table,xcdraw]{xcolor}
\usepackage{graphicx}
\usepackage{textcomp}
\usepackage{xcolor}
\usepackage{url}
\usepackage{hyperref}
\usepackage{subcaption}
\usepackage{booktabs}
\usepackage{xr}
\usepackage{makecell}
\usepackage{printlen}

\usepackage{amsthm}
\newtheorem{theorem}{Theorem}
\newtheorem{definition}{Definition}

\newcolumntype{L}[1]{>{\raggedright\let\newline\\\arraybackslash\hspace{0pt}}m{#1}}
\newcolumntype{C}[1]{>{\centering\let\newline\\\arraybackslash\hspace{0pt}}m{#1}}
\newcolumntype{R}[1]{>{\raggedleft\let\newline\\\arraybackslash\hspace{0pt}}m{#1}}

\newcommand{\squishlist}{
   \begin{list}{$\bullet$}
    { \setlength{\itemsep}{2pt}%2
    \setlength{\parsep}{0pt}
      \setlength{\topsep}{5pt}     \setlength{\partopsep}{0pt}
      \setlength{\leftmargin}{1.35em} \setlength{\labelwidth}{1em}
      \setlength{\labelsep}{0.5em} } }

\newcommand{\squishlisttwo}{
   \begin{list}{$\bullet$}
    { \setlength{\itemsep}{0pt}    \setlength{\parsep}{0pt}
      \setlength{\topsep}{0pt}     \setlength{\partopsep}{0pt}
      \setlength{\leftmargin}{1.35em} \setlength{\labelwidth}{1em}
      \setlength{\labelsep}{0.5em} } }

\newcommand{\squishend}{
    \end{list}  }

\usepackage{tikz}
\usetikzlibrary{tikzmark}
\usetikzlibrary{calc}

\errorcontextlines\maxdimen

% begin vertical rule patch for algorithmicx (http://tex.stackexchange.com/questions/144840/vertical-loop-block-lines-in-algorithmicx-with-noend-option)
% note that some of the packages above are also needed
\newcommand{\ALGtikzmarkcolor}{black}% customise this, if you want
\newcommand{\ALGtikzmarkextraindent}{2.5pt}% customise this, if you want
\newcommand{\ALGtikzmarkverticaloffsetstart}{-.8ex}% customise this, if you want
\newcommand{\ALGtikzmarkverticaloffsetend}{-.8ex}% customise this, if you want
\makeatletter
\newcounter{ALG@tikzmark@tempcnta}

\newcommand\ALG@tikzmark@start{%
    \global\let\ALG@tikzmark@last\ALG@tikzmark@starttext%
    \expandafter\edef\csname ALG@tikzmark@\theALG@nested\endcsname{\theALG@tikzmark@tempcnta}%
    \tikzmark{ALG@tikzmark@start@\csname ALG@tikzmark@\theALG@nested\endcsname}%
    \addtocounter{ALG@tikzmark@tempcnta}{1}%
}

\def\ALG@tikzmark@starttext{start}
\newcommand\ALG@tikzmark@end{%
    \ifx\ALG@tikzmark@last\ALG@tikzmark@starttext
        % ignore this, the block was opened then closed directly without any other blocks in between (so just a \State basically)
        % don't draw a vertical line here
    \else
        \tikzmark{ALG@tikzmark@end@\csname ALG@tikzmark@\theALG@nested\endcsname}%
        \tikz[overlay,remember picture] \draw[\ALGtikzmarkcolor] let \p{S}=($(pic cs:ALG@tikzmark@start@\csname ALG@tikzmark@\theALG@nested\endcsname)+(\ALGtikzmarkextraindent,\ALGtikzmarkverticaloffsetstart)$), \p{E}=($(pic cs:ALG@tikzmark@end@\csname ALG@tikzmark@\theALG@nested\endcsname)+(\ALGtikzmarkextraindent,\ALGtikzmarkverticaloffsetend)$) in (\x{S},\y{S})--(\x{S},\y{E});%
    \fi
    \gdef\ALG@tikzmark@last{end}%
}

\algrenewcommand\algorithmicindent{1.2em}

% the following line injects our new tikzmarking code
\apptocmd{\ALG@beginblock}{\ALG@tikzmark@start}{}{\errmessage{failed to patch}}
\pretocmd{\ALG@endblock}{\ALG@tikzmark@end}{}{\errmessage{failed to patch}}
 \algrenewcommand\ALG@beginalgorithmic{\footnotesize}
\makeatother

\newcommand{\approach}{fPMC}

%% Document starts
\begin{document}
\title{Software Performability Analysis\\ Using Fast Parametric Model Checking}

\author{Xinwei~Fang\IEEEauthorrefmark{1},
        Radu~Calinescu\IEEEauthorrefmark{1},
        Simos~Gerasimou,
        and~Faisal~Alhwikem% <-this % stops a space
\IEEEcompsocitemizethanks{\IEEEcompsocthanksitem[]  \IEEEauthorrefmark{1} The first two authors contributed equally to the article.
\IEEEcompsocthanksitem[] X. Fang, R. Calinescu, S. Gerasimou and F. Alhwikem are with the Department of Computer
Science at the University of York, UK.}% <-this % stops an unwanted space
}

\IEEEtitleabstractindextext{
\begin{abstract}
We present an efficient parametric model checking (PMC) technique for the analysis of software \emph{performability}, i.e., of the performance and dependability properties of software systems. The new PMC technique works by automatically decomposing a parametric discrete-time Markov chain (pDTMC) model of the software system under verification into fragments that can be analysed independently, yielding results that are then combined to establish the required software performability properties. Our fast parametric model checking (fPMC) technique enables the formal analysis of software systems modelled by pDTMCs that are too complex to be handled by existing PMC methods. Furthermore, for many pDTMCs that state-of-the-art parametric model checkers can analyse, \approach\ produces solutions (i.e., algebraic formulae) that are simpler and much faster to evaluate. We show experimentally that adding fPMC to the existing repertoire of PMC methods improves the efficiency of parametric model checking significantly, and extends its applicability to software systems with more complex behaviour than currently possible.
\end{abstract}

% Note that keywords are not normally used for peerreview papers.
\begin{IEEEkeywords}
software performability; nonfunctional software properties; parametric model checking; Markov models
\end{IEEEkeywords}}

\maketitle

\section{Introduction}
Most software operates in environments characterised by workloads, usage profiles, failures and available resources that are stochastic in nature~\cite{Calinescu2013emerging,de2017software,hezavehi2021uncertainty}. As such, probabilistic models such as Markov chains~\cite{filieri2015supporting,gerasimou2018synthesis,paterson2018observation}, queueing networks~\cite{balsamo2003review,4620121} and stochastic Petri nets~\cite{balsamo2012methodological,lindemann1998performance,perez2010performance} have long been used to analyse the \emph{performability} (i.e., the performance, dependability and other nonfunctional properties) of software. %systems. 

In this paper, we focus on the analysis of software performability using \emph{parametric discrete-time Markov chains} (pDTMCs), i.e., Markov chains that have transition probabilities and/or that are augmented with rewards specified as rational functions over the parameters of the analysed system. The technique used to analyse these stochastic models is called \emph{parametric model checking} (PMC)~\cite{Daws:2004:SPM:2102873.2102899,hahn2011probabilistic,Jansen2014,RaduePMC}. Given a pDTMC model of a software system, and a set of nonfunctional system properties specified in probabilistic temporal logic, PMC computes algebraic formulae for these properties. The concept is straightforward. As a simple example, consider a web server that handles two types of request, and suppose that requests belong to these types with probabilities $p$ and $1-p$. If the mean times to handle the two types of request are $t_1$ and $t_2$, respectively, then the expected (i.e., mean) time for handling a request is computed by PMC as $pt_1+(1-p)t_2$. 

The algebraic formulae produced by PMC have many important applications in software engineering. They can be used to analyse the sensitivity of nonfunctional software properties to parametric variability~\cite{filieri2015supporting}, to identify optimal system configurations for software product lines~\cite{ghezzi2013model,ghezzi2011verifying}, and to establish confidence intervals for the analysed nonfunctional properties~\cite{calinescu2016formal,calinescu2016fact,alasmarietal2022}. Furthermore, PMC formulae precomputed prior to deployment (when some of the system parameters are unknown) can be evaluated at runtime (when the parameter values are determined through monitoring), to verify if the nonfunctional requirements of a system are still satisfied after environmental changes~\cite{10.1145/2884781.2884814}. Last but not least, self-adaptive systems can use these formulae to efficiently select new configurations when requirements are violated after such changes~\cite{Filieri2011,filieri2013probabilistic}. 

Despite these benefits, PMC is seldom used in practice due to its limited scalability. While theoretical advances over the past decade~\cite{hahn2011probabilistic,Jansen2014} and their implementation in state-of-the-art model checkers~\cite{param,prism,storm} have alleviated this limitation, existing PMC approaches are often unable to analyse pDTMCs with large numbers of parameters.

Our paper presents a fast parametric model checking (fPMC) technique that extends the applicability of PMC to software systems with considerably more complex behaviour and with much larger sets of parameters than currently possible. \approach\ is a compositional analysis technique that uses well-defined rules (described later in the paper) to partition the graph induced by the pDTMC under analysis\footnote{i.e., the directed graph comprising a vertex for each pDTMC state and an edge between each pair of vertices that correspond to pDTMC states between which a transition is possible} into subgraphs called \emph{fragments}. The \approach\ fragments define small pDTCMs that are analysed individually to generate PMC subexpressions. Finally, the overall PMC result is obtained by combining these subexpressions with an expression produced by analysing an \emph{abstract} model created by replacing each fragment from the original pDTMC with a single state.

\approach\ fragments are not strongly connected components (SCCs) of %the graph induced by 
the analysed pDTMC. They can typically be assembled to ensure that each fragment is small enough for its individual PMC analysis to be feasible, and large enough to avoid the creation of so many fragments that the abstract pDTMC becomes difficult to analyse. This flexibility yields both fragments that include only a part of an SCC and fragments that include multiple SCCs, and explains why fPMC can efficiently analyse many pDTMCs not handled by the SCC-partitioning PMC approach from~\cite{Jansen2014}, as shown in our experimental evaluation from Section~\ref{sec:evaluation}. %(implemented by the model checker Storm~\cite{storm}, and evaluated later in the paper).

\approach\ builds on recent research that laid the groundwork for the use of pDTMC fragments to speed up parametric model checking~\cite{RaduePMC}. However, that research provides no algorithm for the partition of pDTMCs into fragments. The main contributions of our paper are:
\begin{enumerate}
    \item The \approach\ theoretical foundation comprising algorithms \textbf{(a)}~for pDTMC fragmentation, and \textbf{(b)}~for pDTMC \emph{restructuring}, to aid the formation of suitably sized fragments.
    \item A new parametric model checking tool that \textbf{(a)}~employs a simple heuristic to determine whether the analysis of a pDTMC requires fragmentation, and \textbf{(b)}~performs the analysis of the pDTMC by using our \approach\ fragmentation and restructuring algorithms if fragmentation is required, or by invoking the model checker Storm~\cite{storm} otherwise.
    \item An extensive evaluation of the \approach\ theoretical foundation and tool for 62 pDTMC model variants and a wide range of performability software properties taken from the %software engineering and parametric model checking 
    research literature.  
\end{enumerate}
A preliminary \approach\ version that only supports the analysis of reachability probabilistic temporal logic formulae over pDTMCs was introduced in~\cite{fang2021fast}. This paper extends the theoretical foundation from~\cite{fang2021fast} with:
\begin{enumerate}
    \item Support for the PMC of unbounded until formulae, which correspond to the analysis of software properties such as the probability of successful termination without intermediate errors or timeouts.
    \item Support for the PMC of reachability reward formulae---a significant improvement because using PMC to analyse nonfunctional properties related to the performance, cost, utility and resource usage of software systems requires the specification of these properties as reward formulae over pDTMCs.
    \item Formal correctness proofs for the \approach\ fragmentation algorithm and pDTMC restructuring methods.
    \item A formal complexity analysis of the end-to-end \approach\ fragmentation technique.
\end{enumerate}
Additionally, we considerably extended and improved the validation of \approach\ by evaluating it for a much broader range of models and properties (Section~\ref{sec:evaluation}). Finally, we augmented the \approach\ tool support with the heuristic for determining if the pDTMC under analysis requires fragmentation (Section~\ref{sec:Efficiency}). 

The remainder of the paper is structured as follows. Section~\ref{sec:preliminary} provides formal definitions and explanations of the techniques used in this work. Section~\ref{sec:example} describes a software system we use to motivate the need for \approach\ and to illustrate its application. The \approach\ algorithms and their proofs are presented in Section~\ref{theory}, followed by the implementation details in Sections~\ref{sec:implementation}. We then evaluate \approach\ in Section~\ref{sec:evaluation}, and discuss threats to validity in Section~\ref{sec:validity}. Finally, Section~\ref{sec:relatedwork} compares \approach\ to related work, and Section~\ref{sec:conclusion} provides a brief summary and discusses directions for future work.

\section{Preliminaries }
\label{sec:preliminary}
Parametric model checking~\cite{Daws:2004:SPM:2102873.2102899,Jansen2014,hahn2011probabilistic,RaduePMC} is a mathematically based technique for the analysis of pDTMC properties expressed in \emph{probabilistic computation tree logic} (PCTL)~\cite{hansson1994logic,bianco_alfaro_1995,Ciesinski04G} extended with \emph{rewards}~\cite{andova2003discrete}. This section provides formal definitions for each of these concepts. 

\subsection{Discrete-time Markov chains}
Discrete-time Markov chains (DTMCs) are finite state-transition models used to analyse the stochastic behaviour of real-world systems. They comprise states that correspond to relevant configurations of the system under analysis, and transitions that model the changes that can occur between these configurations.

\begin{definition}
A (non-parametric) discrete-time Markov chain is a tuple 
\begin{equation}
    \label{eq:dtmc}
    D=(S,s_0,\textbf{P}, L), 
\end{equation}
\mbox{where: (i)~$S$ is a finite set of states; (ii)~$s_0 \in S$ is the initial state;} 
(iii)~$\textbf{P}: S \times S \rightarrow [0,1]$ is a transition probability matrix such that, for any states $s, s' \in S$, $\textbf{P}(s,s')$ represents the probability that the Markov chain transitions from $s$ to $s'$, and, for any $s \in S$,  $\sum_{s' \in S} \textbf{P}(s,s')=1$; and (iv)~$L: S \rightarrow 2^\mathit{AP}$ is a labelling function that maps every state $s \in S$ to elements of a set of atomic propositions $\mathit{AP}$ that hold in that state.
\end{definition}

Given a discrete-time Markov chain~\eqref{eq:dtmc}, a state $s \in S$ is an \emph{absorbing state} if $\textbf{P}(s,s)=1$ and $\textbf{P}(s,s')=0$ for all $s \ne s'$, and a \emph{transient state} otherwise. A \emph{path} $\pi$ over a DTMC $D$ is a (possibly infinite) sequence of states from $S$, such that for any consecutive states $s$ and $s'$ from $\pi$, $\mathbf{P}(s,s')>0$. The $i$-th state on the path $\pi$, $i\geq 1$, is denoted $\pi(i)$. For any state $s$, $\mathit{Paths}^D(s)$ represents the set of all infinite paths over $D$ that start with state $s$. 

To enlarge the spectrum of nonfunctional properties that can be analysed using DTMCs, these models are often augmented with \emph{reward functions}.

\begin{definition}
A reward function over a DTMC~\eqref{eq:dtmc} is a function
\begin{equation}
    \label{eq:reward}
    \mathit{rwd}:S\rightarrow \mathbb{R}_{\geq 0}
\end{equation}
that associates a non-negative quantity (i.e., a \emph{reward}) with each Markov chain state.
\end{definition}

Finally, parametric DTMCs are used when a reward-augmented Markov chain contains probabilities or rewards that are unknown or that correspond to adjustable parameters of the system under analysis.

\begin{definition}
A parametric discrete-time Markov chain (pDTMC) is a Markov chain~\eqref{eq:dtmc} augmented with a set of reward functions~\eqref{eq:reward} that comprises transition probabilities and/or rewards specified as rational functions over a set of continuous variables.
\end{definition}

The continuous variables from the previous definition correspond to parameters of the modelled system and its environment.

\subsection{Probabilistic computation tree logic}

The properties of (non-parametric and parametric) dis\-crete-event Markov chains are formally specified in reward-extended PCTL.

\begin{definition}
A PCTL \emph{state formula} $\Phi$, \emph{path formula} $\Psi$, and \emph{reward state formula} $\Phi_\mathsf{R}$ over an atomic proposition set $\mathit{AP}$ are defined by the grammar:
\begin{equation}
\label{eq:pctl}
\!\!\!\!
\begin{array}{l}
    \Phi::=  true \,\vert\, a \,\vert\, \neg \Phi \,\vert\, \Phi \wedge \Phi \,\vert\, \mathcal{P}_{\!=?} [\Psi] \\[0.5mm]
    \Psi::= \mathrm{X} \Phi \;\vert\; \Phi\; \mathrm{U}\; \Phi \;\vert\; \Phi\; \mathrm{U}^{\leq k}\, \Phi\\[0.5mm]
    \Phi_\mathsf{R}^::= 
    \mathcal{R}_{\!=?}^\mathit{rwd}[\mathrm{I}^{=k}] \;\vert\; \mathcal{R}_{\!=?}^\mathit{rwd}[\mathrm{C}^{\leq k}] \;\vert\;
    \mathcal{R}_{\!=?}^\mathit{rwd}[\mathrm{F}\; \Phi] \;\vert\; 
    \mathcal{R}_{\!=?}^\mathit{rwd}[\mathrm{S}]
\end{array}
\end{equation}
where $a \in AP$ is an atomic proposition, $k \in \mathbb{N}_{>0}$ is a timestep bound, and $\mathit{rwd}$ is a reward structure~\eqref{eq:reward}.
\end{definition}

The PCTL semantics is defined using a satisfaction relation $\models$ over the states $s\in S$ and paths $\pi\in \mathit{Paths}^D(s)$ of a Markov chain~\eqref{eq:dtmc}. Thus, $s\models \Phi$ means ``$\Phi$ holds in state $s$'', $\pi\models \Psi$ means ``$\Psi$ holds for path $\pi$'', and we have: $s\models true$ for all states $s\in S$; $s \models a$ iff $a\in L(s)$; $s \models \neg \Phi$ iff $\neg (s\models \Phi)$; and $s\models \Phi_1 \wedge \Phi_2$ iff $s\models \Phi_1$ and $s\models \Phi_2$. 

The \emph{next formula} $X \Phi$ holds for a path $\pi$ if $\pi(2)\models \Phi$. The \emph{time-bounded until formula} $\Phi_1\, \mathrm{U}^{\leq k}\, \Phi_2$ holds for a path $\pi$ iff $\pi(i)\models \Phi_2$ for some $i\leq k$ and $\pi(j)\models \Phi_1$ for all $j=1,2,\ldots,i-1$; and the \emph{unbounded until formula} $\Phi_1\,\mathrm{U}\, \Phi_2$ removes the bound $k$ from the time-bounded until formula. 

The quantitative state formula $\mathcal{P}_{\!=?} [\Psi]$ specifies the probability that paths from $\mathit{Paths}^D\!(s)$ satisfy the path property $\Psi$. \emph{Reachability properties}  $\mathcal{P}_{\!=?} [\mathsf{true}\, \mathrm{U}\, \Phi]$ are equivalently written as $\mathcal{P}_{\!=?} [\mathrm{F}\, \Phi]$ or $\mathcal{P}_{\!=?} [\mathrm{F}\, R]$, where $R\!\subseteq\! S$ is the set of  states in which $\Phi$ holds.

Finally, the reward formulae specify the expected values for: the \emph{instantaneous reward} at timestep $k$ ($\mathcal{R}_{\!=?}^\mathit{rwd}[\mathrm{I}^{=k}]$); the \emph{cumulative reward} up to timestep $k$ ($\mathcal{R}_{\!=?}^\mathit{rwd}[\mathrm{C}^{\leq k}]$); the \emph{reachability reward} cumulated until reaching a state that satisfies a property $\Phi$ ($\mathcal{R}_{\!=?}^\mathit{rwd}[\mathrm{F}\; \Phi]$, or $\mathcal{R}_{\!=?}^\mathit{rwd}[\mathrm{F}\; R]$  if $R\!\subseteq\! S$ is the set of  states in which $\Phi$ holds); and the \emph{steady-state reward} in the long run ($\mathcal{R}_{\!=?}^\mathit{rwd}[\mathrm{S}]$). A complete description of the PCTL semantics is available in~\cite{bianco_alfaro_1995,hansson1994logic,andova2003discrete}.

\subsection{Parametric Model Checking through Fragmentation}
\label{subsec:fragmentationTheory}

\begin{figure}
	\centering
    \includegraphics[width=0.95\linewidth]{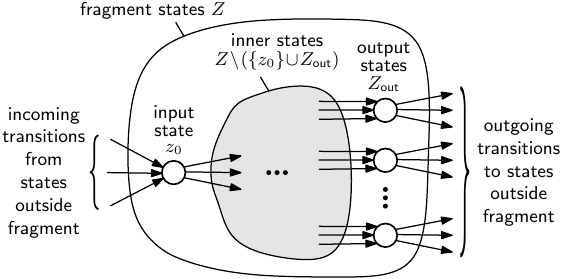}	
    \caption{pDTMC fragment $F=(Z,z_0,Z_\mathsf{out})$}
    \label{fig:fragment-diagram}
\end{figure}

Our \approach\ technique builds on recently introduced theoretical results on the use of pDTMC \emph{fragmentation} to speed up parametric model checking~\cite{RaduePMC}. These results, which explain how pDTCM fragments can be exploited---but not how they could be obtained---are summarised below. We start by introducing the concept of a pDTMC fragment.

\begin{definition}
\label{def:fragment}
A fragment of a pDTMC $D=(S,s_0,\textbf{P}, L)$ is a tuple 
\begin{equation}
    \label{eq:fragment}
  F=(Z,z_0,Z_\mathsf{out}),
\end{equation}
where (Figure~\ref{fig:fragment-diagram}):
\begin{itemize}
\item $Z\subset S$ is a subset of transient pDTMC states; 
\item $z_0$ is the (only) \emph{input state} of $F$, i.e., $\{z_0\}\!=\!\{z\!\in\! Z\mid \exists s\!\in\! S\!\setminus\! Z\:.\: \mathbf{P}(s,z)>0\}$; 
\item $Z_\mathsf{out} =\{z\in Z 
\mid (\exists s\in S\setminus Z\:.\: \mathbf{P}(z,s)>0) \wedge (\forall z'\in Z\setminus\{z_0\}\:.\: \mathbf{P}(z,z')=0)\}$ is the non-empty set of \emph{output states} of $F$.
\end{itemize}
\end{definition} 

This definition is much less restrictive than that of a strongly connected component. In particular, any pDTMC state $z$ forms a one-state, \emph{degenerate fragment} $F=(\{z\},$ $z,\{z\})$. Furthermore, the ``inner'' states $Z\setminus (\{z_0\}\cup Z_\mathsf{out})$ of a fragment can include one of several SCCs. Finally, an SCC can be split into multiple fragments, because paths that start from the output states of a fragment and reach its input state (either directly or through intermediate states outside the fragment) are permitted.  

Given a fragment $F=(Z,z_0,Z_\mathsf{out})$ of a pDTMC $D$ augmented with a reward function $\mathit{rwd}$, the PMC of reachability, unbounded until, and reachability reward properties of $D$ can be carried out compositionally by using the following four-step process introduced in~\cite{RaduePMC} and illustrated in Figure~\ref{fig:approach}:
\squishlist
    \item[1)] Use standard PMC to obtain algebraic formulae for:
    \squishlist
        \item[i)] the probabilities $\mathit{prob}_z$ of reaching each of the output states $z\in Z_\mathsf{out}$ of $F$ from the input fragment state $z_0$;
        \item[ii)] the cumulative reward $\mathit{rwd}_\mathsf{out}$ associated with reaching the output state set $Z_\mathsf{out}$ from $z_0$.
    \squishend
    \item[2)] Assemble an \emph{abstract pDTMC model} $D'=(S',s'_0,P')$ augmented with a reward function $\mathit{rwd}'$, where: 
    \squishlist
        \item[i)] $S'=(S\setminus Z)\cup \{z'\}$, i.e., the states from $Z$ are replaced with a single, abstract state $z'$;
        \item[ii)] $s'_0=s_0$ if $z_0\neq s_0$, and $s'_0=z'$ otherwise;
        \item[iii)] the incoming transitions and transition probabilities of $z'$ are inherited from $z_0$; 
        \item[iv)] $z'$ has outgoing transitions to each state that one or more states from $Z_\mathsf{out}$ have outgoing transitions to in $D$; and the probabilities of these transitions can be expressed in terms of the reachability properties computed in step~1---details about the calculation of these probabilities are provided in~\cite{RaduePMC};
        \item[v)] the new reward function is given by $\mathit{rwd}'(s)=\mathit{rwd}(s)$ for all $s\in S'\setminus\{z'\}$, and  $\mathit{rwd}'(z')=\mathit{rwd}_\mathsf{out}$.
    \squishend
    \item[3)] Compute the PMC formula for the original property under analysis, for the abstract model from step~2.
    \item[4)] Combine the PMC formulae from step~1 and the PMC formula from step~3 into a system of expressions.
\squishend

\begin{figure}
	\centering
    \includegraphics[width=\hsize]{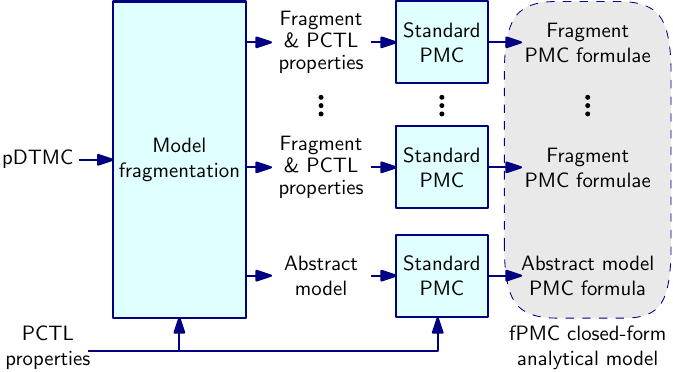}	
    \caption{Parametric model checking through fragmentation}
    \label{fig:approach}
\end{figure}

The system of expressions from step~4 provides a closed-form analytical model for the analysed property. This analytical model is equivalent to the PMC formula obtained by analysing the original pDTMC in one step, and we will refer to it as \approach\ or \approach-computed algebraic formulae in the rest of the paper. 

The PMC carried out in steps~1 and~3 uses models that are simpler and smaller than the original model $D$. As such, this four-step approach is often faster, produces much smaller algebraic formulae, and enables the analysis of models that are larger and more complex than those supported by previous PMC methods.

\section{Motivating Example}
\label{sec:example}
In this section, we introduce a software system that will be used to illustrate the use of our \approach\ approach throughout the paper. Taken from~\cite{Gerasimou2015:ASE,gerasimou2018synthesis}, this a six-operation service-based system performing trading in the foreign exchange (FX) market. The workflow of the FX system is shown in Fig.~\ref{fig:fxworkflow} and described briefly below.

\medskip\noindent
\textbf{FX workflow.}
The FX system has two execution modes that a trader can choose from: an \textit{expert} mode and a \emph{normal} mode. 

In the \textit{expert} mode, the \emph{Market watch} operation extracts real-time exchange rates (i.e., bid/ask prices) of selected currency pairs, and this information is then passed to the \emph{Technical analysis} operation for further analysis, such as evaluating the current trading conditions, predicting future price movement, and deciding which actions to take. Three actions can be taken: carrying out a trade by calling the \emph{Order} operation; performing the \emph{Market watch} operation again, e.g., on different or additional currency pairs; and reporting an error that triggers an \emph{Alarm} operation. The \emph{Order} and \emph{Alarm} operations are each followed by a user \emph{Notification} operation, and the end of the workflow. 

In the \emph{normal} mode, the system uses a \emph{Fundamental analysis} operation to evaluate the economic outlook of a country and decides whether to: call the \emph{Order} operation to trade the currency of that country; re-do the \emph{Fundamental analysis} operation; or end the execution of the workflow. 

\begin{figure}
	\centering
    \includegraphics[width=\linewidth]{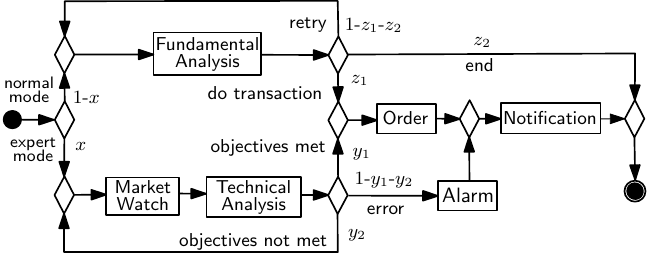}	
    \caption{FX service-based system, where $x$, $y1$, $y2$, $z1$, $z2$ are the (unknown) probabilities of different execution paths, i.e., the \emph{operational profile} of the system}
    \label{fig:fxworkflow}
    \vspace*{-2mm}
\end{figure}

Given its business-critical nature, the underlying software architecture of the FX system needs to be highly reliable. To avoid single points of failure, each FX operation is implemented by two functionally-equivalent services, and each service is invoked in order using a \emph{sequential execution strategy with retry} (SEQ\_R). For the $i$-th operation, if the first service fails, it is re-invoked with probability $r_{i1}$ or, with probability $1-r_{i1}$, the second service is invoked. If the second service also fails, it is re-invoked with probability $r_{i2}$, or the operation is abandoned with probability $1-r_{i2}$, leading to the failure of the entire workflow execution.  

\begin{figure}
	\centering
	\includegraphics[width=1.11\linewidth]{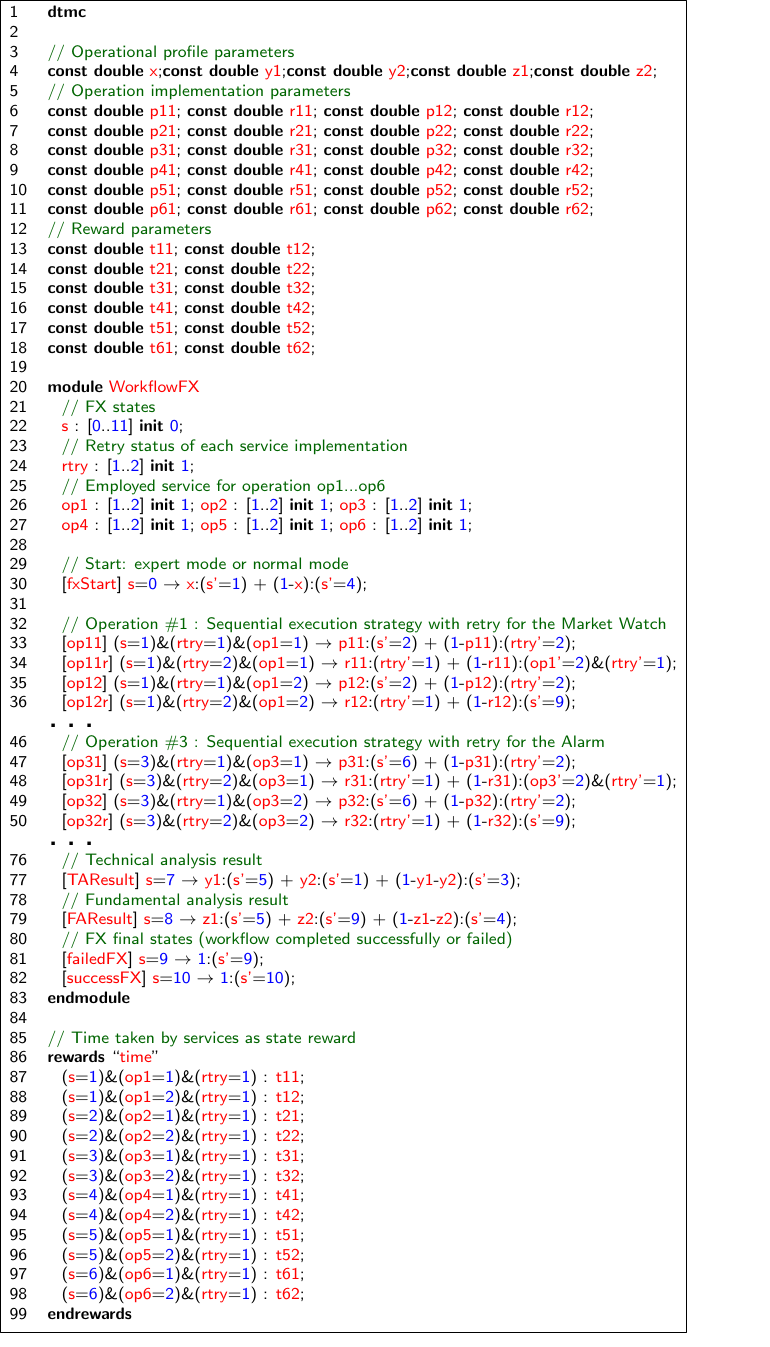}
 		\caption{pDTMC model of the FX system}
    \label{fig:fxmodel}
\end{figure}

\medskip\noindent
\textbf{FX pDTMC.}
Fig.~\ref{fig:fxmodel} shows the pDTMC model of the FX workflow, specified in the modelling language of PRISM probabilistic model checker~\cite{prism}. Lines 3--18 define the model parameters associated with \textbf{(a)}~the FX operational profile (line 4), with \textbf{(b)}~the implementations of the FX operations (lines 6--11), and with \textbf{(c)}~the mean execution times of each service used by these implementations (lines 13--18). The parameters $p_{ij}$, $r_{ij}$ and $t_{ij}$ represent the probability of successful execution, the probability of retrying and the mean execution time, respectively, for the $i$-th operation using the $j$-th service implementation, where $i\in\{1,2,\ldots,6\}$ and $j\in\{1,2\}$. The use of parameters to model the system aspects from lines 3--18 is justified because: 
\begin{enumerate}
    \item the operational profile of a system is often unknown when its model is developed;
    \item modelling the system configuration by means of a set of parameters allows the exploration and better selection of suitable system configurations;
    \item the execution times of individual services are not available until the system is deployed and executed (and may change over time).
\end{enumerate}   

\begin{table*}
\sffamily
    \centering
     \caption{Non-functional properties for the FX system from the motivating example}
     \label{Propsummary}
    \begin{tabular}{p{2mm} p{2.5cm} p{9.7cm} p{3.8cm}}
    \toprule
    \textbf{ID} & \textbf{Property type} & \textbf{Informal description} & \textbf{Property specified in PCTL} \\ \midrule
    P1 & Reachability&  probability that FX workflow completes successfully  & $\mathcal{P}_{=?}[\mathrm{F}\; \mathsf{successFX}]$ \\
    \midrule
    P2 & Reachability reward &  expected workflow execution time  & $\mathcal{R}^\mathit{time}_{=?}[\mathrm{F}\; \mathsf{failedFX} \vee \mathsf{successFX}]$ \\ 
    \midrule
    P3 & Unbounded until & probability that FX workflow completes successfully without triggering an alarm & $\mathcal{P}_{=?}[\;!\, \mathsf{Alarm} \; \mathrm{U} \; \mathsf{successFX}]$ \\ 
    \bottomrule
    \end{tabular}
\end{table*}

Inside the \emph{WorkflowFX} module, we use the local variable \emph{state} (line 22) to model the operations in the FX system, and use \emph{retry} (line 24) and \emph{$op_i$} (line 26--27) to encode the retry status of a service implementation and the employed service implementation for each operation, respectively. The selection of the \textit{expert} or \textit{normal} mode is decided in line~30, and the execution of the FX operations is modelled in lines~33--75. Due to space constraints, we only show the modelling of the \emph{Market Watch} operation (line~33--36) and the \emph{Alarm} operation (lines~46--50); all the other FX operations are modelled similarly. For both operations, the invocation of the first service succeeds with probability $p_{i1}$ and FX moves to the next operation, fails with probability $1-p_{i1}$ and is retried with probability $r_{i1}$ (lines~33 and~34, and lines~47 and 48, respectively); otherwise, the second service is executed and succeeds, or is re-invoked with probabilities $p_{i2}$ and $r_{i2}$ in lines~35 and~36, and lines~49 and~50, respectively. If both service implementations fail, the FX workflow execution terminates with a failure in line~81.

\begin{figure}
	\centering
	\includegraphics[width=\linewidth]{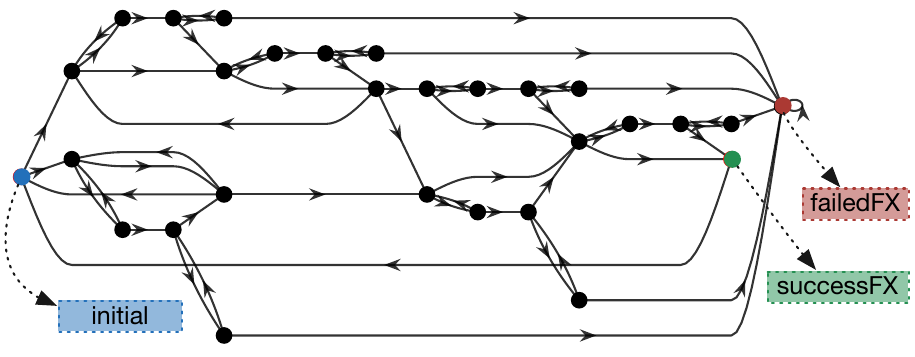}
		\caption{Graph representation of the FX pDTMC; the model comprises 29~states and 58~transitions}
    \label{fig:fxExample}
\end{figure}

Fig.~\ref{fig:fxExample} shows the directed graph induced by the FX pDTMC, with the initial and final states of the FX workflow highlighted in colour. For this pDTMC model, we assume that we are interested in analysing the three non-functional properties from Table~\ref{Propsummary}. We note that despite the relatively small number of states and transitions from the FX pDTMC model, the leading parametric model checkers PRISM~\cite{prism} and Storm~\cite{storm} could not return a closed-form PMC formula for any of these properties within an hour when run on the MacBook Pro computer we used in all our experiments (please see Section~\ref{Sec:ExperimentalSetup} for a detailed specification of this computer). In the next section, we explain how \approach\ can successfully analyse those three properties via automated model fragmentation.

\section{\approach\ Theoretical Foundation}
\label{theory}

We present the model fragmentation algorithm that underpins the \approach\ analysis of reachability, unbounded until and reachability reward pDTMC properties in Section~\ref{sse:MCFA}. A pDTMC model restructuring algorithm that aids the formation of fragments is discussed in Section~\ref{sse:MRAFF}. Section~\ref{sse:Proofs} provides formal proofs for the correctness the \approach\ algorithms. Finally, Section~\ref{ssec:fpmcapplication} illustrates the application of \approach\ to the pDTMC model and PCTL properties from our motivating example.

\subsection{Markov Chain Fragmentation Algorithm}
\label{sse:MCFA}
\approach\ partitions a pDTMC into fragments that can be analysed individually by current parametric model checkers. This partition is carried out by the function \textsc{Fragmentation} from Algorithm~\ref{algorithm:fragmentation}, supported by the auxiliary functions from Algorithms~\ref{algorithm:fragmentgrow}, \ref{algorithm:stopgrow} and~\ref{algorithm:Restructuring}. The function \textsc{Fragmentation} takes four arguments: 
\begin{enumerate}
    \item the analysed pDTMC $D(S,s_0,\textbf{P}, L)$;
    \item the analysed PCTL formula $\phi$, which can be a reachability property $\mathcal{P}_{=?} [\mathrm{F}\, \Phi]$, an unbounded until property $\mathcal{P}_{=?} [\Phi_1\, \mathrm{U}\, \Phi_2]$, or a reachability reward property $\mathcal{R}_{=?}^\mathit{rwd}[\mathrm{F}\, \Phi]$, where the inner state formulae $\Phi$, $\Phi_1$ and $\Phi_2$ cannot contain the probabilistic operator $\mathcal{P}$;\footnote{Like other parametric model checking methods~\cite{Daws:2004:SPM:2102873.2102899,hahn2011probabilistic,Jansen2014}, \approach\ only supports non-nested probabilistic properties.} 
    \item a reward function $\mathit{rwd}$ over $D$, which is only relevant if $\phi$ is a reachability reward property (we assume that $\mathit{rwd}(s)=0$ for all $s\in S$ otherwise);
    \item a \emph{fragmentation threshold} $\alpha \in \mathbb{N}_{>0}$, whose role is described later in this section.
\end{enumerate}

Given these arguments, the function returns (line~\ref{l:returnFS}):
\begin{enumerate}
    \item a restructured version of the original pDTMC, where the restructuring (described later in this section) aids the formation of fragments;
    \item a revised version of the reward function $\mathit{rwd}$ that matches the restructured pDTMC; 
    \item a set of fragments $\mathit{FS}$ that satisfy Definition~\ref{def:fragment}, with each state of the restructured pDTMC either included in a \emph{regular, multi-state fragment} or organised into a \textit{one-state (degenerate) fragment}. 
\end{enumerate}

\begin{algorithm}[t]
	\caption{pDTMC model fragmentation}\label{algorithm:fragmentation}
 		\renewcommand{\baselinestretch}{1}
		\begin{algorithmic}[1]
			\Function{Fragmentation}{$D(S,s_0,\textbf{P}, L), \phi, \mathit{rwd}, \alpha$}
			\State $V \!\gets\! \left\{\!\!\!\! \begin{array}{ll}
			\{s\!\in\! S \mid s\!\models\! \Phi_1 \!\vee\! s\!\models\! \Phi_2\}, & \!\!\!\!\textrm{if } \phi\!=\!\mathcal{P}_{=?} [\Phi_1\, \mathrm{U}\, \Phi_2]\\
			\{s\!\in\! S \mid s\!\models\! \Phi \}, & \!\!\!\!\textrm{if } \phi\!=\!\mathcal{P}_{=?} [\mathrm{F}\, \Phi] \!\vee\! \phi\!=\!\mathcal{R}_{=?}^\mathit{rwd}[\mathrm{F}\, \Phi]\\
			\end{array}
			 \right.$ \label{l:initVisited}
            \State $\mathit{FS} \gets \left\{\left(\{s\}, s,\{s\}\right) \;|\; s\in V\right\}$\label{l:satisfy}
			\ForAll{$z_0 \in S\setminus V$ } \label{l:forS}
            \State $Z \gets \{ z_0 \}$ \label{l:startFragment} 
            \State $Z_\mathsf{OUT} \gets \{\}$ \label{l:startFragmentCont} 
			\State $T \gets \textsc{EmptyStack()}$\label{l:Tinital} 
	        \State $\textsc{Traverse}(D(S,s_0,\textbf{P}, L),\mathit{rwd},\mathit{FS},V,T,Z,z_0,\mathsf{true})$ \label{l:grow1}
    			\While{$\neg \textsc{Empty}(T)$}\label{l:whileS}
    			    \State $z \gets T.\textsc{Pop}()$ 
    		        \If {$\{i\!\in\! S \:|\: \mathbf{P}(i,z)\!\neq\! 0\}\!\!\subseteq\!\! Z \wedge \{o\!\in\! S \:|\: \mathbf{P}(z,\!o)\!\neq\! 0\} \!\!\subseteq\!\! S\!\setminus\! Z$}  \label{l:ifOutput1} 
    		           \State $Z_\mathsf{OUT} \gets Z_\mathsf{OUT} \cup \{z\}$ \label{l:ifOutput2}
        		        \Else \label{checkElse}
        			    \If{$\#Z < \alpha$} \label{l:check-alphaS}
    			           \State $\!\!\!\!\textsc{Traverse}(D(S\!,\!s_0\!,\!\textbf{P}\!,\! L),\mathit{rwd},\mathit{FS},V,T,Z,z,\mathsf{false})$\label{l:grow2}
    			        \Else\label{l:Formation1}
    			           \State $\!\!\!\!\textsc{Terminate}(D(S\!,\!s_0\!,\!\textbf{P}\!,\!L),\! \phi,\mathit{rwd}, FS, V, T, Z, Z_\mathsf{OUT}, z)$\label{Alg1Terminate}
    			            \EndIf
    		            \EndIf \label{l:check-alphaE}\label{l:FormationE}
                    \State $Z \gets Z \cup \{ z \}$ \label{l:extendFragment}
    			\EndWhile\label{l:whileE} 
            \If{$\neg\textsc{ValidFragment}((Z,z_0,Z_\mathsf{OUT}))$}\label{l:validfragmentS}
                \State $Z \gets \{z_0\}$, $Z_\mathsf{OUT} \gets \{z_0\}$\label{l:degenerate2}
            \EndIf 
            \State $\mathit{FS} \gets \mathit{FS} \cup \{(Z,z_0,Z_\mathsf{OUT})\}$\label{l:addFragment}
            \State $V \gets V \cup Z$ \label{l:addToR1}
			\EndFor \label{l:forE}
			\State \Return {$D(S,s_0,\textbf{P}, L)$, $\mathit{rwd}$, $\mathit{FS}$} \label{l:returnFS} %\Comment{set of fragments}
		    \EndFunction
		\end{algorithmic}
\end{algorithm}

The function starts by placing the pDTMC states that satisfy $\Phi_1$ and $\Phi_2$ (if the analysed property is an unbounded until formula $\mathcal{P}_{=?} [\Phi_1\, \mathrm{U}\, \Phi_2]$) or $\Phi$ (if the analysed property is a reachability formula $\mathcal{P}_{=?} [\mathrm{F}\, \Phi]$ or a reachability reward formula $\mathcal{R}_{=?}^\mathit{rwd}[\mathrm{F}\, \Phi]$) into a set of ``visited'' states $V$ (line~\ref{l:initVisited}). Each state from $V$ is then used to assemble a one-state fragment that is placed into the fragment set $\mathit{FS}$ (line~\ref{l:satisfy}). The states from $V$ are preserved as one-state fragments so that they can appear in the \approach\ abstract model (see the description from Section~\ref{subsec:fragmentationTheory}). This allows the direct analysis of the PCTL property $\phi$, which refers to these states from the abstract model. 

Next, additional fragments $(Z,z_0,Z_\mathsf{OUT})$ are generated in each iteration of the for loop from lines~\ref{l:forS}--\ref{l:forE} as follows. First, a node $z_0$ not yet included in any fragment is selected (line~\ref{l:forS}) and inserted into the fragment state set $Z$ (line~\ref{l:startFragment}), while the fragment output set $Z_\mathsf{out}$ is initialised to the empty set (line~\ref{l:startFragmentCont}). An empty stack, $T$, is created in line~\ref{l:Tinital}, and then populated with the states reached by the outgoing transitions from $z_0$ through invoking (in line~\ref{l:grow1}) the function \textsc{Traverse} from Algorithm~\ref{algorithm:fragmentgrow}. Each state $z$ from this stack, $T$, is processed by the while loop from lines~\ref{l:whileS}--\ref{l:whileE}, ending up in $Z_\mathsf{OUT}$ if it satisfies the constraints associated with output fragment states (lines~\ref{l:ifOutput1} and~\ref{l:ifOutput2}). When $z$ does not satisfy these constraints, two options are possible (lines~\ref{checkElse}--\ref{l:check-alphaE}):
    \squishlist
        \item If $Z$ has accumulated fewer states than the threshold $\alpha$, the graph traversal function \textsc{Traverse} is invoked again to add to the stack the predecessor and successor vertices of $z$ that are not already in the fragment (line~\ref{l:grow2}). 
        \item Otherwise, the function for the early fragment formation \textsc{Terminate} from Algorithm~\ref{algorithm:stopgrow} is invoked to ``force'' $z$ into becoming an output state (by restructuring the pDTMC) whenever that is possible (line~\ref{Alg1Terminate}).
    \squishend
In this way, the threshold $\alpha$ provides a soft upper bound for the fragment size. When this bound is reached, the model restructuring techniques detailed in Section~\ref{sse:MRAFF} are used to force the formation of a valid fragment if possible. Irrespective of the way in which $z$ is processed in lines~\ref{l:ifOutput1}--\ref{l:FormationE}, it becomes part of the fragment being constructed, and therefore it is added to the fragment state set $Z$ in line~\ref{l:extendFragment}.

The fragment candidate ${(Z,z_0,Z_{OUT})}$ assembled by the while loop from lines~\ref{l:whileS}--\ref{l:whileE} is validated in line~\ref{l:validfragmentS}. If the candidate does not satisfy the constraints from Definition~\ref{def:fragment}, the fragment is ``downgraded'' by using its input state, $z_0$, to form a degenerate, one-state fragment (line~\ref{l:degenerate2}). After validation (and, if necessary, degradation or restructuring), the new fragment is added to the fragment set $\mathit{FS}$ (line~\ref{l:addFragment}), and its states are added to the set of ``visited'' states $V$ already assigned to fragments (line~\ref{l:addToR1}), ensuring that they are not re-used by the loop starting in line~\ref{l:forS}.

\begin{algorithm}[t]
\caption{Traversal of pDTMC-induced graph \label{algorithm:fragmentgrow}}
		\begin{algorithmic}[1]
			\Function{Traverse}{$D(S,s_0,\textbf{P}, L),\mathit{rwd}, \mathit{FS}, V, T, Z, z, \mathit{inputState}$}
			\If{$\neg \mathit{inputState}$} \Comment{$z$ is not the fragment's input state} \label{l:ifInputS}
    			\State $I \gets \{i\in S\!\setminus\! Z \;|\; \mathbf{P}(i,z) \neq 0\}$ \label{l:inputs}
    			\If{$I \cap V = \{\}$} \label{l:wAddedIf} 
    			    \State $T.\textsc{Push}(I)$\label{l:wAdded}
    		    \Else \label{l:wFragment2S}
    		        \State $\mathit{FS} \gets \mathit{FS} \cup \{(\{z\},z,\{z\})\}$ \label{l:degenerate3}
    		        \State $V\gets V \cup \{z\}$ \label{l:addToR2}
    		        \State \Return \label{l:wFragment2E}
    			\EndIf
    		\EndIf  \label{l:ifInputE}

    			\State $O \gets \{o\in S\!\setminus\! Z  \;|\; \mathbf{P}(z,o) \neq 0\}$ \label{l:OutputS}
    			
    		\If{$O \not\subseteq V$} \label{l:notSubsetOfR}
		    \State $T.\textsc{Push}(O\!\setminus\! V)$ \label{l:pushO}
    				%\ForAll{$o \in O$} \label{l:OutputMid}
		   \If{$V\!\setminus\! O\neq \{\}$} \label{l:outgoingToOtherFragment}
                    \State $O' \gets \textsc{RestructureState}(D(S,s_0,\textbf{P}, L),\mathit{rwd},Z,z)$\label{l:restructure2}\label{l:formationAdd3}
                            \State $T.\textsc{Push}(O')$ \label{l:formationAdd3Push}
        			    \EndIf
        	   \EndIf \label{l:OutputE}
 	    \EndFunction
		\end{algorithmic}
\end{algorithm}

The growing of a fragment in Algorithm~\ref{algorithm:fragmentation} is carried out by the function $\textsc{Traverse}$ from Algorithm~\ref{algorithm:fragmentgrow}. Given a state $z$, $\textsc{Traverse}$ examines:
\squishlist
    \item its incoming transitions in lines~\ref{l:ifInputS}--\ref{l:ifInputE} (only if $z$ is not the input state $z_0$ of the fragment under construction, i.e., if $\mathit{inputState}$ is false in line~\ref{l:ifInputS});
    \item its outgoing transitions in lines~\ref{l:OutputS}--\ref{l:OutputE}. 
\squishend
The states that are directly connected to $z$ via the examined incoming and outgoing transitions, and that are not already in the set of fragment states $Z$, are collected into an input state set $I$ (line~\ref{l:inputs}) and an output state set $O$ (line~\ref{l:OutputS}), respectively.  These two sets of states are processed as follows.

Firstly, the states from $I$ are added to the stack $T$ (line~\ref{l:wAdded}) if none of them belongs to an existing fragment (line~\ref{l:wAddedIf}). Otherwise, $z$ is organised into a one-state fragment and the traversal of the pDTMC-induced graph is terminated (lines~\ref{l:degenerate3}--\ref{l:wFragment2E}) because, with an incoming transition from a state belonging to an already assembled fragment, $z$ cannot be an inner or output state of the fragment under construction. Note that creating this one-state fragment halfway through assembling another fragment may impact the construction of the other fragment. If this is the case, then the issue will be detected and dealt with by the validation process from Algorithm~\ref{algorithm:fragmentation} (line~\ref{l:validfragmentS}). 

Secondly, if the output set $O$ has at least one state not belonging to other fragments (line~\ref{l:notSubsetOfR}), growing the fragment under construction with the ``successors'' of $z$ may be feasible. As such, the function: 
\squishlist
\item places the states from $O$ that do not belong to other fragments onto the stack $T$ (line~\ref{l:pushO}); 
\item if $O$ contains states belonging to previously constructed fragments (line~\ref{l:outgoingToOtherFragment}), it attempts to continue to grow the fragment under construction by using the function \textsc{RestructureState} from Algorithm~\ref{algorithm:Restructuring} to extend the pDTMC with auxiliary states $O'$ that allow $z$ to become an inner fragment state (lines~\ref{l:restructure2} and~\ref{l:formationAdd3Push}).  
\squishend

\subsection{Early Termination of Fragment Construction and Model Restructuring to Aid Fragment Formation}
\label{sse:MRAFF}

As we will show in Section~\ref{sse:Proofs}, the function \textsc{Fragmentation} from Algorithm~\ref{algorithm:fragmentation} is guaranteed to partition a pDTMC into a set of valid fragments. However, the success of \approach\ also depends on these fragments being of an appropriate size. If a fragment is too large, existing PMC techniques (which \approach\ uses for the fragment analysis, see Figure~\ref{fig:approach}) will be unable to handle it. Conversely, partitioning a pDMTC into a very large number of small fragments may yield an abstract model whose analysis is unfeasible. 

\begin{algorithm}[t]
\caption{Early termination of fragment formation\label{algorithm:stopgrow}}
		\renewcommand{\baselinestretch}{1}
		\begin{algorithmic}[1]
			\Function{\textsc{Terminate}}{$D(S,s_0,\textbf{P}, L),\phi,\mathit{rwd},\mathit{FS},V,T,Z,Z_\mathsf{OUT},z$}
          		 \If{$(\exists\, i\!\in\!S\!\setminus\!Z . \textbf{P}(i,z)\!\neq\! 0) \wedge (\forall\, o\!\in\! Z . \textbf{P}(z,o)\!=\! 0)$ \label{ReTS} \\ \hspace*{2.6cm}$\wedge (\phi\neq\mathcal{R}_{=?}^\mathit{rwd}[\mathrm{F}\, \Phi] \vee \mathit{rwd}(z)\!=\!0)$}
        		 \State $\textsc{RestructureTrans}(D(S,s_0,\textbf{P}, L),Z,z)$\label{CallRT}
        		\State $Z_\mathsf{OUT} \gets Z_\mathsf{OUT} \cup \{z\}$\label{ReTE}
          		\ElsIf{$(\forall \: i\!\in\! S\!\setminus\! Z . \textbf{P}(i,z)\!=\! 0) \wedge (\exists\: o\!\in\! Z . \textbf{P}(z,o)\!\neq\! 0)$}\label{ReSS}
        		\State $O \gets \textsc{RestructureState}(D(S,s_0,\textbf{P}, L),\mathit{rwd},Z,z)$\label{l:formationAdd1}
        		\State  $Z \gets Z \cup O$, $Z_\mathsf{OUT} \gets Z_\mathsf{OUT} \cup O$\label{ReSE}
          	    \Else\label{Exp2}
          	    \State $\textsc{Traverse}(D(S,s_0,\textbf{P}, L),\mathit{rwd},\mathit{FS},V,T,Z,z,\mathsf{false})$\label{Exp2Solution}
         		\EndIf\label{l:restructuringA1E}
		    \EndFunction
		\end{algorithmic}
\end{algorithm}

Based on our experience (see~Section~\ref{sec:evaluation}), pDTMCs that model complex systems often comprise many loops (e.g., see the pDTMC from Figure~\ref{fig:fxExample}), whichs favour the formation of fragments that may be too large for existing PMC techniques to analyse. To address this issue, \textsc{Fragmentation} uses the threshold $\alpha$ to decide when to force the formation of a fragment, preventing it from growing too large (line~\ref{l:check-alphaS} from Algorithm~\ref{algorithm:fragmentation}). This early termination of the fragment formation is accomplished by the function \textsc{Terminate}. This function is supplied (in line~\ref{Alg1Terminate}) with complete information about the fragmentation process so far and, importantly, with a state $z$ that does not satisfy the condition from line~\ref{l:ifOutput1} and therefore cannot be an output state for the fragment under construction. The role of \textsc{Terminate} is to modify the pDTMC states and/or transitions such that: (i)~$z$ meets the condition for being an output fragment state in the restructured pDTMC; (ii)~the modifications do not affect the PMC result. This restructuring is possible in one of the following two scenarios, which are handled in lines~\ref{ReTS}--\ref{ReTE} and~\ref{ReSS}--\ref{ReSE} of \textsc{Terminate}, respectively. If neither scenario applies, \textsc{Terminate} cannot support the early termination of the fragment construction, and therefore needs to invoke the function \textsc{Traverse}, which will continue to grow the fragment (line~\ref{Exp2Solution}). As such, the threshold $\alpha$ only provides a soft upper bound for the size of an \approach\ fragment.

\begin{figure*}
     \centering
     \begin{subfigure}[b]{0.85\textwidth}
         \centering
         \includegraphics[width=\textwidth]{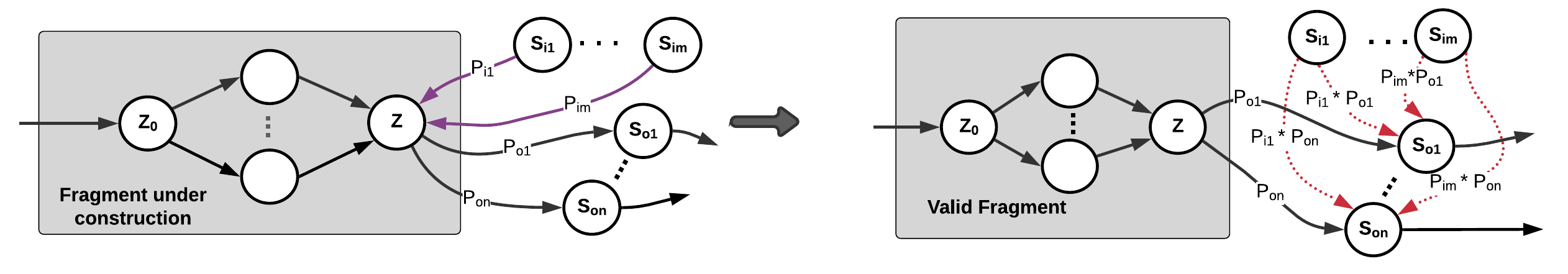}
         \caption{Transition replacement to force creation of output fragment state}
         \label{fig:trick2}
     \end{subfigure}
     \hfill
     \begin{subfigure}[b]{0.85\textwidth}
         \centering
         \includegraphics[width=\textwidth]{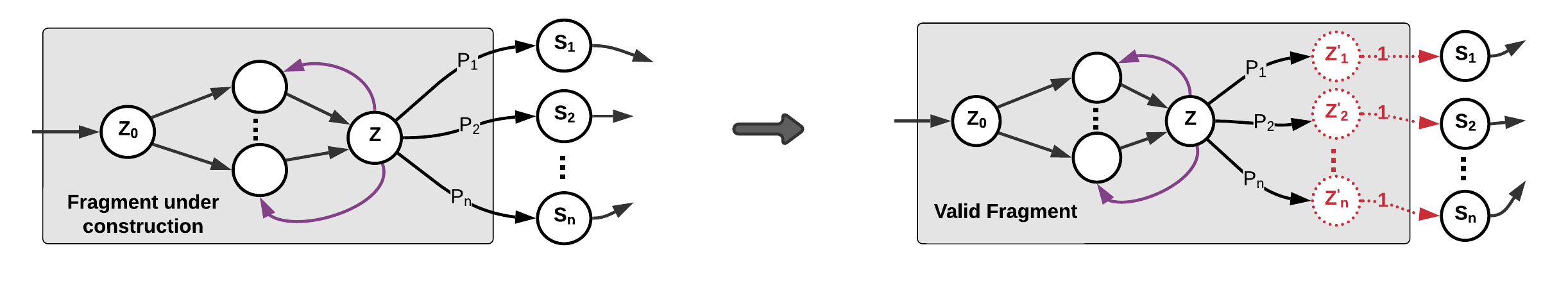}
         \caption{Auxiliary state insertion to force formation of fragment containing the new states among its output states}
         \label{fig:trick1}
     \end{subfigure}
        \caption{Model restructuring techniques supporting fragment formation}
        \label{fig:ModelReconstruction}
\end{figure*}

\medskip
\noindent
\textbf{Scenario 1.} The state $z$ has $m\!\geq\! 1$ incoming transitions (of probabilities $p_{i1}$, $p_{i2}$, \ldots, $p_{im}$) from states $s_{i1}$, $s_{i2}$, \ldots, $s_{im}$ outside the fragment, and $n\!\geq\! 1$ outgoing transitions (of probabilities $p_{o1}$, $p_{o2}$, \ldots, $p_{on}$) to states $s_{o1}$, $s_{o2}$, \ldots, $s_{on}$ outside the fragment (line~\ref{ReTS} from Algorithm~\ref{algorithm:stopgrow} and \mbox{Figure~\ref{fig:trick2}--left}); additionally (for reasons explained in Theorem~\ref{th:model-equivalence} in Section~\ref{sse:Proofs}) either the analysed property $\phi$ is not a reachability reward property, or $z$ is a zero-reward state. In this scenario, we replace the transition between each state $s_{ij}$, $1\leq j\leq m$, and $z$ with transitions of probabilities $p_{ij}p_{o1}$, $p_{ij}p_{o2}$, \ldots, $p_{ij}p_{on}$, between $s_{ij}$ and the states $s_{o1}$, $s_{o2}$, \ldots, $s_{on}$, respectively. This modification of the pDTMC structure is shown in Figure~\ref{fig:trick2}--right, and is carried out by function \textsc{RestructureTrans} from Algorithm~\ref{algorithm:Restructuring}. To perform the restructuring, the function first assembles the sets of states $I=\{s_{i1}, s_{i2}, \ldots, s_{im}\}$ (line~\ref{l:assembleSetI}) and $O=\{s_{o1}, s_{o2}, \ldots, s_{on}\}$  (line~\ref{l:assembleSetO}), and then iterates through state pairs $(i,o)\in I\times O$, removing the transition from state $i$ to state $z$ (line~\ref{l:ModTran3}) and inserting a transition from state $i$ to state $o$. As a result of this modification of the original pDTMC, state $z$ meets the condition for becoming an output state of the fragment under construction, and will be placed into the set of outputs states $Z_\mathsf{OUT}$ in line~\ref{ReTE} from Algorithm~\ref{algorithm:stopgrow}.

\begin{algorithm}[t]
\caption{pDTMC restructuring }\label{algorithm:Restructuring}
		\renewcommand{\baselinestretch}{1}
		\begin{algorithmic}[1]
				\Function{RestructureTrans}{$D(S,s_0,\textbf{P}, L),Z,z$}
		        \State $I \gets \{i\!\in\! S\!\setminus\! Z \:|\:\textbf{P}(i,\!z)\!\neq\! 0\}$\label{l:assembleSetI}
			\State $O \gets \{o\!\in\! S\:|\:\textbf{P}(z,\!o)\!\neq\! 0\}$\label{l:assembleSetO}
			    \ForAll{$i \in I$}\label{l:ModPairwise}
			        \State $\textbf{P}(i,z) \gets 0$\label{l:ModTran3}
			        \ForAll{$o \in O$}\label{l:ModOut}
			            \State $\textbf{P}(i,o) \gets \textbf{P}(i,z)\cdot \textbf{P}(z,o)$\label{l:ModTran3b}
			        \EndFor
			    \EndFor\label{l:ModPairwiseE}
        \EndFunction
        \vspace{0.2cm}
		\Function{RestructureState}{$D(S,s_0,\textbf{P}, L),\mathit{rwd},Z,z$}
			\State $O\gets \{o\!\in\!S\!\setminus\! Z\:|\:\textbf{P}(z,o)\!\neq\! 0\}$ \label{l:AssembleSetO}
			\State $\mathit{NewStates} \gets \{ \}$ \label{l:NS}
			%\ForAll{$w \in W$}\label{l:addFor}
			   %\State $O \gets \{o\:|\:(w_{i},\!o)\!\in\! E, o \!\notin\! Z\}$\label{l:addLoc}
			   \ForAll{$o \!\in\! O $}\label{l:addpair}
			        \State $z' \gets \textsc{NewState()}$ \label{l:addCreateV}
			        \State $S \gets S \cup \{z'\}$ \label{l:addCreateVE}
			        %\State $E \gets \left(E \cup \{(w,v),(v,o)\}\right) \setminus\{(w,o)\}$\label{l:addEdge}
			        \ForAll{$s\in S$}     \label{l:setTransProbabilitiesS}
			          \State $\textbf{P}(s,z')\gets 0$, $\textbf{P}(z',s)\gets 0$
			        \EndFor
			        \State $\textbf{P}(z,z') \!\gets\! \ \textbf{P}(z,o)$, $\textbf{P}(z',o) \!\gets\! 1$ \label{l:addTransition}\label{l:setTransProbabilitiesE}
			        %\State $P_{E} \gets P_{(w,v)} \cup P_{(v,o_{j})} \cup P_{E} \setminus P_{(w,o_{j})}$\label{l:addTrainsitionE}
			        \State $L(z') \gets \{\}$ \label{l:addEmptyLabelSet}
			        \State $\mathit{rwd}(z') \gets 0$ \label{l:addZeroReward}
			        \State $\mathit{NewStates} \gets \mathit{NewStates} \cup \{z'\}$
			        \label{l:addUnion}
			        % KEEP this comment \State $E=\{(v1,v2)\in V \mid \mathbf{P}(v1,v2)\neq 0\}$
			   \EndFor\label{l:addpairE}
		    %\EndFor\label{l:addForE}
           		\State \Return {$NewStates$} \label{l:addReturn} 
        \EndFunction
		\end{algorithmic}
\end{algorithm}

\medskip
\noindent
\textbf{Scenario 2.} 
The state $z$ has no incoming transitions from outside the fragment under construction, but has outgoing transitions to one or more states inside the fragment in addition to $n\geq 1$ outgoing transitions (of probabilities $p_1$, $p_2$, \ldots, $p_n$) to states $s_1$, $s_2$, \ldots $s_n$ outside of the fragment (Figure~\ref{fig:trick1}--left). In this scenario, we augment the pDTMC with states $z'_1$, $z'_2$, \ldots, $z'_n$, and we replace each transition between states $z$ and $s_j$, $1\leq j\leq n$ with a transition of probability $p_j$ between $z$ and $z'_j$ \emph{and} a transition of probability $1$ between $z'_j$ and $s_j$. This change supports the formation of a fragment whose output state set includes the auxiliary states  $z'_1$, $z'_2$, \ldots, $z'_n$ (Figure~\ref{fig:trick1}--right), and is performed by the function \textsc{RestructureState} from  Algorithm~\ref{algorithm:Restructuring}. This function assembles a set $O$ comprising the states $s_1$, $s_2$, \ldots $s_n$ in line~\ref{l:AssembleSetO} and creates the set of $\mathit{NewStates}$ $z'_1$, $z'_2$, \ldots, $z'_n$ in the for loop from lines~\ref{l:addpair}--\ref{l:addpairE}. After it is created in line~\ref{l:addCreateV}, each new state is added to the state set $S$ in line~\ref{l:addCreateVE}, has its incoming and outgoing transition probabilities initialised in lines~\ref{l:setTransProbabilitiesS}--\ref{l:setTransProbabilitiesE}, is associated with an empty label set and with a zero reward in lines~\ref{l:addEmptyLabelSet} and~\ref{l:addZeroReward}, respectively, and is added to the set of $\mathit{NewStates}$ in line~\ref{l:addUnion}. This $\mathit{NewStates}$ set is returned in line~\ref{l:addReturn}, so that the function \textsc{Terminate}  can add the new states to both the set $Z$ of fragment states and the set $Z_\mathsf{OUT}$ of output fragment states (line~\ref{ReSE} from Algorithm~\ref{algorithm:stopgrow}). 

\medskip
Before providing correctness proofs for the \approach\ fragmentation in the next section, we note that the function \textsc{RestructureState} is also used to support the growing of the fragment under construction in the function \textsc{Traverse} from Algorithm~\ref{algorithm:fragmentgrow}. This use occurs when \textsc{Traverse} processes (in lines~\ref{l:formationAdd3} and~\ref{l:formationAdd3Push}) a state $z$ that has outgoing transitions to states belonging to previously constructed fragments. 

\subsection{Correctness of the \approach\ Fragmentation}
\label{sse:Proofs}

We start by showing that the pDTMC fragmentation produced by \approach\ is valid. 

\begin{theorem}
Function \textsc{Fragmentation} returns a valid fragmentation of the pDTMC $D(S,s_0,\textbf{P}, L)$ as restructured by its auxiliary functions.
\end{theorem}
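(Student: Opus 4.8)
The plan is to reduce ``valid fragmentation'' to two properties of the returned set $\mathit{FS}$: (i)~\emph{validity}, i.e., every $(Z,z_0,Z_\mathsf{OUT})\in\mathit{FS}$ satisfies Definition~\ref{def:fragment}; and (ii)~\emph{partition}, i.e., the state sets $Z$ of the fragments in $\mathit{FS}$ are pairwise disjoint and their union equals the state set $S$ of the pDTMC \emph{as finally restructured}. I would prove both by maintaining a single invariant for the \textbf{for} loop in lines~\ref{l:forS}--\ref{l:forE}: at the start of every iteration, the fragments accumulated so far in $\mathit{FS}$ are valid and pairwise state-disjoint, and $V$ equals exactly the union of their state sets. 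The invariant holds at entry because lines~\ref{l:initVisited}--\ref{l:satisfy} place each $s\in V$ into its own degenerate fragment $(\{s\},s,\{s\})$, which is valid by the remark following Definition~\ref{def:fragment}, and these fragments are trivially disjoint with union $V$.

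For the validity half~(i), I would observe that every fragment ever inserted into $\mathit{FS}$ falls into one of three cases, all valid: the degenerate fragments built at line~\ref{l:satisfy}; the degenerate fragments built inside \textsc{Traverse} at line~\ref{l:degenerate3} when a candidate inner state is found to have a predecessor already assigned to another fragment; and the fragments added at line~\ref{l:addFragment}, which are admitted only after passing the guard \textsc{ValidFragment} at line~\ref{l:validfragmentS} (hence meeting Definition~\ref{def:fragment} by construction) or else downgraded at line~\ref{l:degenerate2} to $(\{z_0\},z_0,\{z_0\})$. Validity therefore reduces to the observation that \textsc{ValidFragment} tests exactly the single-input-state and output-state conditions of Definition~\ref{def:fragment}, together with the triviality of degenerate fragments.

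The partition half~(ii) is where the real work lies, and I would establish it by showing the invariant is preserved across one iteration. Disjointness follows because line~\ref{l:forS} draws $z_0$ from $S\setminus V$, while \textsc{Traverse} only ever pushes onto the stack states outside the current $V$ (line~\ref{l:pushO} adds $O\setminus V$, and line~\ref{l:wAdded} adds the predecessor set $I$ only under the guard $I\cap V=\{\}$). The delicate case is the one flagged in the text: when \textsc{Traverse} turns a popped state $z$ into its own fragment (lines~\ref{l:degenerate3}--\ref{l:addToR2}) while the main loop also appends $z$ to $Z$ at line~\ref{l:extendFragment}. Here I would show that such a $z$ necessarily carries an incoming transition from outside $Z$ (the very $V$-predecessor that triggered line~\ref{l:degenerate3}), so the assembled candidate has two input states, fails \textsc{ValidFragment}, and is downgraded at line~\ref{l:degenerate2}; since only $z_0$ is then added to $V$ at line~\ref{l:addToR1}, the state $z$ is counted in exactly one fragment. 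Coverage follows because the loop exhausts $S\setminus V$ and line~\ref{l:addToR1} adds every state of an admitted fragment to $V$, while the condition of line~\ref{l:ifOutput1} guarantees that a non-downgraded candidate's states are each either $z_0$, an output state, or an inner state of that single fragment.

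The main obstacle, and the part I would treat most carefully, is the bookkeeping introduced by restructuring, since \textsc{RestructureState} (called from lines~\ref{l:restructure2} and~\ref{l:formationAdd1}) enlarges $S$ mid-construction with auxiliary states $z'$, so the partition must be argued against the \emph{final} state set rather than the original one. Each such $z'$ has a single incoming edge from $z\in Z$ and a single outgoing edge to a state outside $Z$, hence meets the output-state condition and is absorbed into the current fragment once it is pushed onto $T$ (line~\ref{l:formationAdd3Push}) or added directly to $Z$ and $Z_\mathsf{OUT}$ (line~\ref{ReSE} of \textsc{Terminate}). The subtle point to rule out is that a downgrade at line~\ref{l:degenerate2} might strand such an auxiliary state outside every fragment; I would close this either by showing that restructuring is committed only on paths that do not subsequently fail \textsc{ValidFragment}, or by showing that any stranded $z'$ is left in $S\setminus V$ and is therefore picked up as the seed $z_0$ of a later iteration, so coverage of the final $S$ is retained. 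Termination --- that the finitely many restructuring steps cannot grow $S$ without bound --- is a prerequisite I would dispatch by noting that each \textsc{RestructureState}/\textsc{RestructureTrans} call is tied to an existing edge of the bounded original graph and produces states that are immediately finalised as fragment outputs.
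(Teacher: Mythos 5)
Your proposal is correct and follows essentially the same route as the paper's proof: the same three-part decomposition (every inserted tuple is a valid fragment via the three insertion points and the \textsc{ValidFragment}/downgrade guard; disjointness via the $S\setminus V$ bookkeeping; coverage plus termination of all four functions). If anything, your loop-invariant packaging is more careful than the paper at two points the paper only gestures at---the double-counting case where \textsc{Traverse} wraps a popped $z$ into a degenerate fragment while the main loop still appends $z$ to $Z$ (resolved, as you argue, because the candidate then has two input states and is downgraded), and the possibility of auxiliary states stranded by a downgrade (resolved because they remain in $S\setminus V$ and seed later iterations)---so no changes are needed.
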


\begin{proof}
We prove this result by showing that: (a)~the set $\mathit{FS}$ assembled by \textsc{Fragmentation} comprises only valid fragments; (b)~the fragments from $\mathit{FS}$ are disjoint (i.e., no state from $S$ is included in more than one fragment); and (c)~the function terminates and returns a set of fragments $\mathit{FS}$ that includes all the states from the state set $S$ of the restructured pDTMC.

To prove part (a), we note that new fragments are added to $\mathit{FS}$ in three lines from Algorithms~\ref{algorithm:fragmentation} and ~\ref{algorithm:fragmentgrow}. In line~\ref{l:satisfy} of Algorithm~\ref{algorithm:fragmentation} and in line~\ref{l:degenerate3} of 
Algorithm~\ref{algorithm:fragmentgrow}, $\mathit{FS}$ is augmented with tuples that represent degenerate, one-state fragments according to the definition given in Section~\ref{subsec:fragmentationTheory}. Finally, in line~\ref{l:addFragment} of Algorithm~\ref{algorithm:fragmentation}, $\mathit{FS}$ is augmented with a tuple $(Z,z_0,Z_\mathsf{OUT})$ that either passes the fragment-validity check from line~\ref{l:validfragmentS}, or is reduced to a degenerate, one-state fragment in line~\ref{l:degenerate2} before it is included in $\mathit{FS}$. As such, any tuple inserted into $\mathit{FS}$ is a valid fragment.

To prove part (b), we note that the states from the set $V$ assembled in line~\ref{l:initVisited} of \textsc{Fragmentation} are each placed into a degenerate, one-state fragment in line~\ref{l:satisfy} of Algorithm~\ref{algorithm:fragmentation}, and that the state set $Z$ of any fragment $(Z,z_0,Z_\mathsf{OUT})$ added to $\mathit{FS}$ comes from $S\setminus V$, where $V$ is updated (in line~\ref{l:addToR1} of Algorithm~\ref{algorithm:fragmentation}, and in line~\ref{l:addToR2} of Algorithm~\ref{algorithm:fragmentgrow}) to include all the states of new fragments included in $\mathit{FS}$. To see that the states of all new fragments come from $S\setminus V$, observe that: 
\squishlist
    \item[(i)] state $z_0$ added to $Z$ in line~\ref{l:startFragment} and line~\ref{l:degenerate2} of Algorithm~\ref{algorithm:fragmentation} comes directly from $S\setminus V$ (line~\ref{l:forS}); 
    \item[(ii)] state $z$ added to $Z$ in line~\ref{l:extendFragment} of the same algorithm (or included into $\mathit{FS}$ as the only vertex of a degenerate fragment in line~\ref{l:degenerate3} of Algorithm~\ref{algorithm:fragmentgrow}) comes from the stack $T$, which can only acquire vertices from $S\setminus V$ (as enforced by the if statements before lines~\ref{l:wAdded} and~\ref{l:pushO} from Algorithm~\ref{algorithm:fragmentgrow}, and by the use of states newly created by function \textsc{RestructureState} in line~\ref{l:formationAdd3Push} from Algorithm~\ref{algorithm:fragmentgrow}); 
    \item[(iii)] the states added to $Z$ in line~\ref{ReSE} from Algorithm~\ref{algorithm:stopgrow} are states newly created by function \textsc{RestructureState}, which do not belong to any existing fragment. 
\squishend
Therefore, the fragments from $\mathit{FS}$ are disjoint.

To prove part (c), we note that all the functions from Algorithms~\ref{algorithm:fragmentation}--\ref{algorithm:Restructuring} terminate. \textsc{RestructureTrans} and \textsc{RestructureState} from Algorithm~\ref{algorithm:Restructuring} terminate because each of their statements (including the assembly of the state sets $I$ and $O$ in \textsc{RestructureTrans} and of the state set $O$ in \textsc{RestructureState}, and their for loops) operate with finite numbers of states. As such, \textsc{Traverse} also terminates because it builds and operates with finite sets of states $I$ and $O$, and invokes a function  that terminates (i.e., \textsc{RestructureState}). The function \textsc{Terminate} contains no loops and invokes one of three functions, each of which is guaranteed to terminate; therefore, \textsc{Terminate} is also guaranteed to terminate. Finally, \textsc{Fragmentation} terminates because: 
\squishlist
    \item[(i)] each iteration of its for loop adds at least state $z_0$ from $S\setminus V$ to $V$ in line~\ref{l:addToR1}, until $S\setminus V = \{\}$ in line~\ref{l:forS} (since $S$ is a finite set of states) and the loop terminates with all states from $S$ included in fragments from $\mathit{FS}$; 
    \item[(ii)] its while loop terminates since it iterates over the elements of stack $T$ that can only contain one instance of each state from the finite set $S$ and is therefore finite; 
    \item[(iii)] \textsc{RestructureState} invocations can occur at most once for each state of the initial pDTMC, and can only add a finite number of states to $S$, as shown in Section~\ref{sse:MRAFF}. 
\squishend
Thus, \textsc{Fragmentation} terminates, returning a fragment set $\mathit{FS}$ that includes all the vertices from $S$.
\end{proof}

Having demonstrated that the function \textsc{Fragmentation} yields a valid fragmentation of the restructured version of the pDTMC received as its first argument, we will show next that using this restructured pDTMC instead of the original pDTMC to analyse the PCTL formula $\phi$ under verification does not change the PMC result. 

\begin{theorem} 
\label{th:model-equivalence}
Applying the model restructuring techniques from Algorithm~\ref{algorithm:Restructuring} and Figure~\ref{fig:ModelReconstruction} to a pDTMC does not affect its reachability, unbounded until and reachability reward properties.
\end{theorem}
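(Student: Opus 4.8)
The plan is to prove the theorem by treating the two restructuring operations of Algorithm~\ref{algorithm:Restructuring} separately, and, for each, to check that it preserves the value of all three admissible property types: reachability $\mathcal{P}_{=?}[\mathrm{F}\,\Phi]$, unbounded until $\mathcal{P}_{=?}[\Phi_1\,\mathrm{U}\,\Phi_2]$, and reachability reward $\mathcal{R}_{=?}^\mathit{rwd}[\mathrm{F}\,\Phi]$. Throughout, I would exploit the fact that the states the restructuring acts upon belong to a fragment under construction and are therefore drawn from $S\setminus V$; that is, they satisfy neither $\Phi_1$ nor $\Phi_2$ in the until case, and fail to satisfy $\Phi$ in the reachability and reward cases. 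In particular, such a state is never a target state of the analysed property, which is the structural property the whole argument rests on.

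For \textsc{RestructureState} (Scenario~2, Figure~\ref{fig:trick1}) I would argue by a probability-preserving bijection on paths. Each inserted state $z'_j$ has an empty label set, a zero reward, and a single outgoing transition $z'_j\!\to\! s_j$ of probability $1$, while the edge $z\!\to\! s_j$ of probability $p_j$ is replaced by $z\!\to\! z'_j$ of the same probability $p_j$. Mapping every path $\cdots z\, s_j\cdots$ of the original pDTMC to $\cdots z\, z'_j\, s_j\cdots$ is a bijection onto the paths of the restructured pDTMC that preserves path probability, since the inserted edge has probability $1$. Because each $z'_j$ is inserted strictly after $z$, carries no atomic proposition, and contributes zero reward, this bijection preserves target-set membership along each path, the accumulated reward, and---since $z$ itself lies in $S\setminus V$, so that the until outcome of any path entering the fragment is already settled at $z$---the satisfaction of $\Phi_1\,\mathrm{U}\,\Phi_2$. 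Hence all three property values are unchanged.

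For \textsc{RestructureTrans} (Scenario~1, Figure~\ref{fig:trick2}) I would use the linear-equation characterisation of each property. Writing $x_s$ for the value of the property at state $s$, the key identity is that the bypassed state $z$, being transient with no self-loop (all its transitions leave the fragment, so $\sum_k p_{ok}=1$) and non-target, satisfies $x_z=\sum_{k=1}^{n}p_{ok}\,x_{s_{ok}}$ for reachability and $x_z=\mathit{rwd}(z)+\sum_{k=1}^{n}p_{ok}\,x_{s_{ok}}$ for reachability reward. Substituting this identity into the balance equation of each external predecessor $s_{ij}$, whose term $p_{ij}x_z$ expands to $\sum_k p_{ij}p_{ok}x_{s_{ok}}$, reproduces exactly the balance equation of $s_{ij}$ in the restructured chain, in which the edges $s_{ij}\!\to\! s_{ok}$ carry probability $p_{ij}p_{ok}$. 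The two linear systems therefore admit the same solution, and the value at $s_0$ is preserved; in the reward case this requires the hypothesis $\mathit{rwd}(z)=0$ from line~\ref{ReTS}, which guarantees that bypassing $z$ loses no reward and explains the extra conjunct in the termination condition.

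The main obstacle is the unbounded-until case of \textsc{RestructureTrans}. Unlike reachability and reachability reward, for which passing through the non-target $z$ merely forwards probability or reward to $z$'s successors, the until value treats $z$ as fatal: since $z\in S\setminus V$ we have $z\not\models\Phi_1$ and $z\not\models\Phi_2$, so in the until system $x_z=0$ rather than $\sum_k p_{ok}x_{s_{ok}}$, and the substitution identity above fails. Naively, the bypass would move probability mass out of a doomed path into the other branches of $s_{ij}$. The plan for this case is therefore to establish an invariant that, whenever \textsc{Terminate} rewires $z$ by transition replacement, every external predecessor $s_{ij}$ also lies in $S\setminus V$, so that $x_{s_{ij}}=0$ both before and after restructuring and the rerouting of its outgoing mass cannot alter any property value. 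Reconciling the until semantics with the transition-replacement step through such an invariant---and verifying that the fragmentation algorithm indeed guarantees it---is where I expect the proof to require the most care.
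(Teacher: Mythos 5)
Your handling of \textsc{RestructureState} is essentially the paper's own argument (a probability-preserving path bijection exploiting that each inserted state $z'_j$ has an empty label set, zero reward, and a single probability-$1$ outgoing transition), and your handling of \textsc{RestructureTrans} for reachability and reachability reward takes a genuinely different but sound route: where the paper pairs each path $s_0\omega_1 s_{ij} z s_{ok}\omega_2$ with $s_0\omega_1 s_{ij} s_{ok}\omega_2$ and compares path measures using $\mathbf{P}'(s_{ij},s_{ok})=\mathbf{P}(s_{ij},z)\mathbf{P}(z,s_{ok})$, you substitute the balance equation of the bypassed state $z$ into those of its predecessors. Both arguments rest on exactly the same two hypotheses, and your version makes more explicit where $\sum_k p_{ok}=1$ (no transitions from $z$ back into the fragment) and $\mathit{rwd}(z)=0$ (the extra conjunct on line~\ref{ReTS}) are actually needed.

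The genuine gap is the one you flag yourself: the unbounded-until case of \textsc{RestructureTrans} is left resting on an unproved invariant, namely that every external predecessor $s_{ij}$ of the bypassed state $z$ lies in $S\setminus V$ and hence has until-value $0$. Nothing in Algorithm~\ref{algorithm:stopgrow} enforces this: the guard on line~\ref{ReTS} only requires that \emph{some} predecessor outside $Z$ exists, and \textsc{RestructureTrans} then reroutes \emph{all} predecessors in $S\setminus Z$, which may include $\Phi_1$-satisfying states (these are placed in $V$ at line~\ref{l:initVisited} of Algorithm~\ref{algorithm:fragmentation} and are never examined by \textsc{Terminate}; the $I\cap V$ check in \textsc{Traverse} is bypassed once $\#Z\geq\alpha$). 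So the invariant cannot simply be ``verified from the code''---if a $\Phi_1$-state feeds into $z$, bypassing $z$ really does move probability mass from a doomed branch ($x_z=0$ in the until system) onto $\sum_k p_{ok}x_{s_{ok}}$, exactly the failure you anticipate. The paper closes this case by a different argument: it reuses the equiprobable correspondence between $\Pi_\mathsf{reach}$ and $\Pi'_\mathsf{reach}$ from the reachability case and asserts that a path in $\Pi_\mathsf{reach}\setminus\Pi$ is mapped to a path in $\Pi'_\mathsf{reach}\setminus\Pi'$ because ``no intermediate state of $\pi$ satisfies $\Phi_1$''---but the correct negation of the until only yields that \emph{some} state before the first $\Phi_2$-state violates $\Phi_1$, and the case where that unique violating state is $z$ itself is precisely the scenario your invariant is designed to exclude. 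In short, your instinct about where the difficulty lies is correct and your proof is incomplete at exactly the point where the paper's argument is thinnest; to finish it you would need either to prove the predecessor invariant as a property of the fragmentation algorithm (it does not follow from the algorithms as written) or to strengthen the guard of \textsc{Terminate} so that transition replacement is applied only when no external predecessor of $z$ satisfies $\Phi_1$.
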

\begin{proof}
We first show that the theorem holds for any reachability property $\phi=\mathcal{P}_{=?} [\mathrm{F}\, \Phi]$. To that end, we consider a generic pDTMC~$D$, the pDTMC~$D'$ obtained by applying one of the model restructuring techniques from Figure~\ref{fig:ModelReconstruction} to $D$, and the sets of all paths $\Pi$ over $D$ and $\Pi'$ over $D'$ that satisfy $\phi$. 
According to the semantics of PCTL, we need to show that $\mathrm{Pr}_{s_0}(\Pi)=\mathrm{Pr}'_{s_0}(\Pi')$, where $\mathrm{Pr}_{s_0}$ is a probability measure defined over all paths $\pi=s_0s_1s_2\ldots s_n$ starting in the initial state $s_0$ of $D$ such that $\mathrm{Pr}_{s_0}(\pi)=\prod_{i=0}^{n-1}\mathbf{P}(s_i,s_{i+1})$, and $\mathrm{Pr}'_{s_0}$ is a similarly defined probability measure for $D'$. We focus on the paths that differ between $\Pi$ and $\Pi'$, and show that $\mathrm{Pr}_{s_0}(\Pi\!\setminus\! \Pi')=\mathrm{Pr}'_{s_0}(\Pi'\!\setminus\! \Pi)$ for each technique from Figure~\ref{fig:ModelReconstruction} in turn:
\squishlist
\item For the technique from Figure~\ref{fig:trick2}, a path from $\Pi\!\setminus\! \Pi'$ has the form $\pi=s_0\omega_1s_{ij}zs_{ok}\omega_2$, with $j\in\{1,2,\ldots,m\}$,  $k\in\{1,2,\ldots,n\}$, and $\omega_1$, $\omega_2$ subpaths such that $\omega_2$ ends in a state that satisfies $\Phi$. Path $\pi$ has a corresponding path $\pi'=s_0\omega_1s_{ij}s_{ok}\omega_2\in \Pi'\!\setminus\!\Pi$ (and the other way around) such that
\[
\begin{array}{ll}
   \mathrm{Pr}_{s_0}(\pi)\!\!\!\!\! & = \mathrm{Pr}_{s_0}(s_0\omega_1s_{ij})\mathbf{P}(s_{ij},z)\mathbf{P}(z,s_{ok})
   \mathrm{Pr}_{s_{ok}}(s_{ok}\omega_2)\\
   & = \mathrm{Pr}'_{s_0}(s_0\omega_1s_{ij}) \mathbf{P}'(s_{ij},s_{ok})\mathrm{Pr}'_{s_{ok}}(s_{ok}\omega_2)\\
   & = \mathrm{Pr}'_{s_0}(\pi')
\end{array}
\]
since the restructuring technique guarantees that $\mathbf{P'}(s_{ij},s_{ok})=\mathbf{P}(s_{ij},z)\mathbf{P}(z,s_{ok})$.
\item For the technique from Figure~\ref{fig:trick1}, a path from $\Pi\!\setminus\! \Pi'$ has the form $\pi=s_0\omega_1 zs_i\omega_2$ for some $i\in\{1,2,\ldots, n\}$ and subpaths $\omega_1$, $\omega_2$, with $\omega_2$ ending in a state that satisfies $\Phi$. Path $\pi$ has a corresponding path $\pi'=s_0\omega_1 zz'_is_i\omega_2 \in \Pi'\!\setminus\! \Pi$ (and the other way around) such that 
\[
\begin{array}{ll}
   \mathrm{Pr}_{s_0}(\pi)\!\!\!\!\! & =\mathrm{Pr}_{s_0}(s_0\omega_1z)\mathbf{P}(z,s_i)\mathrm{Pr}_{s_i}(s_i\omega_2)\\
   & =\mathrm{Pr}'_{s_0}(s_0\omega_1z) \mathbf{P}'(z,z'_i)\mathbf{P}'(z'_i,s_i)\mathrm{Pr}'_{s_i}(s_i\omega_2)\\
   & =\mathrm{Pr}'_{s_0}(\pi') 
\end{array}
\]
since the restructuring technique guarantees that $\mathbf{P}'(z,z'_i)=1$ and $\mathbf{P}'(z'_i,s_i)=\mathbf{P}(z,s_i)$.
\squishend
We showed that neither of the restructuring techniques from Figure~\ref{fig:ModelReconstruction} affects the value of the reachability property $\phi$, and therefore the finite number of applications of these techniques within the \approach\ fragmentation approach do not affect this value either.

To show that the theorem holds for a generic unbounded until property $\phi=\mathcal{P}_{=?} [\Phi_1\, \mathrm{U}\, \Phi_2]$, we first note that the sets of paths $\Pi$ over $D$ and $\Pi'$ over $D'$ that satisfy $\phi$ are subsets of the path sets $\Pi_\mathsf{reach}$ and $\Pi'_\mathsf{reach}$ that satisfy the reachability property $\mathcal{P}_{=?} [\mathrm{F}\, \Phi_2]$ over $D$ and $D'$, respectively. We know from the first part of the proof that $\Pi_\mathsf{reach}$ and $\Pi'_\mathsf{reach}$ are equiprobable. Consider now a generic path $\pi\in\Pi_\mathsf{reach}\!\setminus\!\Pi$, i.e., a path that ends in a state that satisfies $\Phi_2$, but without any intermediate state where $\Phi_1$ is satisfied. We have two cases. If $\pi$ is unaffected by the restructuring technique used to obtain the pDTMC $D'$ from $D$, then $\pi\in\Pi'_\mathsf{reach}\!\setminus\!\Pi'$. Otherwise, the equiprobable path $\pi'\in\Pi_\mathsf{reach}$ constructed from $\pi$ as in the first part of the theorem will not be in $\Pi'$ because none of its intermediate states can satisfy $\Phi_2$. Indeed, any such states that are identical to states from $\pi$ do not satisfy $\Phi_1$ because no intermediate state of $\pi$ does, and any new states created by the pDTMC restructuring is labelled with an empty set of atomic propositions (in line~\ref{l:addEmptyLabelSet} of Algorithm~\ref{algorithm:Restructuring}) and thus does not satisfy any PCTL formula. We showed that path sets $\Pi$ and $\Pi'$ are obtained by removing equiprobable paths from the equiprobable path sets $\Pi_\mathsf{reach}$ and $\Pi'_\mathsf{reach}$. As such, $\Pi$ and $\Pi'$ are also equiprobable, and the theorem holds for unbounded until properties.

Finally, for a generic reachability reward property \mbox{$\phi=\mathcal{R}_{=?}^\mathit{rwd}[\mathrm{F}\, \Phi]$,} we note that the value of $\phi$ is given by a weighted sum of the probabilities of all DTMC paths that satisfy the associated reachability property  $\mathcal{P}_{=?} [\mathrm{F}\, \Phi]$, where the weight associated with a path $\pi=s_0s_1s_2\ldots$ is the cumulative reward $\mathit{rwd}(s_0)+\mathit{rwd}(s_1)+\mathit{rwd}(s_2)+\ldots$ for the states on the path. As shown in the first part of the theorem, the path formula $\mathrm{F}\, \Phi$ is satisfied by pairs of equiprobable paths $\pi$ and $\pi'$ over $D$ and $D'$, respectively. We will show that the paths in every such pair have the same cumulative reward. We have three cases. First, if $\pi$ is unaffected by the restructuring technique used to obtain the pDTMC $D'$ from $D$, then $\pi'=\pi$, and the two cumulative rewards are trivially equal. Second, when the restructuring from Figure~\ref{fig:trick2} is used, a state $z$ from path $\pi$ is skipped on path $\pi'$, but otherwise the two paths are identical. However, $z$ is in this case a zero-reward state (cf.~line~\ref{ReTS} from Algorithm~\ref{algorithm:stopgrow}), so the two cumulative rewards are equal. Finally, when the restructuring from Figure~\ref{fig:trick1} is used, path $\pi'$ only differs from $\pi$ through the inclusion of an auxiliary state $z'_i$. Since $\mathit{rwd}(z_i)=0$ (cf.\ line~\ref{l:addZeroReward} from Algorithm~\ref{algorithm:Restructuring}), the cumulative rewards for the two paths are again equal. As such, the theorem also holds for reachability reward properties.
\end{proof}

We have shown so far that \approach\ produces valid pDTMC fragmentations, and that the model restructuring used during this fragmentation does not impact the PMC of reachability, unbounded until and reachability reward properties. The next result establishes the complexity of the \approach\  fragmentation. To derive this result, we adopt the standard graph notation $\mathit{indegree}(s)=\#\{s'\in S\,|\, \mathbf{P}(s',s)\neq 0\}$, $\mathit{outdegree}(s)=\#\{s'\in S\,|\, \mathbf{P}(s,s')\neq 0\}$ and $\mathit{degree}(s)=\max\{\mathit{indegree}(s), \mathit{outdegree}(s)\}$ to denote the number of incoming transitions, the number of outgoing transitions, and the maximum between the two numbers, respectively, for a state $s$ of a pDTMC. 

\begin{theorem}
\label{th:complexity}
The function \textsc{Fragmentation} requires at most $\mathsf{O}(n^3d)$ steps, where $n$ represents the number of pDTMC states after model restructuring\footnote{\textsc{RestructureState} may add up to $\mathit{outdegree}(z)-1$ new states for each of the $n_0$ states of the initial pDTMC (i.e., of the pDTMC that \textsc{Fragmentation} receives as its first argument), yielding a restructured pDTMC with $n_0(d-1)$ vertices in the worst-case scenario.} and $d=\max_{s\in S} \mathit{degree}(s)$.
\end{theorem}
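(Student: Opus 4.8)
The plan is to bound the running time by multiplying three quantities: the number of iterations of the outer \textbf{for} loop (lines~\ref{l:forS}--\ref{l:forE}) of \textsc{Fragmentation}, the number of iterations of the inner \textbf{while} loop (lines~\ref{l:whileS}--\ref{l:whileE}) executed within a single outer iteration, and the worst-case cost of one inner iteration together with the auxiliary-function calls it triggers. I would show these are $\mathsf{O}(n)$, $\mathsf{O}(n)$ and $\mathsf{O}(nd)$ respectively, so that their product is $\mathsf{O}(n^3d)$, and then check that the remaining work---the initialisation in lines~\ref{l:initVisited}--\ref{l:satisfy} and the per-fragment validation in lines~\ref{l:validfragmentS}--\ref{l:addFragment}---is dominated. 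Throughout I would use that the state set never exceeds $n$ in size (the post-restructuring count, which the theorem's footnote bounds in terms of the initial count $n_0$ and $d$), so that any scan of a row or column of $\mathbf{P}$ costs $\mathsf{O}(n)$.

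For the outer loop, I would reuse the termination argument from the validity theorem: every execution of the loop body permanently moves at least the input state $z_0$ from $S\setminus V$ into $V$ (line~\ref{l:addToR1}), and states created by \textsc{RestructureState} are absorbed into the current fragment and committed to $V$ within the same iteration, so they never reappear as a fresh $z_0$. Hence the outer loop runs $\mathsf{O}(n)$ times. For the inner loop I would invoke the fact---also exploited in that proof---that the stack $T$ holds at most one copy of each state and that each pop ultimately places its state into $Z$ (line~\ref{l:extendFragment}); thus at most $n$ distinct states are popped per outer iteration, giving $\mathsf{O}(n)$ inner iterations. I would explicitly note that a fragment may later be downgraded in line~\ref{l:degenerate2}, releasing its non-input states for reprocessing in subsequent outer iterations, and that this reprocessing is already subsumed by the product of the two $\mathsf{O}(n)$ loop bounds rather than needing a separate argument.

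Next I would cost one inner iteration. The output-state test in line~\ref{l:ifOutput1} forms the predecessor and successor sets of $z$ by scanning $\mathbf{P}$ and tests their inclusion in $Z$, costing $\mathsf{O}(n)$. The branches may call \textsc{Traverse} or \textsc{Terminate}, each of which may in turn invoke \textsc{RestructureTrans} or \textsc{RestructureState}. I would verify that \textsc{RestructureTrans} costs $\mathsf{O}(n+d^2)=\mathsf{O}(nd)$ (two matrix scans plus a double loop over $I\times O$ with $|I|,|O|\le d$), and that \textsc{RestructureState} is the most expensive routine at $\mathsf{O}(nd)$: its loop runs over at most $\mathit{outdegree}(z)\le d$ successors, and for each new state it re-initialises a full row and column of $\mathbf{P}$ in lines~\ref{l:setTransProbabilitiesS}--\ref{l:setTransProbabilitiesE} at cost $\mathsf{O}(n)$. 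Since every auxiliary call issued from one inner iteration is therefore $\mathsf{O}(nd)$, one inner iteration with its calls costs $\mathsf{O}(nd)$.

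Combining the three bounds yields $\mathsf{O}(n)\cdot\mathsf{O}(n)\cdot\mathsf{O}(nd)=\mathsf{O}(n^3d)$, and the $\mathsf{O}(n)$ setup together with the $\mathsf{O}(nd)$-per-iteration validation is dominated, completing the argument. I expect the main obstacle to lie in the inner-loop count rather than the per-step costs: because the threshold $\alpha$ is only a \emph{soft} bound, a fragment may grow to $\Theta(n)$ states, and the interplay between the duplicate-free stack and the downgrade-and-reprocess behaviour must be handled with care to guarantee that no state is popped more often than claimed. As a cleaner fallback I would keep in reserve the alternative of charging each \textsc{RestructureState} call separately---at most $n_0\le n$ such calls occur, since each initial state triggers at most one invocation (as observed in the termination proof), contributing only $\mathsf{O}(n^2d)$---and bounding the remaining per-iteration work by $\mathsf{O}(n)$; this gives the sharper $\mathsf{O}(n^3)$ and \emph{a fortiori} the stated $\mathsf{O}(n^3d)$.
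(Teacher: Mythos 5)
Your proposal is correct and follows essentially the same route as the paper's proof: the same $\mathsf{O}(nd)$ bounds for \textsc{RestructureState}, \textsc{Traverse} and \textsc{Terminate} (and $\mathsf{O}(\max\{n,d^2\})$ for \textsc{RestructureTrans}), combined with an $\mathsf{O}(n)$ bound on the outer for loop and an $\mathsf{O}(n)$ bound on the while loop, multiplied to give $\mathsf{O}(n^3d)$. Your additional observations---the care needed around the downgrade-and-reprocess behaviour and the sharper accounting obtained by charging each \textsc{RestructureState} call once---go slightly beyond what the paper writes down, but the core argument is the same.
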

\begin{proof}
The function \textsc{RestructureTrans} (Algorithm~\ref{algorithm:Restructuring}) requires $O(n)$ steps to assemble the state sets $I$ and $O$, and at most $\mathsf{O}(d^2)$ steps to process up to $\mathit{indegree}(z)\cdot\mathit{outdegree}(z)$ pairs of incoming-outgoing transitions of a state $z$. As such, it has $O(\max\{n,d^2\})$ overall complexity. Given a state $z$, the function \textsc{RestructureState} requires $O(n)$ steps to assemble the state set $O$, and then $\mathsf{O}(nd)$ steps to initialise $2n$ transition probabilities for each new state it creates, since one new state is created for each of the up to $\mathit{outdegree}(z)-1<d$ outgoing transitions of $z$. Therefore, the overall complexity of \textsc{RestructureState} is $O(nd)$.

The function \textsc{Traverse} requires at most $\mathsf{O}(nd)$ steps due to the invocation of \textsc{RestructureState}, with its other operations (i.e., building the set $I$ and placing it onto the stack $T$, and building the set $O$ and placing $O\!\setminus\! V$ onto the stack $T$) performed in $\mathsf{O}(n)$ time.

The function \textsc{Terminate} requires $O(n)$ time to evaluate the conditions whose values determine which of the functions \textsc{RestructureTrans}, \textsc{RestructureState} and \textsc{Traverse} it needs to invoke. As such, the complexity of \textsc{Terminate} is given by the highest complexity among these three functions, i.e., $O(nd)$ for both \textsc{RestructureState} and \textsc{Traverse}. Note that this complexity is higher than the $O(\max\{n,d^2\})$ complexity of \textsc{RestructureTrans} since $nd\geq n$, and (because $n\geq d$) $nd\geq d^2$.

In the worst-case scenario where the fragmentation produces only one-state fragments, the execution of \textsc{Fragmentation} requires the execution of its for loop from lines~\ref{l:forS}--\ref{l:forE} for each of the $n_0\leq n$ states of the initial pDTMC (with any new states created by \textsc{RestructureState} included into the same fragment as the state that led to their creation, cf.~Figure~\ref{fig:trick1}). Each iteration of this loop executes: 
\squishlist
\item[i)] \textsc{Traverse} (line~\ref{l:grow1}) in $\mathsf{O}(nd)$ steps; 
\item[ii)] a while loop (lines~\ref{l:whileS}--\ref{l:whileE}) with at most $n$ iterations (one for each pDTMC state) that may each invoke the $\mathsf{O}(nd)$-step \textsc{Traverse} or the $\mathsf{O}(nd)$-step \textsc{Terminate}, yielding an $O(n^2d)$ complexity for the while loop; 
\item[iii)] the fragment validity check from line~\ref{l:validfragmentS}, which requires no more than $\mathsf{O}(nd)$ operations. 
\squishend
Thus, each iteration of the for loop from lines~\ref{l:forS}--\ref{l:forE} is completed in no more than $\mathsf{O}(n^2d)$ steps (due to the while loop from lines~\ref{l:whileS}--\ref{l:whileE}), and the entire \textsc{Fragmentation} requires $\mathsf{O}(n^3d)$ steps in the worst-case scenario.
\end{proof}

We note that the coefficients associated with $n$ and $d$ from the big-$\mathsf{O}$ notation in Theorem~\ref{th:complexity} are typically well below $1$. For instance, in all our experiments, the for loop from \textsc{Fragmentation} was only executed for a small fraction of the pDTMC states (because many fragments with multiple states are typically produced), and \textsc{Terminate},  \textsc{RestructureTrans} and \textsc{RestructureState} were only executed sparingly. Furthermore, it is worth noting that pDTMCs are typically sparsely connected graphs, and therefore $d$ is relatively small.

\begin{figure}
	\begin{subfigure}[b]{\hsize}
         \centering
	\includegraphics[width=0.82\linewidth]{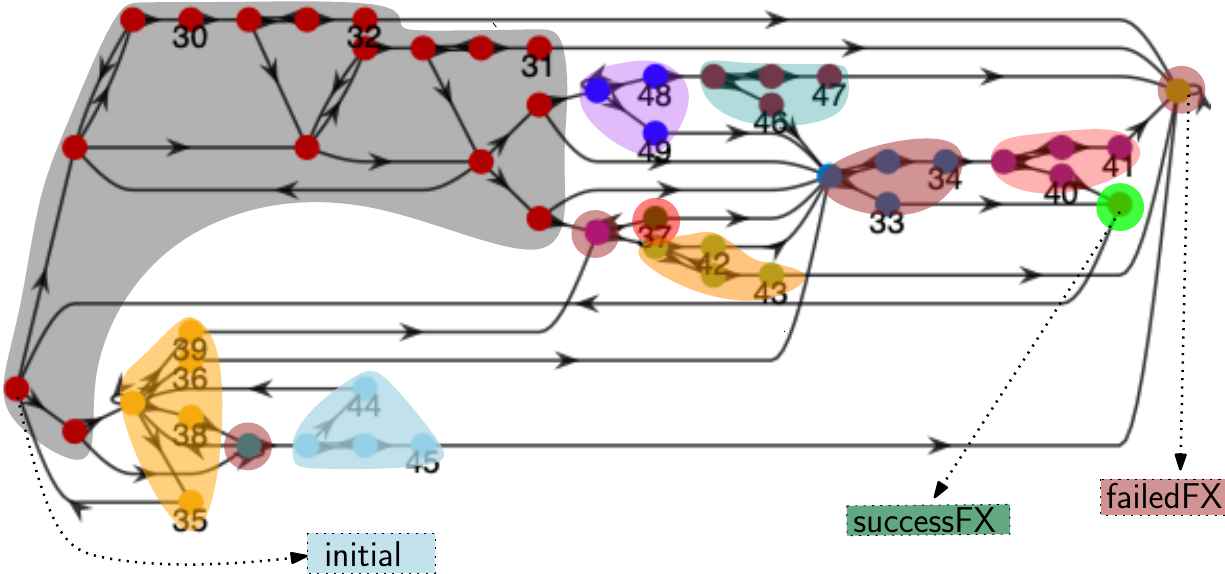}	
	\label{ig:runningExampleApplicationReachability}
    \caption{pDTMC fragmentation for the reachability property from the first row of Table~\ref{Propsummary}: 13~fragments were obtained, including five one-state fragments.}
    \end{subfigure}
    
         \vspace*{2mm}
   \begin{subfigure}[b]{\hsize}
	\centering
    \includegraphics[width=0.82\textwidth]{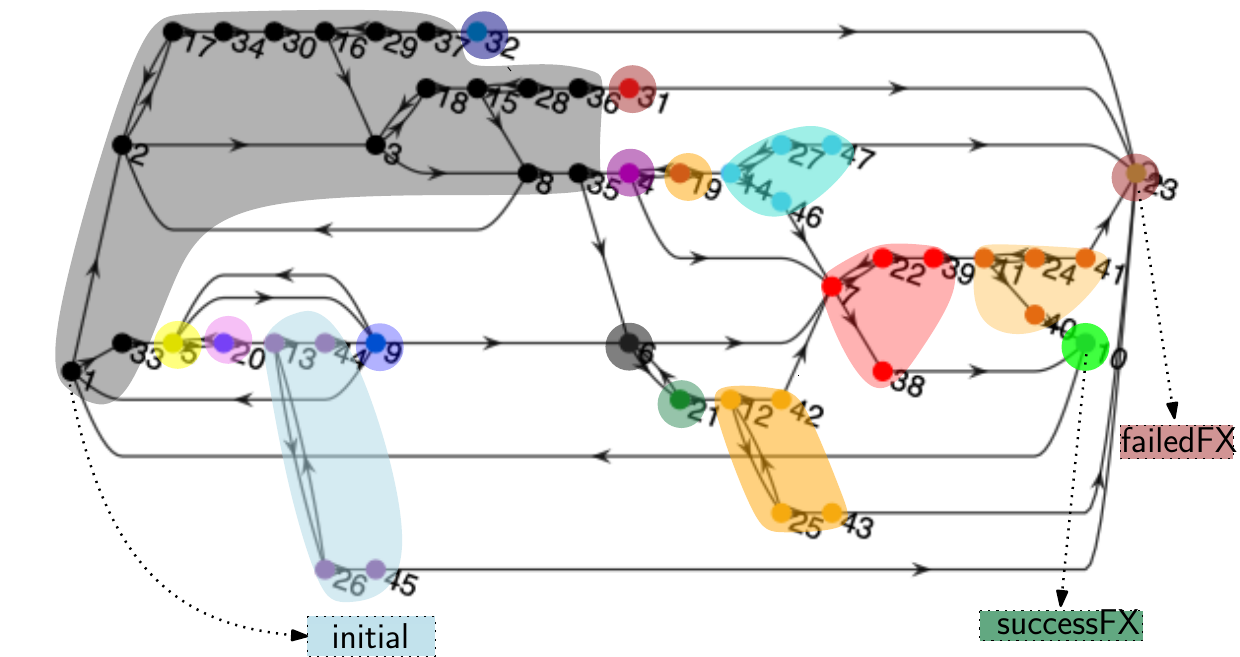}
    \caption{pDTMC fragmentation for the reachability reward property from the second row of Table~\ref{Propsummary} and the ``time'' reward function from Figure~\ref{fig:fxmodel}: 17~fragments were obtained, including 11~one-state fragments.}
    \label{fig:runningExampleApplicationCost}
  \end{subfigure}
  
           \vspace*{2mm}
\begin{subfigure}[b]{\hsize}
	\centering
    \includegraphics[width=0.82\textwidth]{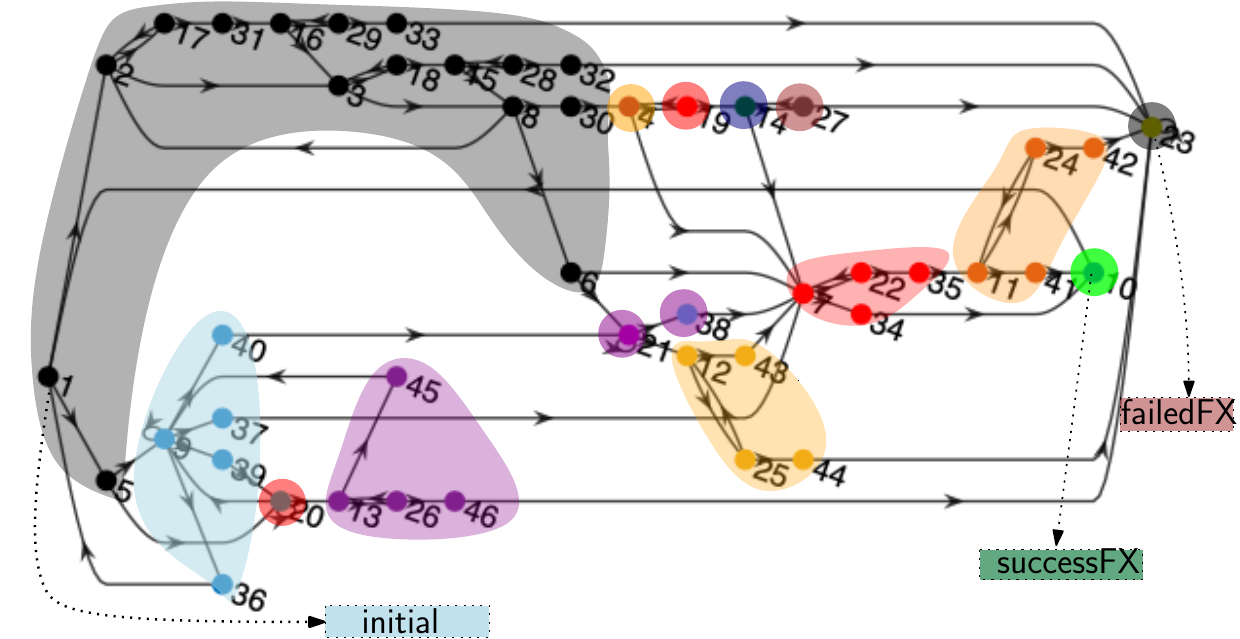}
    \caption{pDTMC fragmentation for the unbounded until property from the last row of Table~\ref{Propsummary}:  15~fragments were obtained, including nine one-state fragments.}
    \label{fig:runningExampleApplicationUnbounded}
\end{subfigure}
\caption{\approach\ fragmentation of the pDTMC from Figure~\ref{fig:fxmodel} for the PCTL properties from Table~\ref{Propsummary} and $\alpha=5$, with different shading used to highlight each fragment. 
\label{fig:runningExampleApplication}}
\end{figure}

\subsection{\emph{\approach} Application to the Motivating Example}\label{ssec:fpmcapplication}

\begin{table*}
\sffamily
    \centering
     \caption{Summary of \approach\ application to the pDTMC and properties of the FX system in Section~\ref{sec:example}}
     \label{table:illustration}
    \begin{tabular}{p{2mm} p{3.8cm} p{1.4cm} p{1.4cm} p{1.4cm} p{3.2cm} p{2.2cm}}
    \toprule
    & & \multicolumn{2}{c}{\textbf{Restructured pDTMC}$^\dagger$} & & \multicolumn{2}{c}{\textbf{\approach\ closed-form analytical model (cf.~Fig.~\ref{fig:approach})}}\\ \cmidrule{3-4} \cmidrule{6-7}
    \textbf{ID} & \textbf{Property type} & \textbf{States} & \textbf{Transitions} & \textbf{\approach\ time} & \textbf{Arithmetic operations} & \textbf{Evaluation time} \\ \midrule
  P1 & Reachability & 49 & 83 & 7.2s & 1456 & 2ms\\
  P2 & Reachability reward & 47 & 76 & 7.7s & 2020 & 30ms\\
  P3 & Unbounded until & 46 & 79 & 5.1s & 1224 & 5ms\\
           \bottomrule
           \\[-2mm]
           \multicolumn{7}{l}{$^\dagger$starting from the initial FX pDTMC model with 29~states and 58~transitions in Figure~\ref{fig:fxExample}}
    \end{tabular}
\end{table*}

We illustrate the use of \approach\ by using the pDTMC model and properties from our running example (cf.~Figure~\ref{fig:fxmodel} and Table~\ref{Propsummary}). The leading parametric model checkers Storm and PRISM time out without producing results for any of these properties within 60~minutes of running on the computer with the specification provided in Section~\ref{Sec:ExperimentalSetup}. The outcome of applying \approach\ to this pDTMC and each of the three PCTL properties from Table~\ref{Propsummary} (with fragmentation threshold $\alpha=5$) are summarised in  Figure~\ref{fig:runningExampleApplication} and Table~\ref{table:illustration}. 

Figure~\ref{fig:runningExampleApplication} depicts the pDTMC fragments generated by \approach, which are different for each property because the function \textsc{Fragmentation} from Algorithm~\ref{algorithm:fragmentation} starts by creating property-specific sets of one-state fragments in lines~\ref{l:initVisited} and~\ref{l:satisfy}. Table~\ref{table:illustration} shows:
\begin{itemize}
    \item the change in pDTMC size due to \approach\ restructuring (an increase of 59\%--69\% in the number of states, and 31\%--43\% in the number of transitions compared to the initial pDTMC from Figure~\ref{fig:fxExample});
    \item the time taken by the end-to-end \approach\ process from Figure~\ref{fig:approach} (under 8s for each property by running the tool presented in the next section on the computer with the specification provided in Section~\ref{Sec:ExperimentalSetup});
    \item the number of arithmetic operations from the algebraic formulae of the \approach\ closed-form analytical model for each analysed property, and the time required to evaluate these algebraic formulae for a given parameter valuation by running MATLAB on the computer from Section~\ref{Sec:ExperimentalSetup} (up to 30ms for the evaluation of the 2020-operation formulae of property P2).
    \end{itemize}
Given the large size of the \approach\ algebraic formulae for the three FX properties, we do not include them in the paper; they are provided on our project website~\url{https://www.cs.york.ac.uk/tasp/fPMC}.

\section{Implementation}\label{sec:implementation}

We developed a parametric model checking tool that implements the \approach\ algorithms presented in the previous section. This tool uses:
\begin{itemize}
\item the model checker PRISM, to identify the pDTMC states that satisfy $\Phi$, $\Phi_1$ and $\Phi_2$ in line~\ref{l:initVisited} of Algorithm~\ref{algorithm:fragmentation}, and to obtain the transition probability matrix $\mathbf{P}$ (used throughout the \approach\ algorithms) from a pDTMC specified in the PRISM modelling language (e.g., see Figure~\ref{fig:fxmodel});
\item the model checker Storm, to apply standard PMC to each pDTMC fragment and to the abstract model from Figure~\ref{fig:approach}.
\end{itemize}

As pDTMCs with small numbers of parameters are already handled extremely efficiently by Storm, our tool only invokes the end-to-end \approach\ approach for pDTMCs whose number of such parameters exceeds a user-configurable threshold $\beta\!\in\! \mathbb{N}_{>0}$. For pDTMCs with up to $\beta$ model parameters, the tool invokes Storm directly, and the PMC is performed on the unfragmented model. According to our experimental results (presented in Figure~\ref{fig:fxExampleheri}), $\beta$ values in the range $21..30$ work well for most models.

\section{Evaluation}
\label{sec:evaluation}

\subsection{Evaluation methodology}
\label{Sec:ExperimentalSetup}
We carried out extensive experiments to answer the research questions summarised below.

\medskip
\noindent
\textbf{RQ1 (Efficiency): Does \approach\ model fragmentation improve the efficiency of parametric model checking?} 
We assess if our \approach\ approach speeds up parametric model checking in comparison to PRISM~\cite{prism} and Storm~\cite{storm}, and whether it can handle pDTMCs that cannot be analysed by the two leading model checkers (with a 60-minute timeout). 

\medskip
\noindent
\textbf{RQ2 (Result complexity):  Does \approach\ reduce the complexity of the closed-form formulae generated by parametric model checking?} We assess whether the \approach-computed algebraic formulae are simpler (in terms of number of arithmetic operations) than those computed by the leading model checkers, and whether they can be evaluated faster than those produced by Storm.

\medskip
\noindent
\textbf{RQ3 (Configurability): How does the fragmentation threshold $\alpha$ affect the results of \approach?} 
We examine how different fragmentation threshold values affect \approach\ in terms of the number of operations from the computed closed-form formulae, and execution time for producing those formulae. 
\begin{table*}
    \sffamily
    \centering
     \caption{Key characteristics of the systems and pDTMC models used for the \approach\ evaluation}
     \label{Table:modelSummary}
    \begin{tabular}{p{4.9cm} p{3.5cm} p{3.8cm} p{4cm}}
    \toprule
    {} & \textbf{FX system} & \textbf{PL system} & \textbf{COM process} \\ \midrule

    \textbf{Application domain}&  Financial  & Vending machine controller & Communication protocol \\ \midrule
    \textbf{System type}&  Service-based system  & Software product line & Middleware \\ \midrule
    \textbf{Number of model variants} &  21  & 40 & 1  \\
    \textbf{Number of states} & 11--208 & 92--115 & 234 \\
    \textbf{Number of transitions} & 22--399 & 167--198 & 444 \\ 
    \textbf{Number of model parameters} & 11--71 & 10--198 & 20 \\ \midrule

    \multirow{3}{*}{\textbf{Analysed properties}} & Reachability \newline  Reward $\times$ 2 \newline Unbounded until & Reachability \newline   Unbounded until & Reachability $\times$ 21 \newline Reward \\ \midrule
    
    \textbf{Sample reachability property} & $\mathcal{P}_{=?}[\mathrm{F}\; \mathsf{successFX}]$  & $\mathcal{P}_{=?}[\mathrm{F}\; \mathsf{SUCCESS}]$ & $\mathcal{P}_{=?}[\mathrm{F}\; \neg\mathsf{a0}\wedge\neg\mathsf{a1}\wedge\ldots\wedge\neg\mathsf{a19}]$ \\ 
    
    \textbf{Sample reachability reward property} & $\mathcal{R}^\mathit{time}_{=?}[\mathrm{F}\; \mathsf{successFX} \vee \mathsf{failFX}]$ & -- & $\mathcal{R}^\mathit{coin\_flips}_{=?}[\mathrm{F}\; \mathsf{b} ]$ \\ 
    
    \textbf{Sample unbounded until property} & $\mathcal{P}_{=?}[\neg\mathsf{alarm} \; \mathrm{U} \; \mathsf{successFX}]$                & $\mathcal{P}_{=?}[\neg \mathsf{Function1} \; \mathrm{U} \; \mathsf{SUCCESS}]$            & -- \\ \bottomrule
    \end{tabular}
\end{table*}

\medskip
Three software systems and processes taken from related research~\cite{RaduePMC,ghezzi2013model,hajnal2019data,classen2010model,Gerasimou2015:ASE} and belonging to different application domains were used in our evaluation. These systems were selected because their Markov models contain: 
\begin{itemize}
    \item[(i)] multiple transition probabilities that can be meaningfully specified as functions over a set of system parameters;  
    \item[(ii)] configuration variables that can be instantiated to obtain pDTMCs of different sizes (i.e., with a broad range of state and transition numbers). 
\end{itemize}
The three systems are described below, and the significant differences between their key characteristics are summarised in Table~\ref{Table:modelSummary}. The full details of these models are available in~\cite{RaduePMC,ghezzi2013model,hajnal2019data,classen2010model,Gerasimou2015:ASE}.

\medskip
\noindent
\textbf{FX system.} 
We introduced this system in Section~\ref{sec:example}, and its pDTMC corresponding to the sequential execution strategy with retry (SEQ\_R) in Fig.~\ref{fig:fxmodel} and~\ref{fig:fxExample}.
For the \approach\ evaluation in this section, we also considered the additional strategies below (applied to between one and five functionally equivalent service implementations per FX operation):
\begin{itemize}
    \item SEQ---the services are invoked in order, stopping after the first successful invocation or after the last service fails;
    \item PAR---all services are invoked in parallel (i.e., simultaneously), and the operation uses the result returned by the first service terminating successfully;
    \item PROB---a probabilistic selection is made among the available services;
    \item PROB\_R---similar to PROB, but if the selected service fails, it is retried with a given probability (as in SEQ\_R).
\end{itemize}

\noindent
\textbf{PL system.} 
We used a pDTMC model of a product line (PL) system taken from~\cite{ghezzi2013model,classen2010model}. This pDTMC models the software controller of a vending machine that dispenses a user-selected beverage and, if applicable, takes payment from and gives back change to the user. 

The possible features of this system comprise: the beverage type (soda, tea, or both), the payment mode (cash or free), and the taste preference (e.g., add lemon or sugar). This variability enables the derivation of vending machines---and the specification of associated pDTMCs---with between four and 22 features. 

\medskip
\noindent
\textbf{COM process.} 
We considered a communication (COM) process among $n\geq 2$ agents taken from~\cite{hajnal2019data}, and inspired by the way in which honeybees emit an alarm pheromone to recruit workers and protect their colonies from intruders.  
Given the self-destructive defence behaviour in social insects (the recruited workers die after completing their defence actions), a balance between efficient defence and preservation of a critical mass of workers is required.  The induced pDTMC is a stochastic population model with $n$ parameters. The quantitative analysis of such stochastic models of multi-agent systems is often challenging because the dependencies among the agents within the population make the models complex. 

\medskip
Throughout the evaluation, \approach\ is compared to the leading PMC model checkers PRISM (version 4.6) and Storm (version 1.5.1), both with their default settings. All experiments were performed on a MacBook Pro with 2.7GHz dual Core Intel i5 processor and 8GB RAM, using a timeout of 60~minutes. For a fair comparison, we ensured that both PRISM and Storm can successfully process at least the simplest pDTMC of each systems. The following experimental data were collected:
\begin{enumerate}
\item the time required to compute the PMC formulae; 
\item the number of arithmetic operations in the PMC formulae;\footnote{PRISM and Storm produce a single PMC formula per property, whereas \approach\ yields a set of formulae per property (cf.~Figure~\ref{fig:approach}).} 
\item the time required to evaluate the PMC formulae (in MATLAB) for a parameter valuation. 
\end{enumerate}

To ensure the correctness of our \approach\ tool, we evaluated the \approach\ formulae produced for each analysed property using randomly generated combinations of parameter values, and confirmed that the resulting property value matched that produced by PRISM and Storm (subject to negligible rounding errors). For the purpose of this check, the PRISM and Storm results were obtained by running the probabilistic model checking on the non-parametric DTMC obtained by replacing the pDTMC parameters with the relevant combination of parameter values. While this is not a formal proof that the \approach\ tool was implemented correctly, we note that even small alterations of the non-trivial formulae generated by the tool yield noticeable changes in the evaluation results, so this random testing strongly suggests that the \approach\ tool operates correctly.

The source code of our \approach\ tool, the models, properties and results for all the experiments presented in the paper, as well as additional materials supporting the adoption of \approach\ are available on our project website at~\url{https://www.cs.york.ac.uk/tasp/fPMC}.

\subsection{Results and discussion}
\label{R&D}
\subsubsection{RQ1 (Efficiency)}
\label{sec:Efficiency}

\begin{table*}
\sffamily
\centering
\caption{
Parametric model checking times (in seconds, `--' indicates a timeout) for the 21~FX pDTMC variants and their four properties; \textsf{STG, \#SVC, \#S} and \textsf{\#T} represent the strategy used to invoke multiple functionally equivalent services for an FX operation, the number of such services, and the numbers of pDTMC states and transitions, respectively.}
\label{Table:FXTime}
\sffamily
\def\tabcolsep{2.2pt}
\begin{tabular}{rcrrr @{\hspace*{7mm}} rrr @{\hspace*{7mm}} rrr @{\hspace*{7mm}} rrr @{\hspace*{7mm}} rrr}
\toprule
\multicolumn{5}{c}{\hspace*{-7mm}\textbf{pDTMC variant}} & \multicolumn{3}{c}{\hspace*{-7mm}\textbf{P1 PMC time}} & \multicolumn{3}{c}{\hspace*{-5mm}\textbf{P2 PMC time}} & \multicolumn{3}{c}{\hspace*{-7mm}\textbf{P3 PMC time}} & \multicolumn{3}{c}{\textbf{P4 PMC time}} \\ \cmidrule(l{1pt}r{17pt}){1-5} \cmidrule(l{-1pt}r{17pt}){6-8} \cmidrule(l{1pt}r{17pt}){9-11} \cmidrule(l{1pt}r{17pt}){12-14} \cmidrule(l{0pt}r{2pt}){15-17}

\textbf{ID} & \textbf{STG} & \textbf{\#SVC} &  \textbf{\#S} &  \textbf{\#T} & 
\textbf{fPMC} &  \textbf{Storm} & \textbf{PRISM} &  
\textbf{fPMC} &  \textbf{Storm} & \textbf{PRISM} &  
\textbf{fPMC} &  \textbf{Storm} & \textbf{PRISM} &  
\textbf{fPMC} &  \textbf{Storm} & \textbf{PRISM}  \\ \midrule
 
 1 & --                  & 1 & 11 & 22 & 2.9 & 0.01 & 0.4 & 3.9 & 0.03 & 0.8 & 2.9 & 0.01 & 0.6 & 3.7 & 0.02 & 0.8  \\ \midrule
2 & \multirow{4}{*}{SEQ} 
                      & 2 & 17 & 34 & 3.5 & 2.5 & 57.1 & 4.7 & 2.6 & 5.7 & 3.4 & 0.5 & 17.5 & 4.5 & 2.5 & 5.6 \\
3&                      & 3 & 23 & 46 & 3.8 & -- & -- & 6.9 & -- & -- & 3.5 & 1127.5 & --  & 6.8 & -- & -- \\
4&                      & 4 & 29 & 58 & 5.0 & -- & -- & 8.6 & -- & -- & 5.0 & -- & -- & 9.0 & -- & -- \\
5&                      & 5 & 35 & 70 & 5.7 & -- & -- & 11.8 & -- & -- & 5.2 & -- & -- & 11.1 & -- & -- \\ \midrule
6& \multirow{4}{*}{PAR} 
                     & 2 & 40 & 36 & 5.6 & 1.8 & 2.4 & 6.2 & 3.3 & 3.5 & 5.0 & 0.6 & 2.0 & 6.4 & 1.6 & 3.4 \\
7&                 & 3 & 64 & 111 & 6.8 & -- & -- & 7.5 & -- & -- & 6.4 & -- & -- & 9.2 & -- & -- \\
8&                     & 4 & 112 & 207 & 14.3 & -- & -- & 21.9 & -- & -- & 14.7 & -- & -- & 32.5 & -- & -- \\
9&                     & 5 & 208 & 399 & 31.2 & -- & -- & 45.4 & -- & -- &  26.6 & -- & -- & 78.0 & -- & -- \\ \midrule
10&\multirow{4}{*}{PROB} 
                      & 2 & 23 & 46 & 3.4 & 0.4 & 10.0 & 5.8 & 0.6 & -- & 4.5 & 0.2 & 5.0 & 6.1 & 0.6 & -- \\
11&                      & 3 & 29 & 64 & 5.2 & 3.4 & -- & 8.6 & 4.9 & -- & 5.2 & 1.1 & -- & 7.6 & 5.0 & -- \\
12&                      & 4 & 35 & 82 & 6.7 & 27.0 & -- & 11.7 & 46.0 & -- & 6.7 & 10.0 & -- & 11.4 & 48.6 & -- \\
13&                      & 5 & 65 & 130 & 8.2 & 592.8 & -- & 16.5 & 611.3 & -- & 8.1 & 153.3 & -- & 16.3 & 580.3 & -- \\ \midrule
14&\multirow{4}{*}{SEQ\_R} 
                        & 2 & 29 & 58 & 7.3 & -- & -- & 8.4 & -- & -- & 5.1 & -- &  -- & 7.8 & -- & -- \\                      
15&                        & 3 & 41 & 28 & 18.8 & -- & -- & 35.0 & -- & -- & 18.9 & --  & -- & 35.2 & -- & -- \\
16&                        & 4 & 53 & 106 & 85.7 & -- & -- & 152.3 & -- & -- &  49.7 & -- & -- & 159.0 & -- & -- \\
17&                        & 5 & 65 & 130 & 496.1 & -- & -- & 1305.4 & -- & --  &  331.0 & -- & -- & 1335.9 & -- & --\\ \midrule
18& \multirow{4}{*}{PROB\_R} 
                         & 2 & 29 & 58 & 8.1 & 34.0 & 54.8 &  8.8 & -- & -- & 4.1 & 3.2 & -- &  8.3 & -- & -- \\
19&                         & 3 & 35 & 75 & 17.0 & -- & -- & 23.6 & -- & -- &  12.6 & -- & -- &  23.2 & -- & -- \\
20&                         & 4 & 41 & 93 & 65.3 & -- & -- & 79.5 & -- & -- &  52.2 & -- & -- &  80.4 & -- & -- \\
21&                         & 5 & 47 & 111 & 199.7 & -- & -- & 243.5 & -- & -- &  171.1 & -- & -- & 244.3 & -- & -- \\ \bottomrule
\end{tabular}
%}
%\egroup
%\end{huge}
\end{table*}

\textbf{FX system.} We used \approach, PRISM and Storm to analyse pDTMC models corresponding to 21~variants of the FX system. The first of these system variants used a single service for each FX operation, and the remaining variants used one of the five execution strategies (i.e., SEQ, SEQ\_R, PAR, PROB or PROB\_R) and between two and five functionally equivalent services for each FX operation. For each of the 21~pDTMCs, four properties were analysed: the properties \textsf{P1}, \textsf{P2} and \textsf{P3} from Table~\ref{Propsummary}, and an additional reachability reward property (property \textsf{P4}) used to establish the expected cost of executing the FX workflow. 

The parametric model checking times for these experiments are presented in Table~\ref{Table:FXTime}. These results show that \approach\ successfully computed all PMC formulae well ahead of the 60-minute timeout for all four properties. It took \approach\ just $2.9s$ to analyse properties \textsf{P1} and \textsf{P3} for the simplest pDTMC variant (ID~1), and under $600s$ for analysing each of the four properties for most of the other models. Only the analyses of properties \textsf{P2} and \textsf{P4} for the pDTMC variant with ID~17 required more time, i.e., $1305.4$s and $1335.9$s, respectively. In contrast, Storm only completed the analysis for 31 of the $21\times 4 = 84$ model--property combinations before the 60-minute timeout. These combinations correspond to the simplest pDTMC variants (which Storm analysed slightly faster than \approach) across all execution strategies except the PROB strategy. For this strategy, Storm produced PMC formulae for all model-property combinations, but with an execution time that increased very quickly over the \approach\ time for the more complex PROB pDTMC variants with four and five functionally equivalent services per FX operation (i.e., the pDTMC variants with IDs 12 and 13 in the table). PRISM completed the analysis for even fewer model--property combinations: only 15 of the 84 PMC analyses returned results within 60~minutes. These results correspond again to the simplest pDTMC variants. 

\begin{table}
\sffamily
\centering
\caption{
Parametric model checking times (in seconds, `--' indicates a timeout) for the 40 PL pDTMC variants and their two properties; \textbf{\#F} and \textbf{\%PAR} represent the number of features in the model, and the percentage of parametric transitions, respectively.}
\label{Table:PLTime}
\sffamily
\def\tabcolsep{3.9pt}
\begin{tabular}{rrrrrr @{\hspace*{2mm}} rrr }
\toprule
\multicolumn{3}{c}{\hspace*{-1mm}\textbf{pDTMC variant}$^\dagger$} & \multicolumn{3}{c}{\textbf{Reachability}} & \multicolumn{3}{c}{\textbf{Unbounded until}} \\ \cmidrule(l{1pt}r{5pt}){1-3} \cmidrule(l{-1pt}r{5pt}){4-6} \cmidrule(l{0pt}r{1pt}){7-9}

\textbf{ID} & \textbf{\#F} & \textbf{\%PAR} & \textbf{fPMC} &  \textbf{Storm} & \textbf{PRISM} &  
\textbf{fPMC} &  \textbf{Storm} & \textbf{PRISM}  \\ \midrule
1 & \multirow{10}{*}{4} & 10 & 48.4 & 0.01 & 0.2 & 23.8 & 0.007 & 0.3  \\ 
2 & & 20 & 56.6 & 0.2 & 0.3 & 26.9 & 0.1 & 0.6 \\
3 & & 30 & 55.2 & 3.1 & 1.4 & 27.4 & 3.1 & 1.5\\ 
4 & & 40 & 56.3 & 32.0 & 5.1 & 29.3 & 22.6 & 5.3\\ 
5 & & 50 & 73.5 & 1671.1 & 105.5 & 37.4 & 711.6 & 56.8  \\ 
6 & & 60 & 81.8 & -- & -- & 42.2 & -- & -- \\ 
7 & & 70 & 86.5 & -- & -- & 51.8 & -- & -- \\ 
8 & & 80 & 92.2 & -- & -- & 50.2 & -- & -- \\ 
9 & & 90 & 93.0 & -- & -- & 52.1 & -- & -- \\ 
10 & & 100 & 93.1 & -- & -- & 51.2 & -- & -- \\ \midrule
11 & \multirow{10}{*}{16} & 10 & 46.8 & 0.3 & 14.9 & 16.3 & 0.1 & 5.1  \\ 
12 & & 20 & 50.1 & 15.7 & -- & 16.2 & 6.3 & -- \\ 
13 & & 30 & 19.9 & 23.4 & -- & 16.5 & 3.4 & -- \\ 
14 & & 40 & 19.8 & 31.6 & -- & 16.3 & 3.8 & -- \\ 
15 & & 50 & 22.6 & 164.6 & -- & 17.3 & 15.7 & -- \\ 
16 & & 60 & 22.4 & 164.9 & -- & 17.7 & 92.3 & -- \\ 
17 & & 70 & 23.1 & -- & -- & 17.8 & -- & -- \\ 
18 & & 80 & 24.0 & -- & -- & 19.1 & -- & --  \\ 
19 & & 90 & 26.4 & -- & -- & 18.9 & -- & --  \\ 
20 & & 100 & 23.8 & -- & -- & 18.9 & -- & --   \\ \midrule
21 & \multirow{10}{*}{18} & 10 &  14.0 & 0.04 & 0.9 & 13.5 & 0.04 & 0.3 \\ 
22 & & 20 & 13.9 & 0.1 & 1.1 & 13.3 & 0.1 & 0.5  \\ 
23 & & 30 & 14.1 & 0.5 & 2.4 & 13.3 & 0.5 & 1.5  \\ 
24 & & 40 & 13.9 & 18.7 & 41.6 & 13.5 & 16.4 & 27.6 \\ 
25 & & 50 & 14.9 & 911.3 & 86.1 & 14.2 & 840.6 & 69.8 \\ 
26 & & 60 & 14.3 & -- & 188.9 & 14.1 & -- & 78.5 \\ 
27 & & 70 & 14.6 & -- & -- & 14.1 & -- & -- \\ 
28 & & 80 & 17.7 & -- & -- & 15.3 & -- & -- \\ 
29 & & 90 & 17.4 & -- & -- & 16.1 & -- & -- \\ 
30 & & 100 & 17.1 & -- & -- & 15.6 & -- & -- \\ \midrule
31 & \multirow{10}{*}{22} & 10 & 38.6 & 0.03 & 1.8 & 37.8 & 0.035 & 0.5  \\ 
32 & & 20 & 45.9 & 3.5 & 53.2 & 41.1 & 3.3 & 23.4 \\ 
33 & & 30 & 42.4 & 42.3 & -- & 41.5 & 39.3 & -- \\ 
34 & & 40 & 42.4 & 793.4 & -- & 41.2 & 569.3 & -- \\ 
35 & & 50 & 42.3 & -- & -- & 42.1 & -- & -- \\ 
36 & & 60 & 42.5 & -- & -- & 42.0 & -- & -- \\ 
37 & & 70 & 43.0 & -- & -- & 43.3 & -- & -- \\ 
38 & & 80 & 43.3 & -- & -- & 43.5 & -- & -- \\ 
39 & & 90 & 48.4 & -- & -- & 48.9 & -- & -- \\ 
40 & & 100 & 48.6 & -- & -- & 49.2 & -- & -- \\\bottomrule
\\[-2.5mm]
\multicolumn{9}{l}{$^\dagger$pDTMCs sizes: four-feature models = 92 states, 167 transitions}\\
\multicolumn{9}{l}{\hspace*{2.1cm}16-feature models = 110 states, 193 transitions}\\
\multicolumn{9}{l}{\hspace*{2.1cm}18-feature models = 104 states, 183 transitions}\\
\multicolumn{9}{l}{\hspace*{2.1cm}22-feature models = 115 states, 198 transitions} 
\end{tabular}
\end{table}

\medskip
\noindent
\textbf{PL system.} We used \approach, PRISM and Storm to analyse pDTMC models corresponding to system configurations with four, 16, 18 and 22 software product line features, and with increasing numbers of parameters. To that end, we used different parameters for 10\%, 20\%, \ldots, 100\% of the transition probabilities of the models for the four system configurations, obtaining $4\times 10=40$ pDTMC variants. The reachability and unbounded until properties from Table~\ref{Table:modelSummary} were analysed for each of these pDTMC variants, and the time taken by these analyses are reported in Table~\ref{Table:PLTime}.

The results are similar to those obtained for the FX system. \approach\ produced all PMC formulae successfully in between 13.3 and 93.1 seconds, while Storm and PRISM completed only 50\% and 35\% of the PMC analyses, respectively, before the 60-minute timeout. The two existing model checkers could analyse the pDTMCs with low numbers of parameters, and performed their PMC faster than \approach\ for pDTMCs with the fewest parameters but increasingly slower than \approach\ for pDTMCs with more than approximately 40\% of their transition probabilities specified as parameters.

\begin{table}
\sffamily
\centering
\caption{
Parametric model checking times (in seconds, \mbox{`--'} indicates a timeout) for the COM model and its 22~properties}
\label{Table:COMTime}
\sffamily
\def\tabcolsep{10pt}
\begin{tabular}{rr @{\hspace*{6mm}} rrr}
\toprule
\multicolumn{2}{c}{\textbf{Property}} & \multicolumn{3}{c}{\textbf{PMC time}} \\ \cmidrule(l{1pt}r{7pt}){1-2} \cmidrule(l{-1pt}r{1pt}){3-5}
\textbf{ID} & \textbf{Type} &  
\textbf{fPMC} &  \textbf{Storm} & \textbf{PRISM} \\ \midrule
1 & \multirow{21}{*}{\rotatebox[origin=c]{90}{\parbox[c]{20mm}{\centering Reachability}}} 
& 12.8 & $<$1ms & 0.7 \\
2 && 12.4 & 0.001 & 0.6\\
3 && 11.8 & 0.004 & 0.8\\
4 && 12.0 & 0.004 & 0.4\\ 
5 && 12.9 & 0.03 & 1.2\\
6 && 12.9 & 0.01 & 0.6\\
7 && 12.5 & 0.4 & 3.0\\
8 && 12.7 & 0.1 & 0.9\\
9 && 12.6 & 0.4 & 4.8\\
10 && 12.8 & 0.5 & 1.9\\
11 && 11.6 & 3.8 & 3.9\\
12 && 11.9 & 8.4 & 9.4\\
13 && 11.8 & 1.7 & --\\
14 && 12.4 & 15.0& --\\
15 && 11.5 & 18.8 & --\\
16 && 12.5 & 94.3 & --\\
17 && 11.6 & 73.8 & --\\
18 && 11.6 & 316.6 & --\\
19 && 11.7 & 130.9 & --\\
20 && 11.3 & 39.2 & --\\
21 && 11.7 & 22.4 & --\\\midrule
22 & Reward &  41.7 & -- & --\\\bottomrule
\end{tabular}
\end{table}

\medskip
\noindent
\textbf{COM process.} As indicated in Table~\ref{Table:modelSummary}, a single pDTMC variant (with 234 states and 444 transitions) was analysed for the COM process. A number of 21~reachability properties and one reachability reward property (taken from~\cite{hajnal2019data}) were considered, and the times required to complete their parametric model checking are reported in Table~\ref{Table:COMTime}. 

Once more, \approach\ was the only approach that successfully analysed all properties. Storm completed the analysis of reachability properties 1--13 much faster than \approach, but took significantly longer to analyse the reachability properties 14--21, and timed out for the reachability reward property. PRISM completed the fewest analyses (12 out of 22) but produced the PMC formulae for these approximately $55\%$ of the properties faster than \approach. 

\medskip
\noindent
\textbf{Discussion.} \approach\ outperforms both Storm and PRISM in its ability to handle complex pDTMC with large numbers of parameters. In many of our PMC experiments with such models, fPMC completed its analysis within a few tens of seconds, while the other model checkers timed out after 3600 seconds. Thus, our approach often speed up the analysis for two or more orders of magnitude. Furthermore, the increase in the \approach\ analysis time as the models became more complex was consistently much slower than the increase of the analysis time for the other model checkers for the FX and PL systems, and the \approach\ PMC time was not affected much by the analysed property for the COM process. 

For some of the simpler pDTMC variants (in the case of the FX and PL systems) or properties (in the case of the COM process), Storm and, only occasionally, PRISM completed the analysis faster than \approach. These results are expected for the models and properties that Storm and PRISM can handle because the two leading model checkers use highly efficient internal representations (e.g., sparse matrices, binary decision diagrams) for DTMCs and sophisticated algorithms for their analysis. In contrast, \approach\ needs to perform fragmentation before leveraging the same functionality (by using Storm) for the resulting fragments and the abstract model induced by these fragments (cf.~Figure~\ref{fig:approach}). 

\begin{figure}
	\centering
	\includegraphics[width=0.95\linewidth]{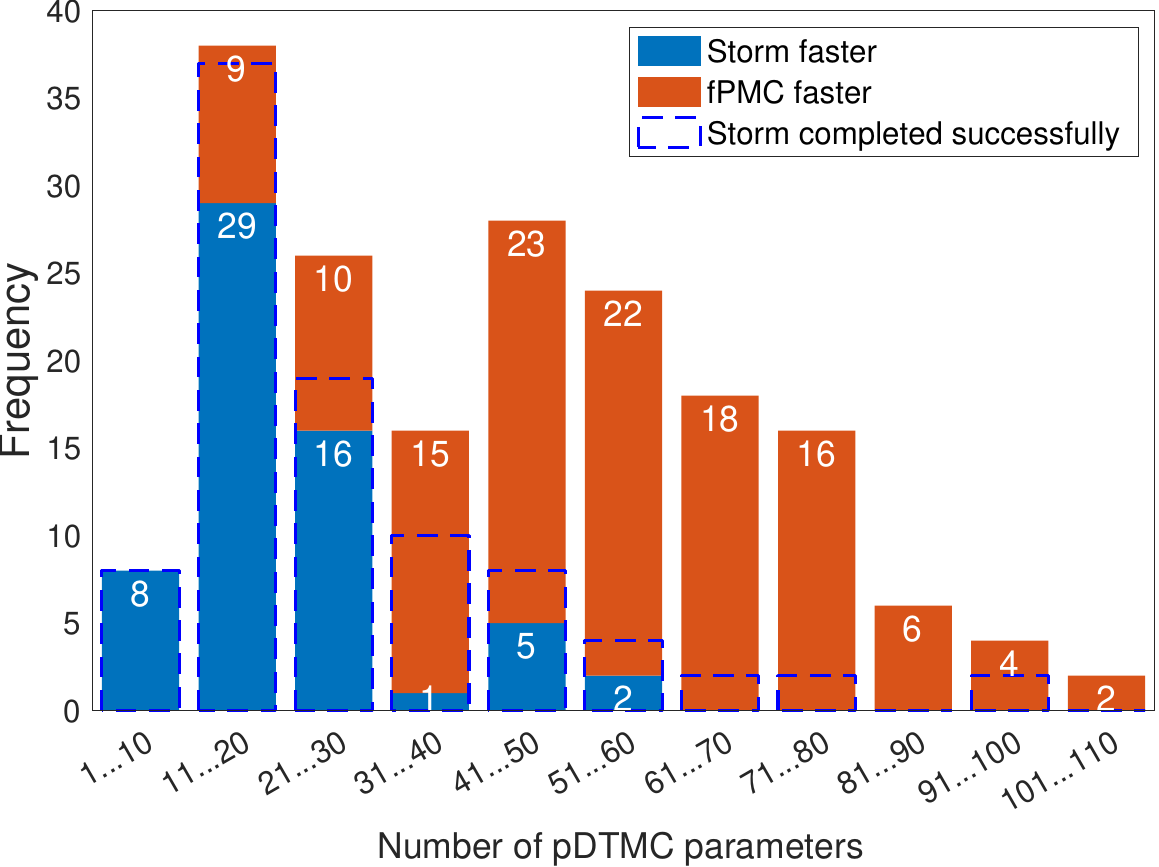}
		\caption{pDTMCs analyses completed faster by Storm and by \approach\ for models with different numbers of parameters. The dashed boxes show the total numbers of analyses completed by Storm within 60 minutes (\approach\ completed all analyses).}
    \label{fig:fxExampleheri}
\end{figure}

\begin{figure*}
     \centering
     \begin{subfigure}[b]{0.245\textwidth}
         \centering
         \includegraphics[width=\textwidth,height=4.2cm]{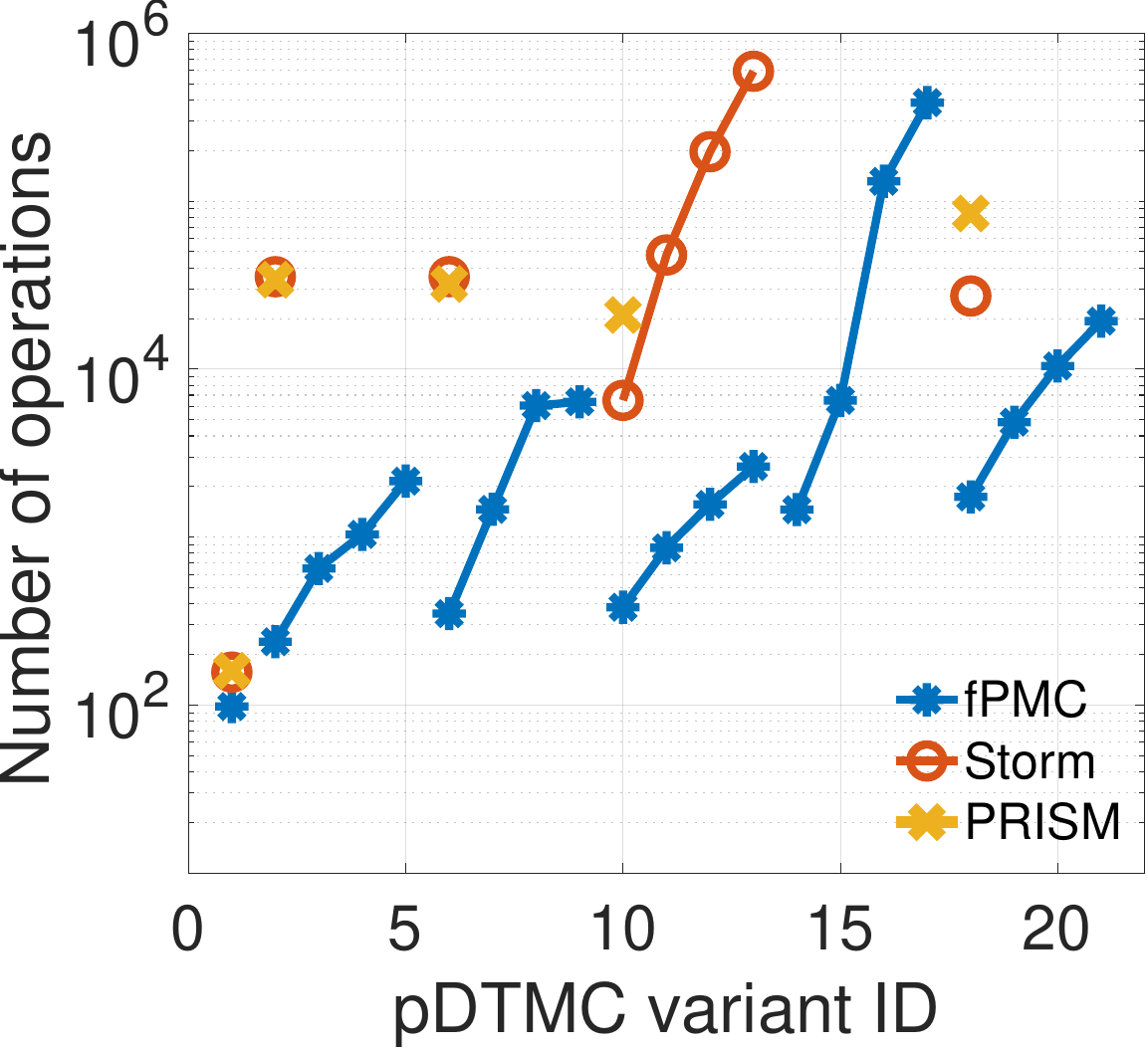}
         \caption{property \textsf{P1}}
         \label{fig:p1_fx}
     \end{subfigure}
     \hfill
     \begin{subfigure}[b]{0.245\textwidth}
         \centering
         \includegraphics[width=\textwidth,height=4.2cm]{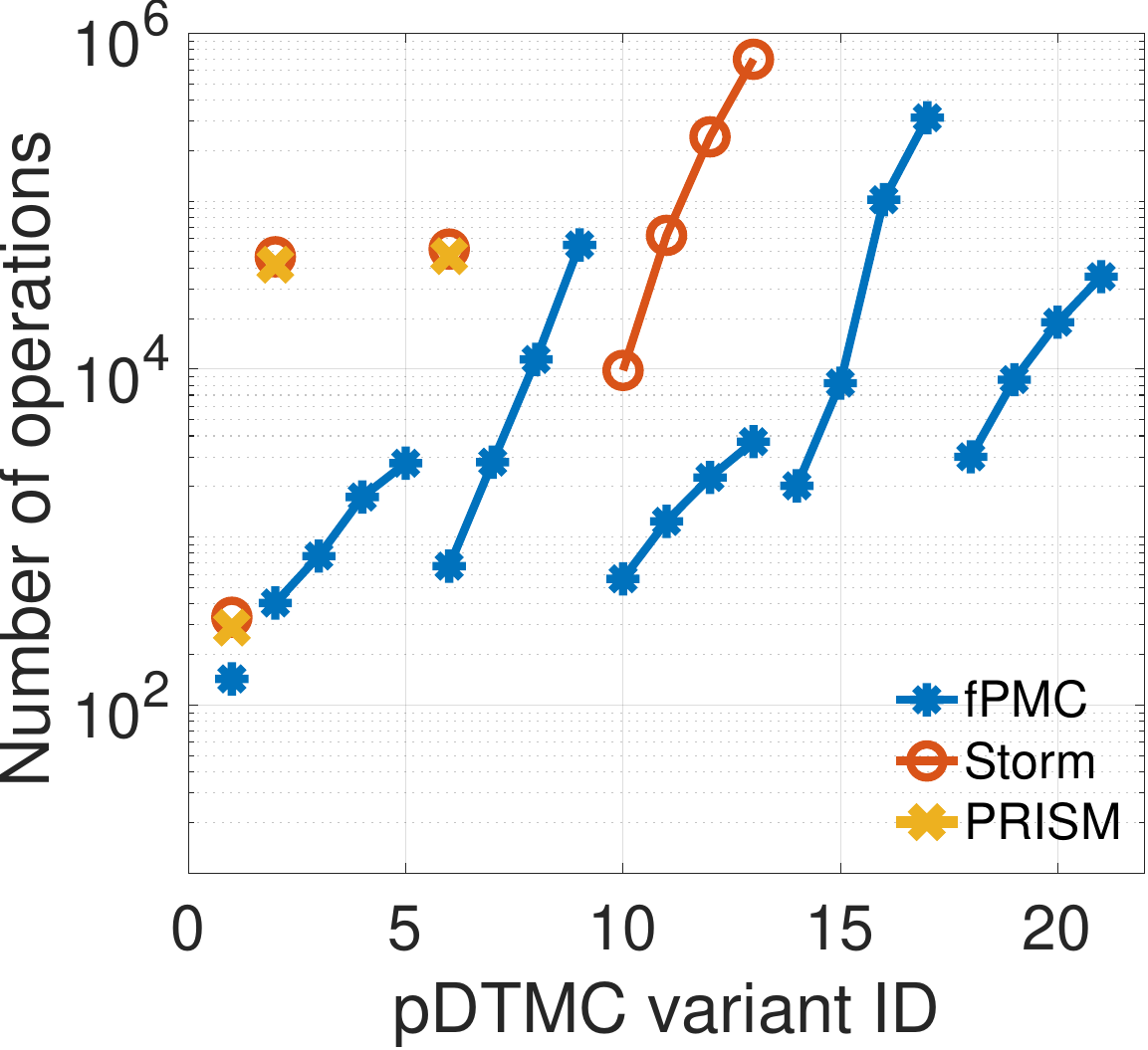}
         \caption{property \textsf{P2}}
         \label{fig:p2_fx}
     \end{subfigure}
     \hfill
     \begin{subfigure}[b]{0.245\textwidth}
         \centering
         \includegraphics[width=\textwidth,height=4.1cm]{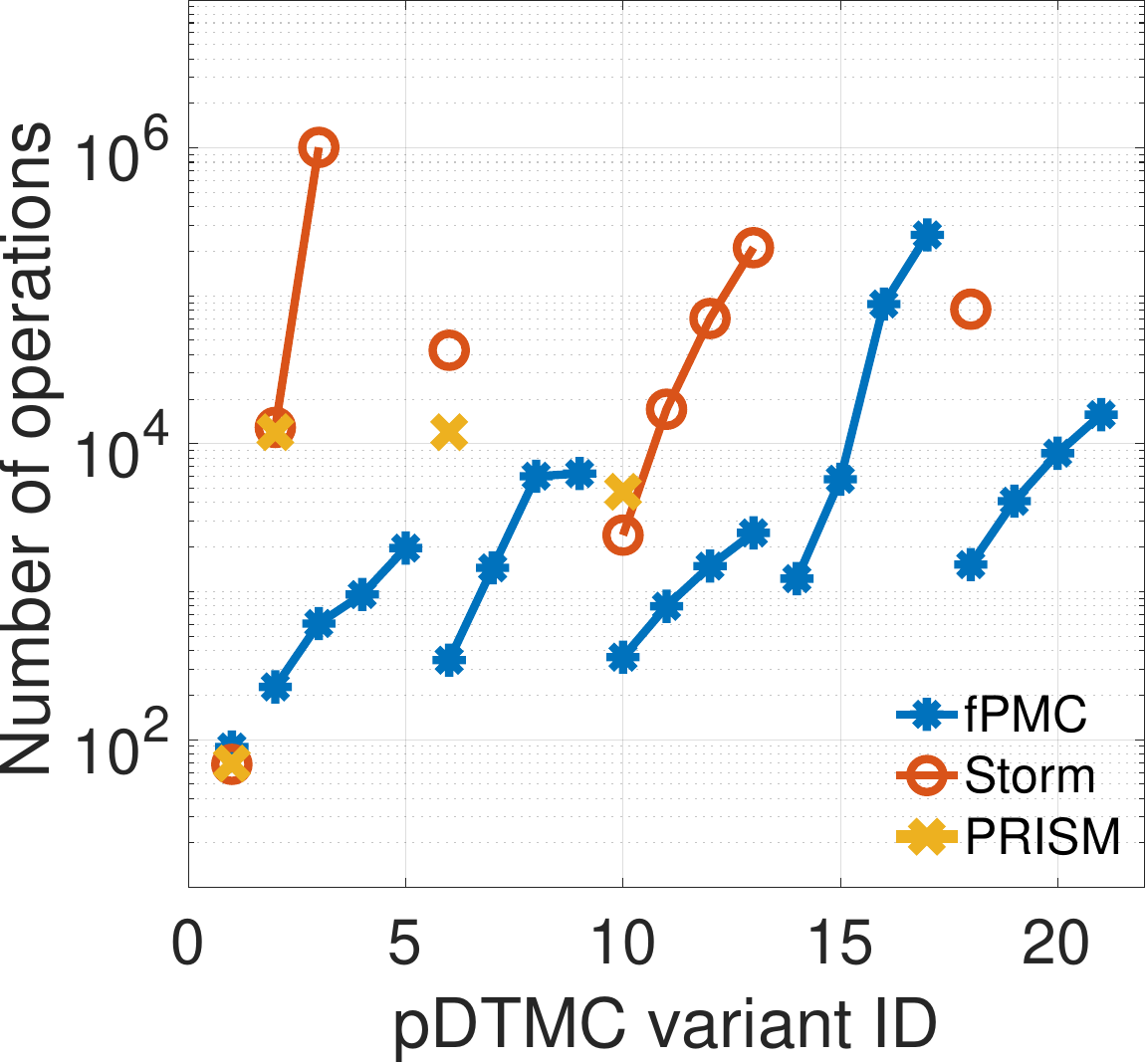}
         \caption{property \textsf{P3}}
         \label{fig:p3_fx}
    \end{subfigure}
    \hfill
     \begin{subfigure}[b]{0.245\textwidth}
         \centering
         \includegraphics[width=\textwidth,height=4.2cm]{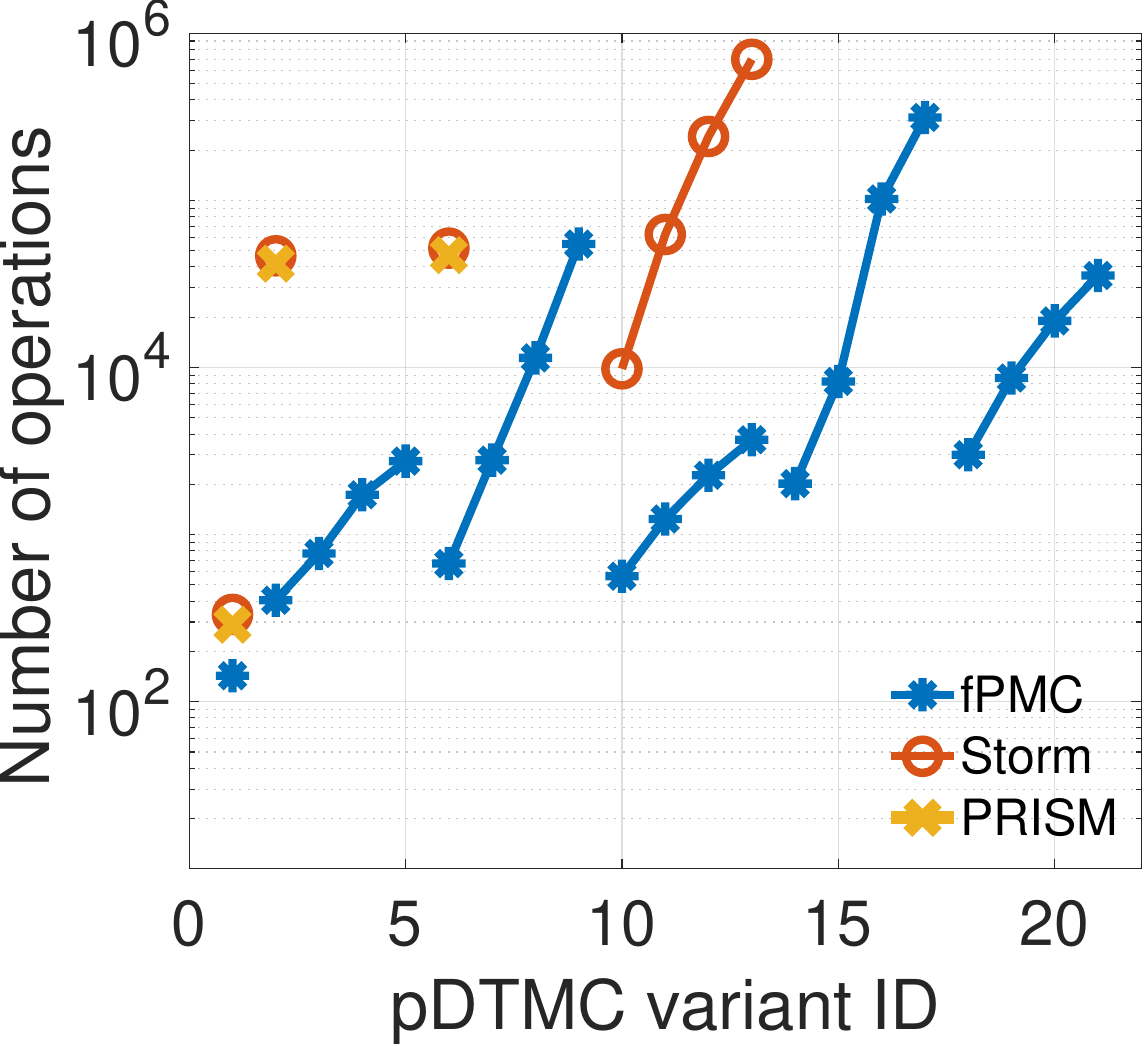}
         \caption{property \textsf{P4}}
         \label{fig:P4_fx}
     \end{subfigure}
        \caption{Number of operations in the PMC formulae for the FX pDTMC variants and properties from Table~\ref{Table:FXTime}, with the values corresponding to the same service combination strategy (SEQ, PAR, etc.) joined by continuous lines to improve readability}
        \label{fig:fx_op}
\end{figure*}

\begin{table*}
\sffamily
\centering
\def\tabcolsep{10pt}
\caption{MATLAB evaluation time for the FX PMC formulae (in seconds)}\label{MatlabFX}
\begin{tabular}{rccrrrrrrrr}
\toprule
\multicolumn{3}{c}{\textbf{pDTMC variant}} & \multicolumn{2}{c}{\textbf{P1}} & \multicolumn{2}{c}{\textbf{P2}} & \multicolumn{2}{c}{\textbf{P3}} & \multicolumn{2}{c}{\textbf{P4}} \\\cmidrule(lr){1-3}
\cmidrule(lr){4-5} \cmidrule(lr){6-7}\cmidrule(lr){8-9}\cmidrule(lr){10-11}
\textbf{ID}&\textbf{STG} & \textbf{\#SRV} & \textbf{fPMC} & \textbf{Storm} & \textbf{fPMC}      & \textbf{Storm}       & \textbf{fPMC}      & \textbf{Storm}       & \textbf{fPMC} & \textbf{Storm}\\
 \midrule
 1& -- & 1 & 0.001 & 0.004  & 0.001 & 0.002 & 0.001 & 0.002 & 0.001 & $<$ 1ms \\
 \midrule
 2 & SEQ & 2 & 0.002 & 3.5 & 0.005 & 8.2 & 0.002 & 7.9 & 0.001 & 0.3 \\ 
 \midrule
 6 & PAR & 2 & 0.006 & 3.5 & 0.006 & 8.9 & 0.002 & 2.4 & 0.004 & 0.3 \\
  \midrule
10 & \multirow{4}{*}{PROB} & 2 & 0.001 & 0.06 & 0.004 & 0.1 & 0.004 & 0.1 & 0.008 & 0.01\\
%  \cmidrule(lr){2-10}
11& & 3  & 0.002 & 5.4 & 0.006 & 10.9  & 0.007 & 11.2 & 0.003 & 0.4\\
%  \cmidrule(lr){2-10}
12& & 4  & 0.007 & 156.2  & 0.01 & 271.9  & 0.008 & 267.2  & 0.008 & 12.8 \\
%  \cmidrule(lr){2-10}
13& & 5 & 0.013 & 4245.1  & 0.016 & 7507.1  & 0.016 & 7499.1  & 0.008 & 142.8 \\ 
\midrule
18 & PROB\_R & 2 & 0.006 & 17.2  & 0.01 & --$^\dagger$ & 0.01  & --$^\dagger$  & 0.007 & 0.09 \\
\bottomrule
\\[-2.5mm]
\multicolumn{11}{l}{$^\dagger$PMC formula unavailable for evaluation due to Storm analysis timeout}
\end{tabular}
\end{table*}

To exploit the capabilities of both \approach\ (which can efficiently analyse complex pDTMCs that other tools cannot handle) and Storm (which can efficiently analyse simple pDTMCs), our \approach\ tool employs the user-configurable threshold $\beta$ we mentioned in Section~\ref{sec:implementation}. For models with $\beta$ or more parameters, the tool performs the analysis by using \approach\ fragmentation, while for simple models with fewer than $\beta$ parameters Storm is used directly to perform monolithic PMC. To support the selection of a suitable value for the threshold $\beta$, we compared the execution times of \approach\ and Storm for pDTMC variants with different numbers of parameters from our evaluation. The result of this comparison is summarised by the histogram in Figure~\ref{fig:fxExampleheri}, which shows that most pDTMCs with up to 20 parameters were analysed faster by Storm, while almost all the pDTMCs with over 30 parameters were analysed faster by \approach. The two tools were each able to analyse faster a subset of the pDTMCs with numbers of parameters in the range $21\ldots30$, though Storm timed out before completing the analysis of several of these models. Further decomposing this set of pDTMCs into smaller ranges (e.g., between $21\ldots 25$ and $26\ldots 30$ parameters) does not yield a better separation into models handled faster by the two tools. As such, the comparison summarised in Figure~\ref{fig:fxExampleheri} suggests that setting the threshold $\beta$ to a value between $21$ and $30$ is likely to work well. While this rule of thumb may not always apply, we note that adopting it is never going to cause a problem: in borderline cases, one can easily analyse a pDTMC using both PMC tools.

\subsubsection{RQ2 (Result complexity)} 
\label{sssection:rq2}

For each of our three case studies and experiments presented in the previous section, we compared the number of arithmetic operations from the PMC formulae generated by \approach, Storm and PRISM, and the time required to evaluate the \approach\ and Storm formulae in MATLAB on the computer with the specification from Section~\ref{Sec:ExperimentalSetup}. We only considered Storm in the latter comparison because Storm completed significantly more PMC analyses than PRISM in our experiments (92 versus 55 out of  a total of 186 analyses). 

\medskip
\noindent
\textbf{FX system.} The sizes of the FX PMC formulae produced by \approach, Storm and PRISM are shown in Figure~\ref{fig:fx_op}. With one exception (for property \textsf{P3} of the pDTMC variant with ID 1), \approach\ generated formulae with significantly fewer operations than the other model checkers. This difference increases quickly for larger and more complex models, with an extreme case (for property \textsf{P1} of the pDTMC variant with ID 13) in which the formulae obtained by \approach\ contain over 225 times fewer operations than the Storm PMC formula (i.e., $2629$ versus $593426$ operations). 

The MATLAB evaluation times for the PMC formulae produced by Storm and \approach\ are reported (for the analyses completed by Storm) in Table~\ref{MatlabFX}. For the simplest models (e.g., pDTMC variants 1 and 10) all evaluations can be carried out within a few milliseconds. However, when the complexity of the model increases, the evaluation of the \approach\ formulae is significantly faster than that of the Storm formulae. This is particularly noticeable for the pDTMC variants corresponding to the PROB service-combination strategy, e.g., the evaluation of the \textsf{P2} property of pDTMC variant~13 took over 7500s when the Storm formula was used compared to only 16ms when the \approach\ formulae were used. Even for pDTMC variants for which Storm completed the analysis faster than \approach\ (e.g., those with IDs 2, 6, 10 and 11, cf.~Table~\ref{Table:FXTime}), the Storm PMC formulae are orders of magnitude larger than those computed by \approach, and therefore they require much longer time to evaluate. 

\begin{figure*}
     \centering
     \begin{subfigure}[b]{0.495\textwidth}
         \centering
         \includegraphics[width=0.9\textwidth,height =4cm]{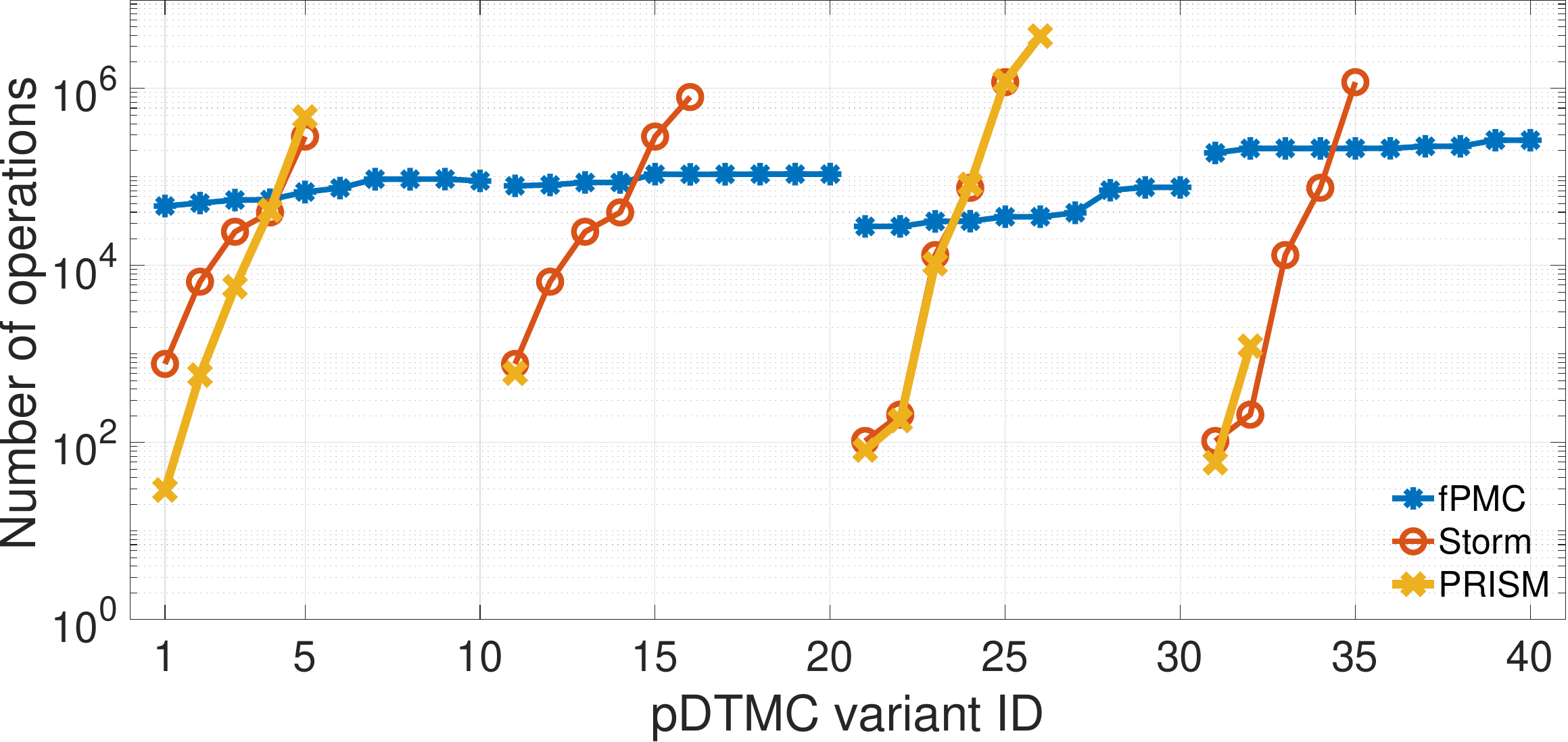}
         \caption{property reachability}
         \label{fig:p1_pl}
     \end{subfigure}
     \hfill
     \begin{subfigure}[b]{0.495\textwidth}
         \centering
         \includegraphics[width=0.9\textwidth,height =4cm]{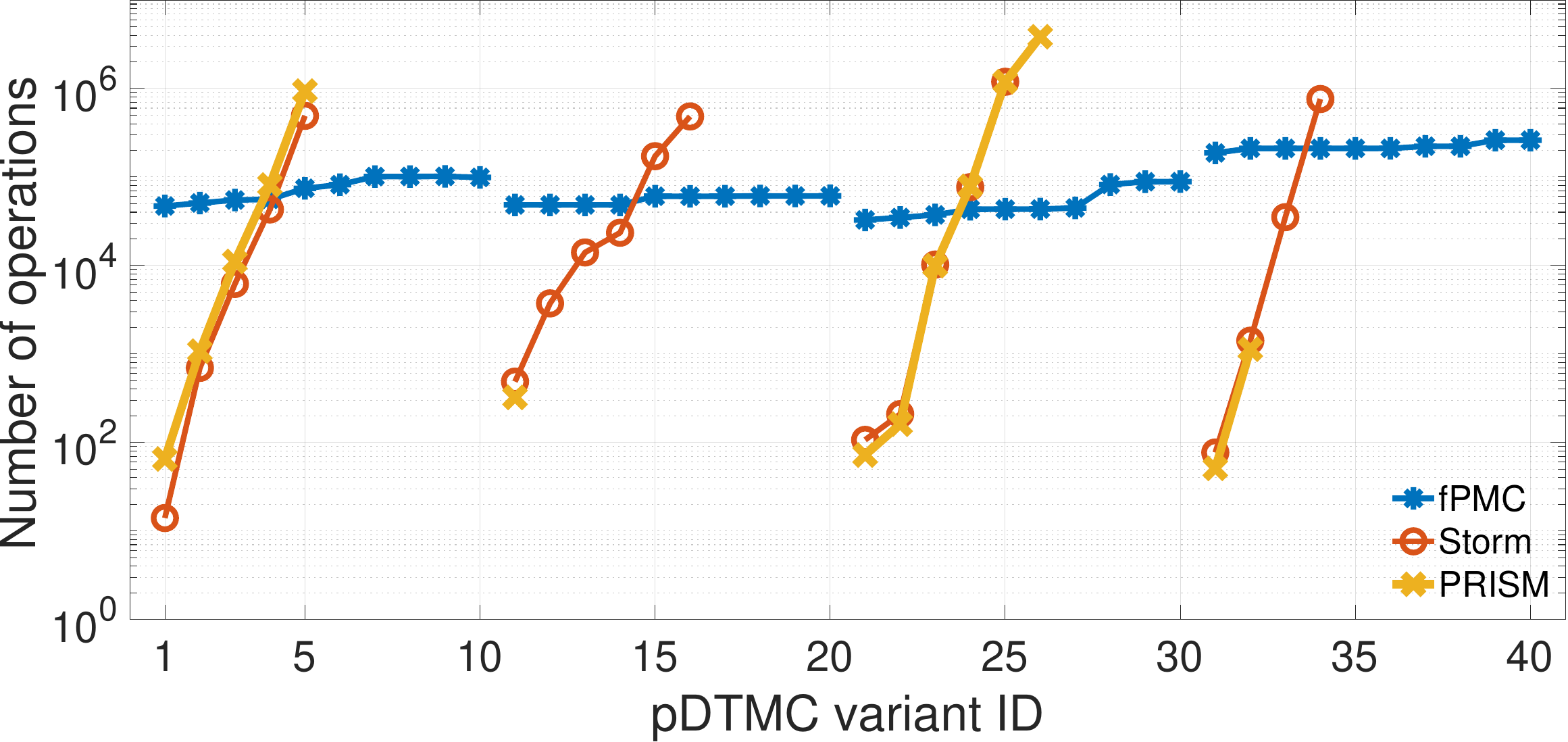}
         \caption{property unbounded until}
         \label{fig:p2_pl}
     \end{subfigure}
        \caption{Number of operations in the PMC formulae for the PL pDTMC variants and properties from Table~\ref{Table:PLTime}, with the values corresponding to the system with the same number of features joined by continuous lines to improve readability}
        \label{fig:PL_op}

\vspace*{-2mm}
\end{figure*}

\begin{table}
\sffamily
\centering
\caption{
MATLAB evaluation time for the PL PMC formulae (in seconds)}
\label{MatlabPL}
\sffamily
\def\tabcolsep{8pt}
\begin{tabular}{crcrrrr }
\toprule
\multicolumn{3}{c}{\textbf{pDTMC variant}} & \multicolumn{2}{c}{\textbf{Reachability}} & \multicolumn{2}{c}{\textbf{Unbounded until}} \\ \cmidrule(lr){1-3} \cmidrule(lr){4-5} \cmidrule(lr){6-7} 
\textbf{ID} & \textbf{\#F} & \textbf{\%PAR} & \textbf{fPMC} &  \textbf{Storm} &  
\textbf{fPMC} &  \textbf{Storm}\\ \midrule
1 & \multirow{5}{*}{4} & 10 & 5.5 & 0.001 & 5.8 & 0.001\\ 
2 & & 20 & 6.4 & 0.002 & 6.1 & $<$1ms\\ 
3 & & 30 & 7.1 & 0.06 & 6.7 & 0.07 \\  
4 & & 40 & 7.2 & 3.4 & 6.6 & 3.7\\ 
5 & & 50 & 11.4 & 1635.7 & 10.4 & 1214.1 \\ 
\midrule
11 & \multirow{6}{*}{16} & 10 & 1.6 & $<$1ms & 2.2 & $<$1ms \\ 
12 & & 20 & 21.3 & 0.05 & 6.1 & 0.02\\ 
13 & & 30 & 22.7 & 0.8 & 6.1 & 0.2 \\ 
14 & & 40 & 23.1 & 2.5 & 6.2 & 0.7 \\ 
15 & & 50 & 33.1 & 215.9 & 9.2 &68.2 \\ 
16 & & 60 & 33.9 & 4188.8 & 9.4 & 659.9 \\ 
 \midrule
21 & \multirow{5}{*}{18} & 10 & 1.5 & 0.001 & 2.2 & $<$1ms \\ 
22 & & 20 & 1.5 & 0.001 & 2.4 & 0.001 \\ 
23 & & 30 & 1.9 & 0.1 & 2.7 & 0.1 \\ 
24 & & 40 & 1.9 & 14.7 & 3.1 & 15.4 \\ 
25 & & 50 & 2.2 & 17113.7 & 3.2 & 16959 \\ 
 \midrule
31 & \multirow{4}{*}{22} & 10 & 116.1 & 0.001 & 106.2 & 0.001  \\ 
32 & & 20 & 160.4 & 0.005 & 142.2 & 0.005 \\ 
33 & & 30 & 165.3 & 2.3 & 145.9 & 2.1 \\ 
34 & & 40 & 122.1& 6049.8 & 153.1 & 5992.3\\ 
\bottomrule
\end{tabular}

\vspace*{-2mm}
\end{table}

\medskip
\noindent
\textbf{PL system.} Figure~\ref{fig:PL_op} shows the PMC formula sizes generated by the three model checkers for the PL system. As expected, with a gradual increase of parametric transitions (i.e., the percentage of pDTMC  transitions probabilities growing from 10\% to 100\%) in each model variant, larger algebraic formulae are obtain across all three model checkers, with  exponential growth for the formulae computed by Storm and PRISM, neither of which can handle the pDTMC variants with over 60\% of their transition probabilities specified as parameters. 

For simpler pDTMC variants (i.e., those with up to 40\% of their transitions specified as probabilities), the Storm and PRISM formulae have significantly fewer operators than the \approach\ formulae. We investigated this unexpected result, and found it to be due to a large number of parameters in the \approach\ abstract pDTMC model: one such parameter for each probability $\mathit{prob}_z$ of reaching an output state $z$ of a \approach\ fragment, as described in Section~\ref{subsec:fragmentationTheory}. As such, the \approach\ abstract models generated from the pDTMC variants with between 10--40\% parametric transition probabilities end up with more $\mathit{prob}_z$ parameters than the PL pDTMC variants they are obtained from. However, for these pDTMCs variants, many of the \approach\ fragments contain no or only a few PL parameters, and therefore multiple or even all $\mathit{prob}_z$ parameters for these fragments are in fact constant probabilities. We carried out separate experiments in which the constant values of such $\mathit{prob}_z$ parameters were used in the abstract \approach\ model instead of these parameters, and the size of the resulting \approach\ formulae became similar to that of the Storm and PRISM formulae for these pDTMC variants. Given these findings, we plan to include this simplification in the next version of our \approach\ tool.

A similar trend can be observed in the MATLAB evaluation time for the PL PMC formulae (Table~\ref{MatlabPL}). For the formulae derived for the pDTMC variants with up to 30\% parametric transitions, the Storm formulae can be evaluated within milliseconds, while the evaluation of the \approach\ formulae takes seconds (for the pDTMC variants modelling PL instances with four and 18 features), tens of seconds (for pDTMC variants modelling PL instances with 16 features) or even more than 100s (or pDTMC variants modelling PL instances with 22 features). Again, this difference is due to \approach\ operating with abstract models with unnecessarily many parameters, and can be resolve through the simplification explained earlier. 
For pDTMC variants with over 40\% parametric transitions, the \approach\ formulae are consistently and increasingly much faster to evaluate than those produced by Storm and PRISM, or these model checkers do not complete the PMC within 60 minutes.

\begin{figure}
 \centering
     \includegraphics[width=0.38\textwidth]{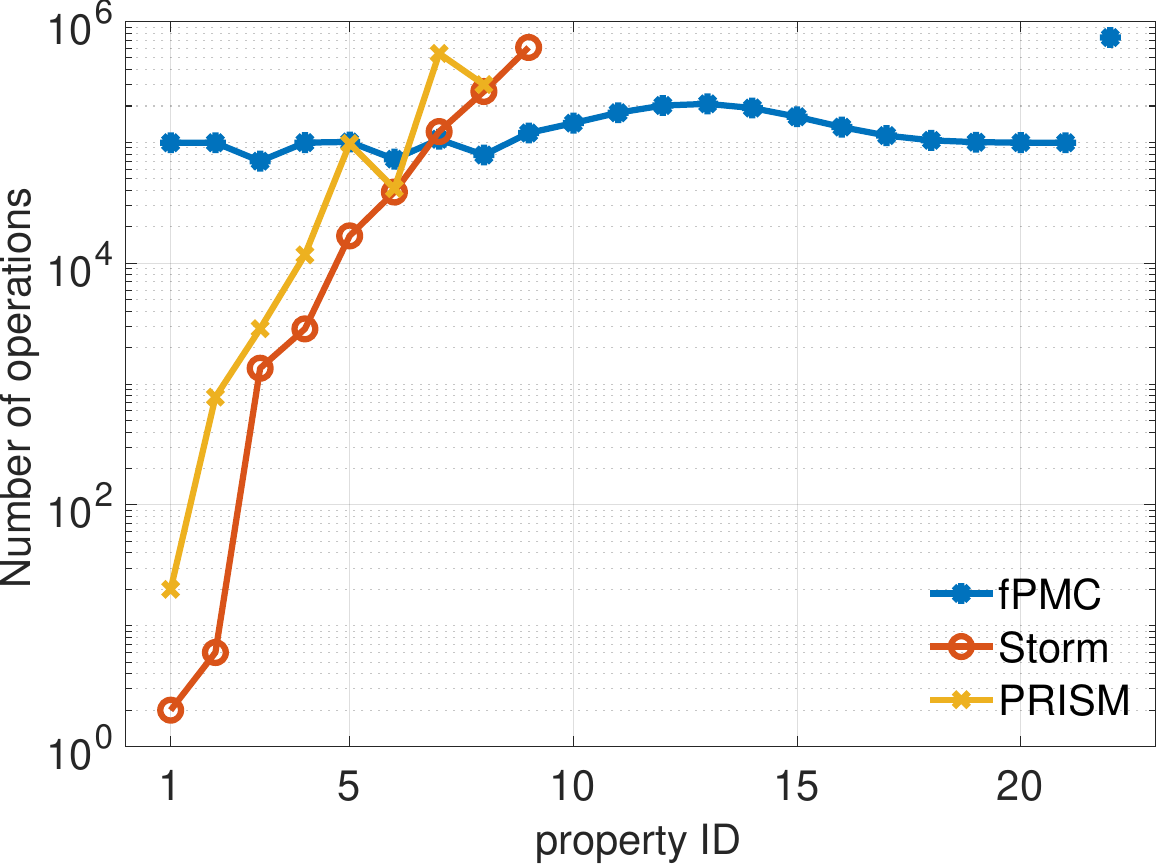}
    \caption{Number of operations in the PMC formulae for the COM pDTMC and properties from Table~\ref{Table:COMTime}; Storm and PRISM computed PMC formulae for several additional properties, but the multi-megabyte files required to store these extremely large formulae (which are available on our project website) were difficult to process, so their numbers of operations are not provided in the diagram}
    \label{fig:com_op}
\vspace*{-3mm}
\end{figure}

\begin{table*}
\sffamily
\centering
\def\tabcolsep{3.5pt}
\caption{MATLAB evaluation time for the COM process PMC formulae (in seconds)}\label{MatlabCOM}
\begin{tabular}{crrrrrrrrrrrrrrrrrrrrr}
\textbf{Property ID}&1&2&3&4&5&6&7&8&9&10&11&12&13&14&15&16&17&18&19&20&21\\
\midrule
\textbf{\approach} & 29 & 33 & 15 & 35 & 35 & 15 & 33 & 17 & 19  & 34 & 35 & 38 & 41 & 42 & 36 & 35 & 35 & 42 & 36 & 33 & 30 \\
\textbf{Storm} &0.02  & 0.001 & 0.01 & 0.008& 1 & 2  & 156 & 330  & 24528  & 14042& * & * & * & * & * & * & * & * & * & * & *  \\
\bottomrule
\\[-2.5mm]
\multicolumn{22}{l}{*PMC formula failed to be evaluated in MATLAB using the experimental machine in a 7-hour timeout}

\end{tabular}
\end{table*}

\medskip
\noindent
\textbf{COM process.} As shown in Figure~\ref{fig:com_op}, the PMC formula sizes generated by \approach, Storm and PRISM for the COM process follow a pattern similar to that obtained for the PL system. Thus, for the simplest properties (i.e., those with IDs between~1 and~5), the sizes for the Storm and PRISM formulae are much smaller than those of the \approach\ formulae; this is for the reason explained in our earlier discussion of the PL system results. However, as the analysed properties become more complex (i.e., because longer paths are required to reach the states from the reachability PCTL formulae), the number of operations from the Storm and PRISM formulae grows exponentially. In contrast, the size of the \approach-computed formulae remains relatively stable with the increased complexity of the properties. 

The MATLAB evaluation times for the Storm PMC formulae (Table~\ref{MatlabCOM}) are again increasing exponentially from a few millisecond for properties 1--4 to tens of thousands of seconds for properties~9 and~10, with MATLAB unable to complete the  evaluation before a seven-hour timeout for properties 11--21. MATLAB, however, managed to evaluate all formulae produced by \approach\ within at most 42s.   

\medskip
\noindent
\textbf{Discussion.}
The complexity of the PMC formulae (i.e., their number of arithmetic operations) in our three case studies increased with the complexity of the analysed pDTCM and PCTL property. For the Storm and PRISM formulae, this increase was exponential for all  case studies (cf.~Figures~\ref{fig:fx_op}, \ref{fig:PL_op}, and \ref{fig:com_op}). In contrast, the complexity of \approach\ formulae increased relatively little for the PL and COM case studies, and exponentially---but at a significantly slower rate than for the Storm and PRISM formulae---for the FX system. For low complexity PL and COM model-property combinations, Storm and PRISM computed PMC formulae with fewer operations than \approach. This was an unexpected behaviour that we investigated and explained in the analysis of the FX experiments, and for which we proposed a fix (to be implemented in the next version of our \approach\ tool).

The MATLAB evaluation time reflects the complexity of the PMC formulae generated by the three model checkers. The evaluation of the \approach\ formulae took milliseconds for the FX system, up to 165s for the most complex PL model-property combination, and up to 42s for the COM process. In contrast, the evaluation time for the PMC formulae generated by Storm (for the subset of model-property combinations that the model checker could analyse within 3600s) increased very rapidly with the complexity of these formulae---from milliseconds for the simplest formulae to several hours for the complex ones.  
The much faster evaluation made possible by \approach\ is highly beneficial, as it allows the parametric model checking of systems of far greater complexity than previously possible. Furthermore, self-adaptive systems that use parametric model checking  (e.g., \cite{Filieri2011,filieri2013probabilistic}) can leverage the simpler \approach\ formulae to perform their runtime decision-making much faster, and with significantly lower CPU overheads.

\begin{figure*}[t!] % "[t!]" placement specifier just for this example
\begin{subfigure}{0.33\textwidth}
\includegraphics[width=\linewidth,height = 5.1cm]{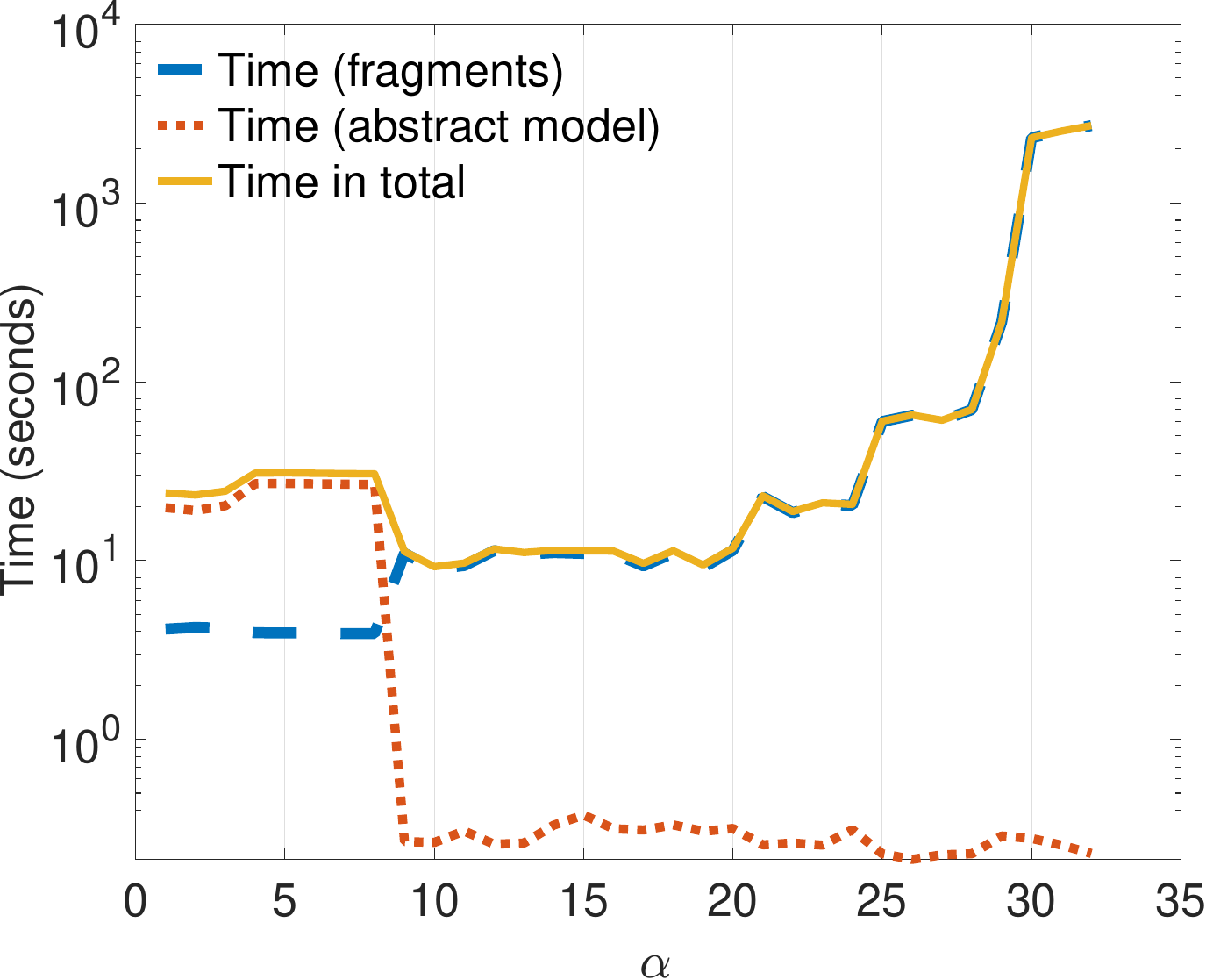}
\caption{Execution Time (ID~19, FX system)} \label{fig:a}
\end{subfigure}\hspace*{\fill}
\begin{subfigure}{0.33\textwidth}
\includegraphics[width=\linewidth]{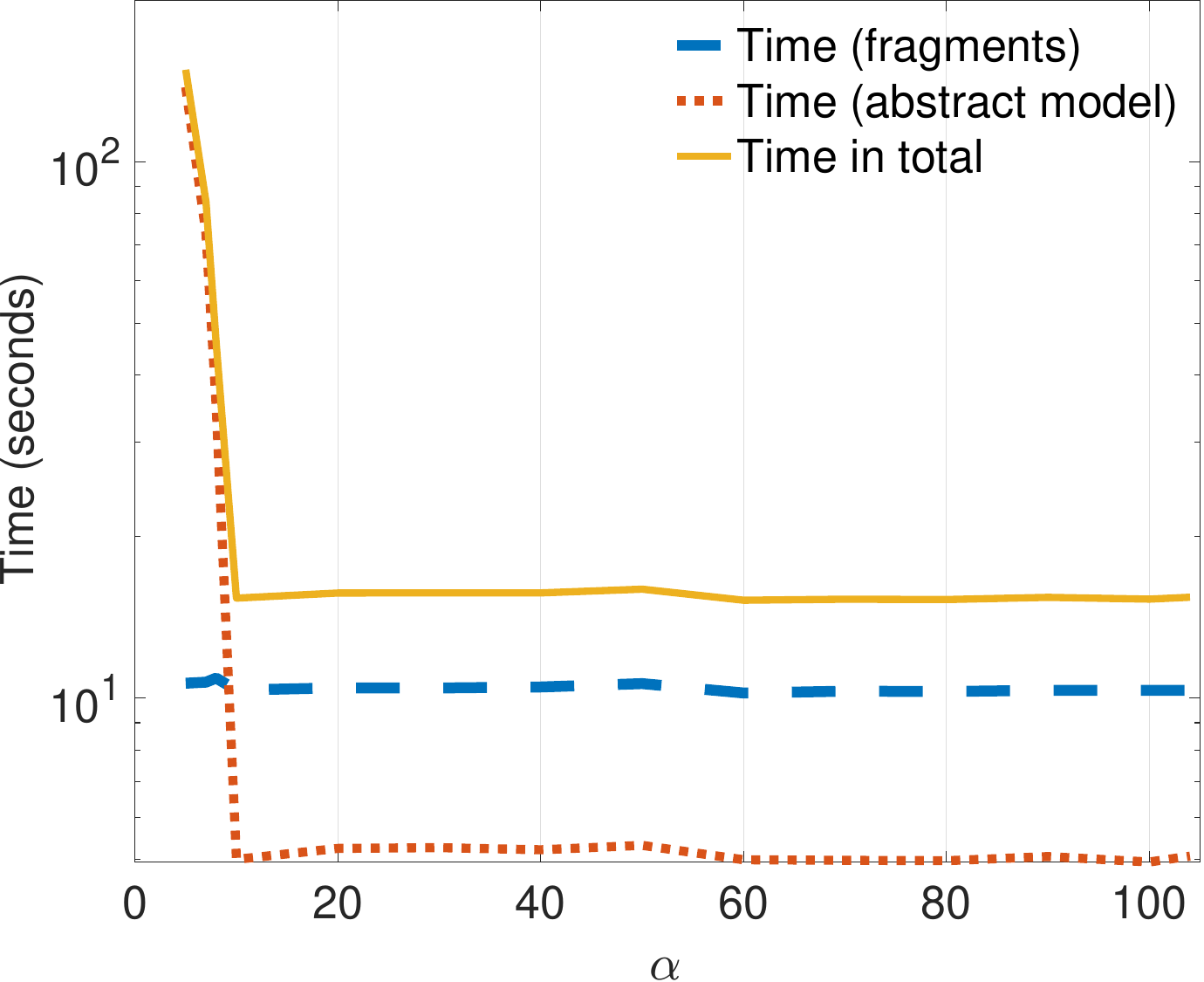}
\caption{Execution Time (ID~30, PL system)} \label{fig:b}
\end{subfigure}\hspace*{\fill}
\begin{subfigure}{0.33\textwidth}
\includegraphics[width=\linewidth]{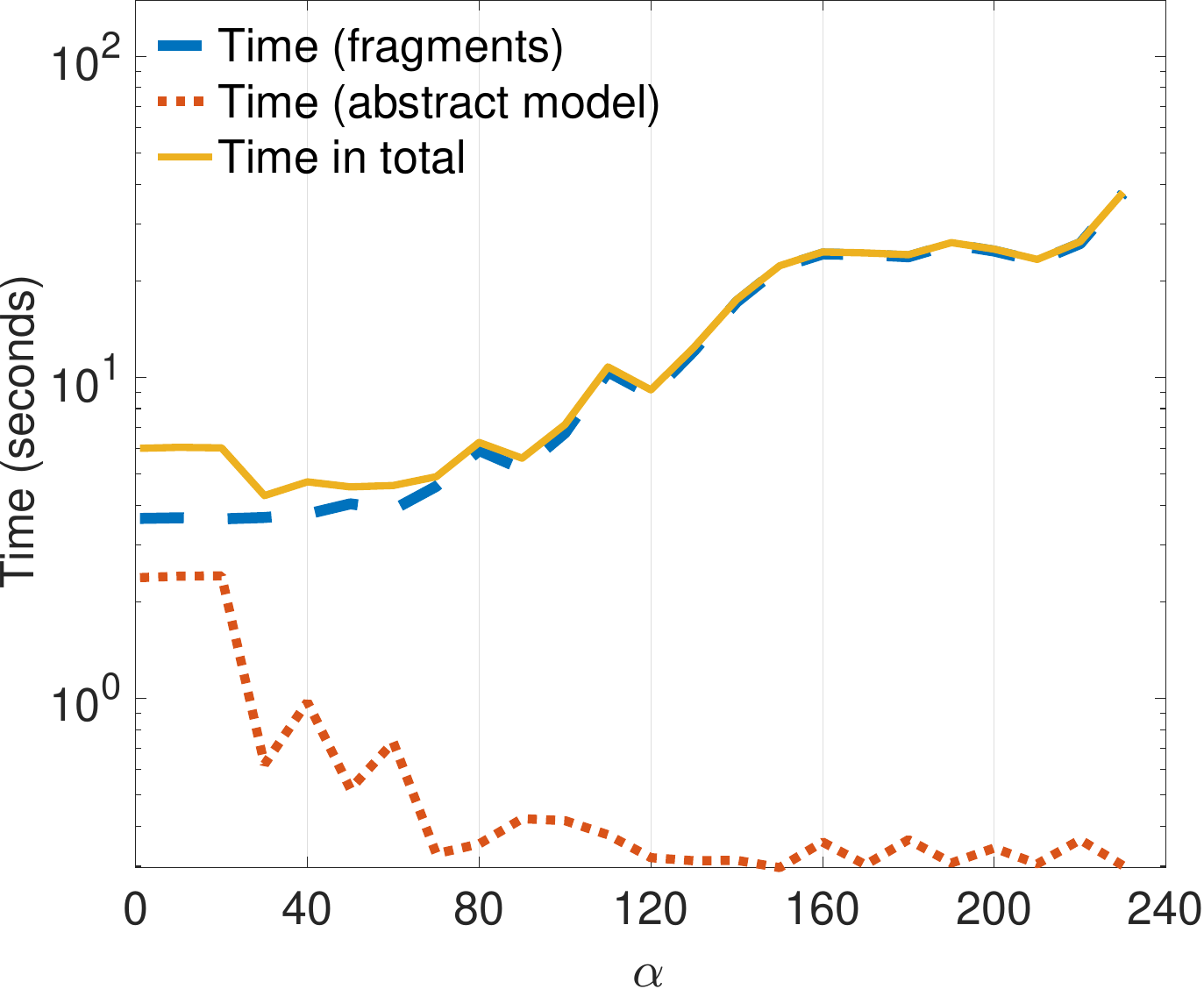}
\caption{Execution Time (ID~19, COM process)} \label{fig:c}
\end{subfigure}

\vspace{1cm}

\begin{subfigure}{0.33\textwidth}
\includegraphics[width=\linewidth,height = 5.1cm]{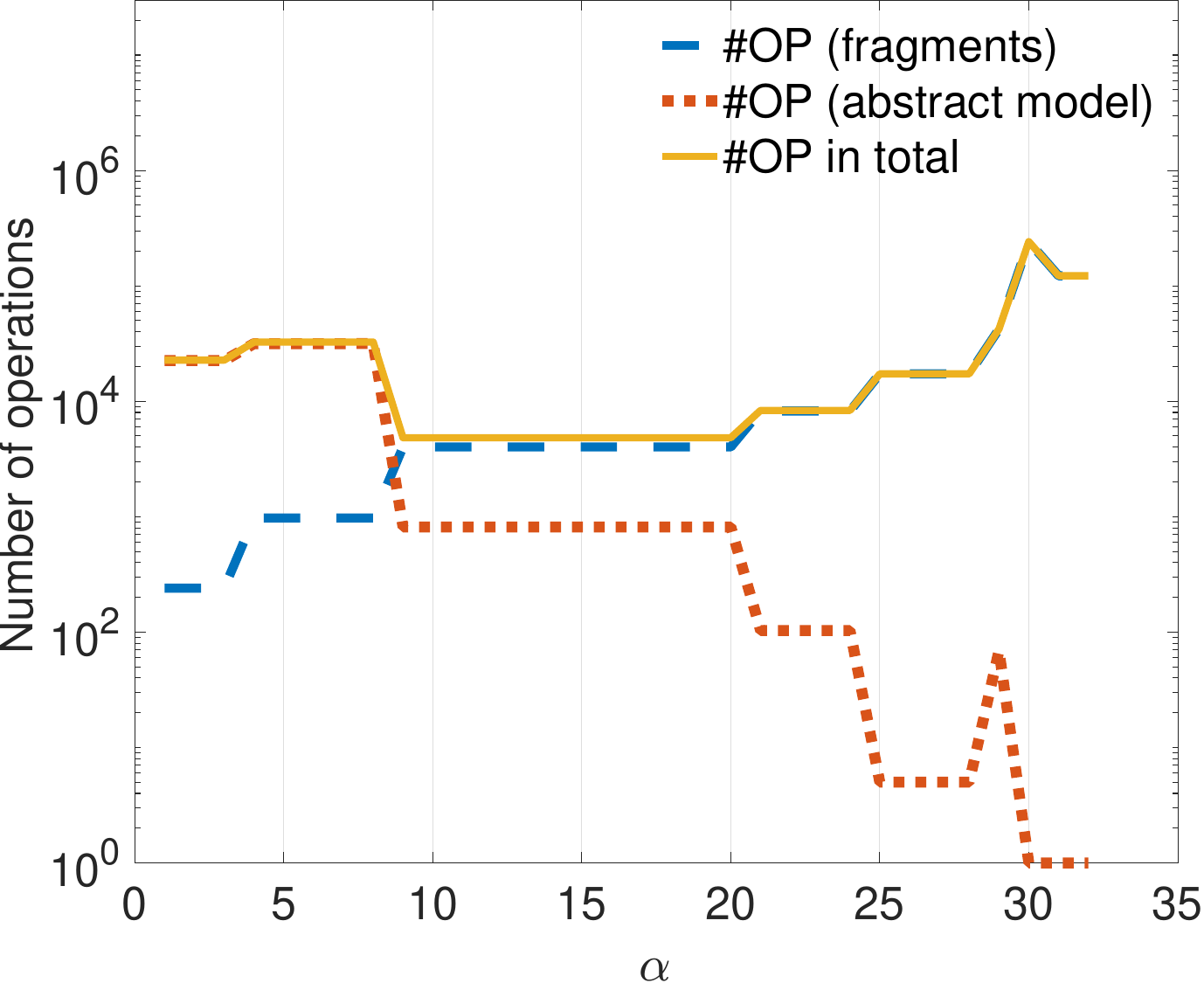}
\caption{Formulae size (ID~19, FX system)} \label{fig:d}
\end{subfigure}
\begin{subfigure}{0.33\textwidth}
\includegraphics[width=\linewidth,height = 5.1cm]{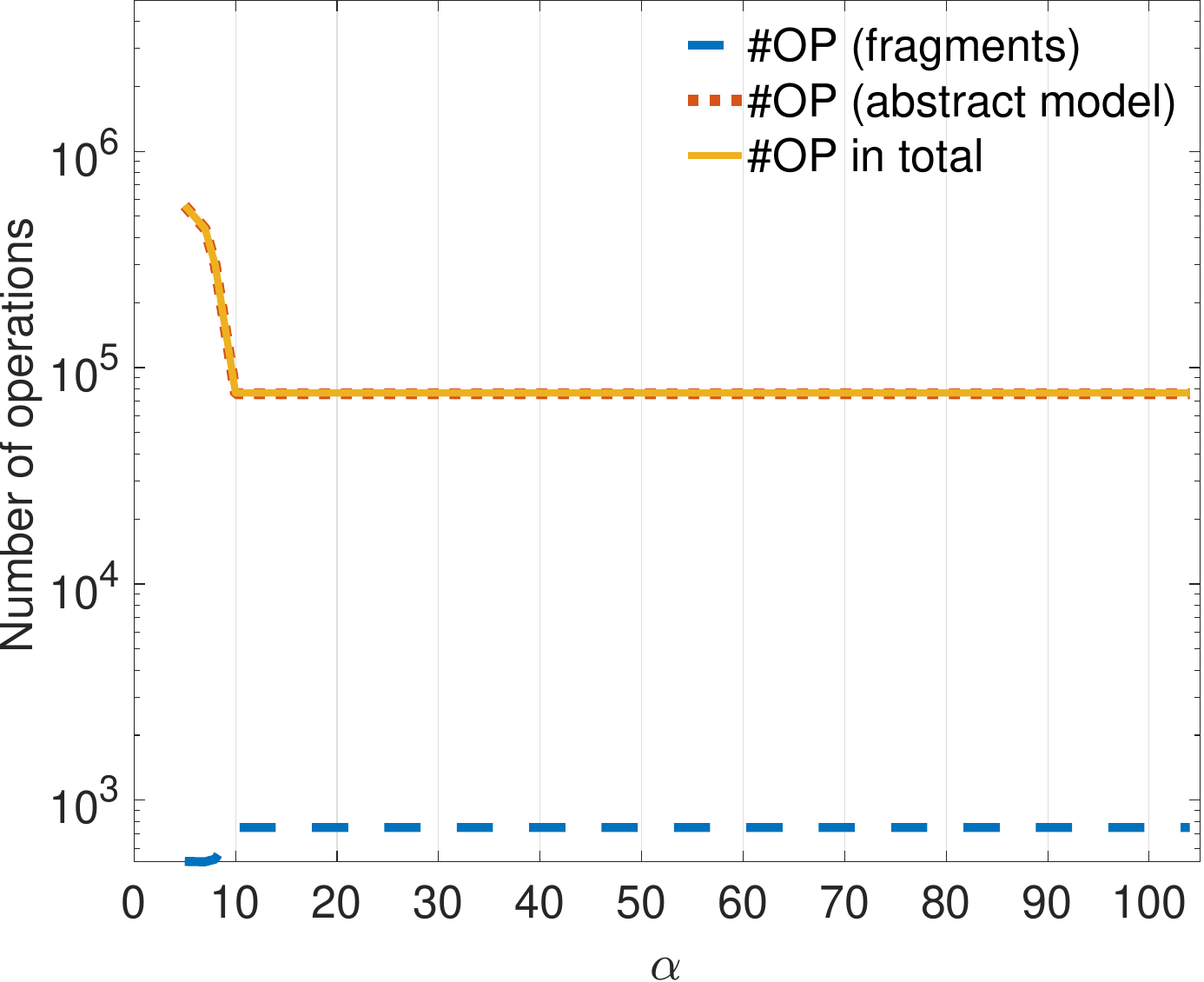}
\caption{Formulae size (ID~30, PL system)} \label{fig:e}
\end{subfigure}\hspace*{\fill}
\begin{subfigure}{0.33\textwidth}
\includegraphics[width=\linewidth,height = 5.1cm]{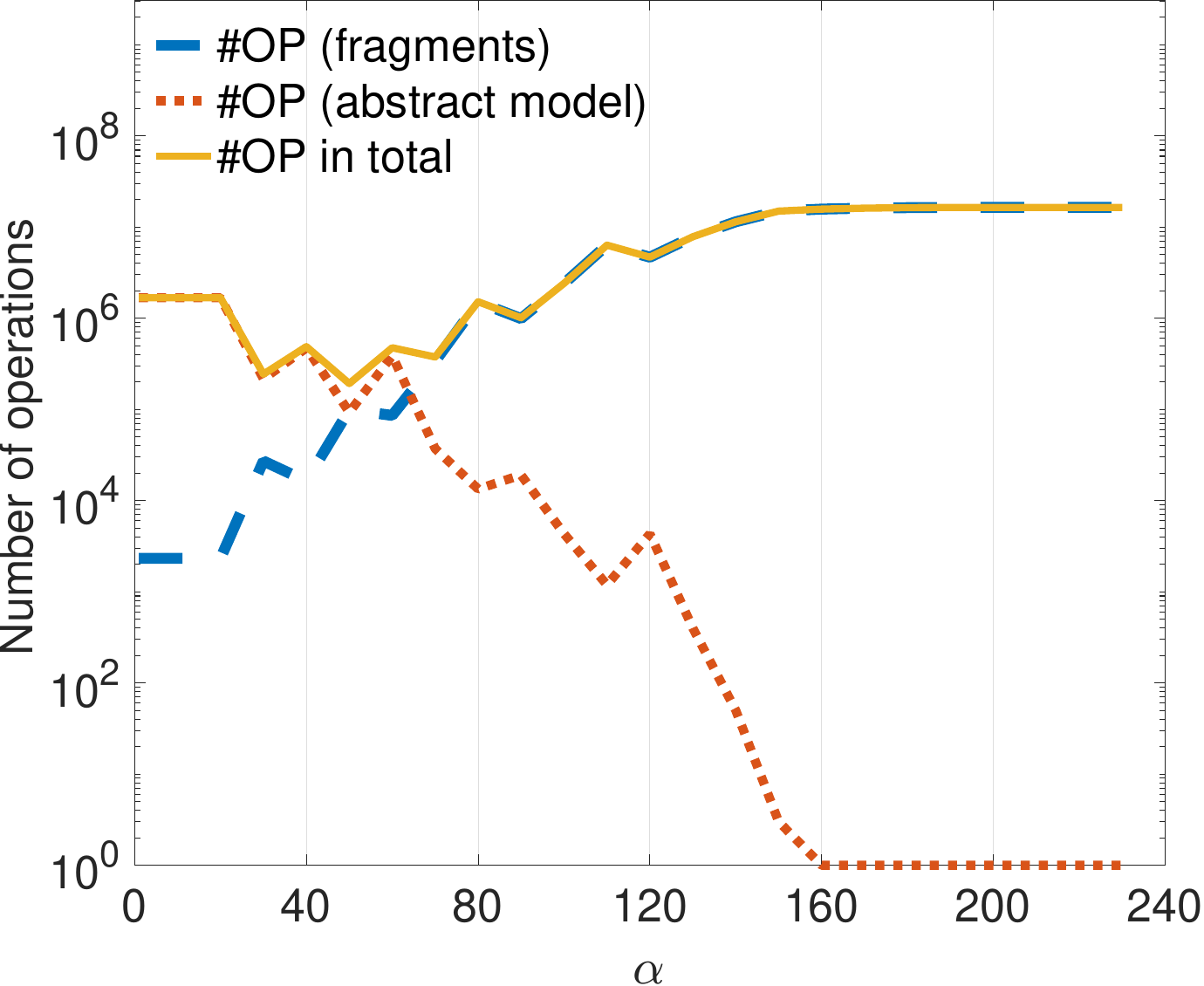}
\caption{Formulae size (ID~19, COM process)} \label{fig:f}
\end{subfigure}

\caption{\approach\ execution time for (a)~the reachability property of FX pDTMC variant 19, (b)~the reachability property of PL variant~30, and (c)~reachability property~19 of the COM pDTMC; and \approach\ formula complexity (i.e., number of operations) for the same pDTMC-property combinations of (d)~the FX system, (e)~the PL system, and (f)~the COM process.} 
\label{fig:alpha}
\end{figure*}

\subsubsection{RQ3 (Configurability)} 
We evaluated the impact of varying the \approach\ fragmentation threshold $\alpha$ on: (1)~the \approach\ execution time, and (2)~the number of operations in the resulting algebraic formulae. To that end, we randomly selected one pDTMC model--reachability property combination from each of our three case studies, and we performed its \approach\ analysis for all possible values of $\alpha$ (i.e., from 1 to the maximum number of states in the analysed pDTMC). The results of these experiments are summarised in Fig.~\ref{fig:alpha}.

\medskip
\noindent
\textbf{FX system.} Figures~\ref{fig:a} and ~\ref{fig:d} show the results from the analysis of the reachability property of FX pDTMC variant 19, which has 35~states. \approach\ completed the analysis successfully within 60~minutes for $1 \leq \alpha \leq 32$. For small $\alpha$ values ($\alpha < 10$), the time spent checking the abstract model was higher than that required to analyse the fragments. This is because smaller $\alpha$ values often lead to a larger abstract model by limiting the size of each fragment, and therefore increasing the number of fragments. For $\alpha\geq 10$, each fragment was allowed to grow larger, resulting in abstract models with fewer states. As a result, the time spent checking the abstract model  decreases (and the time spent checking the fragments grows) with the increase of $\alpha$. For $\alpha\geq 32$, \approach\ could not complete the analysis of its first fragment within the 60-minute timeout. This indicates that a single fragment can become too large and complex (as a result of $\alpha$ being too large) to be handle by the model checker that \approach\ uses for the analysis of individual fragments (i.e., Storm). The same pattern can be observed in the size of the obtained closed-form formulae: their numbers of operations increase in the fragments but decrease in the abstract model with the increase of $\alpha$. A mid-range $\alpha$ value of between 10 and 20 minimises both the \approach\ execution time and the complexity of the generated formulae.

\medskip
\noindent
\textbf{PL system.} Figures~\ref{fig:b} and~\ref{fig:e} show the results for the analysis of the reachability property of PL pDTMC variant 30. This pDTMC has 104 states, and \approach\ completed its analysis within 60 minutes for all $\alpha$ values between~4 and 104. For $1\leq \alpha\leq 3$, \approach\ timed out in the analysis of the abstract model, which was too complex. For $\alpha>3$ but still relatively small (i.e., $4 \leq \alpha \leq 8$), the time spent analysing the abstract model is higher than that required to analyse the fragments---and the sizes of the \approach\ algebraic formulae follow a similar pattern. As $\alpha$ increases further, a similar trend to that from Figure~\ref{fig:a} can be observed: the abstract model analysis time starts to decrease, and the fragment analysis goes up. However, increasing $\alpha$ beyond~10 has no noticeable impact on the analysis time and formula size. This is due to the fact that \approach\ partitions this pDTMC into fragments of up to 10~states ``naturally'', meaning that the forced fragment termination from line~\ref{Alg1Terminate} of Algorithm~\ref{algorithm:fragmentation} is not exercised for $\alpha>10$. As a consequence, the abstract model remains relatively complex even for large $\alpha$ values, and---irrespective of the value of $\alpha$---the vast majority of the operations from the \approach\ formulae for the PL reachability property come from the abstract model formula. 

\medskip
\noindent
\textbf{COM process.} Figures~\ref{fig:c} and~\ref{fig:f} show the results for the analysis of reachability property~19 of the COM pDTMC model. This pDTMC has 234~states, and \approach\ completed the analysis successfully for all possible values of $\alpha$, i.e., $1 \leq \alpha \leq 234$. Similar to the FX and PL experiments, increasing $\alpha$ led to lower analysis times and fewer formula operations for the abstract model, but resulted in higher analysis times and more formula operations for the fragments. As for the FX system, an intermediate $\alpha$ value (of between 30--70 in this case) yields the lowest total analysis time and number of operations.

\medskip
\noindent
\textbf{Discussion.} The fragmentation threshold $\alpha$ serves as a soft upper bound for deciding whether \approach\ should continue model traversal (adding further states to its stack, cf.~line~\ref{l:check-alphaS} of Algorithm~\ref{algorithm:fragmentation}) or invoke fragment termination by forcing the currently analysed state to become an output fragment state (cf.~line~\ref{l:Formation1} of Algorithm~\ref{algorithm:fragmentation}). The experiments summarised in Figure~\ref{fig:alpha} suggest that the selection of optimal $\alpha$ values depends on many factors, including pDTMC structure and number of states. Trying all the possible values of $\alpha$ as what we did here is impractical. Thus, the determination of optimal $\alpha$ values (other than by exhaustive search) remains an open research question. Nevertheless, this result suggests useful rules of thumb for the selection of suitable $\alpha$ values. First, adopting a default $\alpha$ value of between 10--30 is likely to produce good results (by guiding \approach\ to create abstract models with at least one order of magnitude fewer state than the original pDTMC). Second, increasing $\alpha$ may help if the time to analyse the abstract model is too high (or the formula produced by this analysis is too complex); conversely, decreasing $\alpha$ may help if the time to analyse one of the fragments is too high (or the formulae from the fragment analyses are too complex). Used in conjunction with hill climbing~\cite{selman2006hill}, the latter thumb rule could enable the optimisation of the threshold $\alpha$, e.g., to ensure that \approach\ yields formulae with the lowest number of operations possible.

\section{Threats to Validity}
\label{sec:validity}

\noindent
\textbf{Construct validity threats.} 
These threats may be caused by over-simplifications and invalid assumptions made during in devising the evaluation experiments. To avoid them, we carried out the evaluation of \approach\ by using case studies based on two software systems and a communications process that are freely available from other published software engineering projects. Furthermore, the pDTMCs and properties evaluated in our paper were also used in related research~\cite{RaduePMC,ghezzi2013model,hajnal2019data,classen2010model,Gerasimou2015:ASE}.  

\medskip
\noindent
\textbf{Internal validity threats.} 
To avoid these threats---which could introduce bias in the identification of cause-effect relationships in our experiments---we evaluated \approach\ by answering three independent research questions (cf.~Section~\ref{Sec:ExperimentalSetup}). To further mitigate the potential bias, the evaluation results were compared against the leading probabilistic model checkers Storm~\cite{storm} and PRISM~\cite{prism}. In addition, we performed the evaluation using multiple variants of the systems from our case studies, and the correctness of each result produced by \approach\ was individually checked using the approach described in Section~\ref{Sec:ExperimentalSetup}. Finally, we published the source code and data from our experiments online, in order to allow the replication and verification of our results.

\medskip
\noindent
\textbf{External validity threats.} These threats could affect the generalisability of our findings. As summarised in Table~\ref{Table:modelSummary}, we mitigated them in our evaluation by applying \approach\ to three types of systems (i.e., service-based systems~\cite{ameller2016survey,sun2010modeling}, software product lines~\cite{clements2002software,apel2013software}, and communication processes~\cite{dressler2010survey,dressler2010bio}) taken from different application domains. Moreover, we used multiple model variants and properties, allowing us to test our approach on a wide range of pDTMC structures and sizes, and thus to show that \approach\ provides consistently better performance in terms of faster computation time, fewer arithmetic operations in the derived algebraic formulae, and faster evaluation of these formulae than the model checkers PRISM and Storm for complex models. Finally, we eased the use of \approach\ in practice by providing tool support for our approach. However, additional experiments are needed to confirm that \approach\ is applicable to a wider range of pDTMC models and to other application domains.

\section{Related Work}
\label{sec:relatedwork}

Parametric model checking (PMC) was firstly introduced by Daws~\cite{Daws:2004:SPM:2102873.2102899} less than two decades ago. The technique enables the analysis of a DTMC when some or all of its transition probabilities are specified as rational functions over the parameters of the modelled system. PMC produces an algebraic formula for each analysed property. Such formulae are used in the design and verification of software systems, e.g., to compare alternative system designs~\cite{ghezzi2011verifying,ghezzi2013model}, to dynamically evolve the configuration of self-adaptive software~\cite{calinescu2009cads,de2017software,Filieri2011,filieri2013probabilistic}, and for system parameter analysis and synthesis~\cite{filieri2015supporting,calinescu2017synthesis,calinescu2018efficient,hahn2011synthesis}. Despite this wide adoption, the computationally intensive nature of PMC limits the scalability and applicability of the multiple software engineering methods that rely on it. 

To the best of our knowledge, the research aimed at improving the efficiency of PMC is limited to the approaches proposed  in~\cite{hahn2011probabilistic,gainer2018accelerated,Jansen2014,RaduePMC,baier2020parametric}. For the purpose of comparing these approaches to \approach, we consider them organised into two classes: those which (like Daws' original PMC technique) operate on the entire pDMTC under analysis~\cite{hahn2011probabilistic,gainer2018accelerated,baier2020parametric}, and those which (like our \approach\ approach) operate by partitioning this pDTMC into components that are then analysed individually~\cite{Jansen2014,RaduePMC}. We term these classes of approaches standard PMC and compositional PMC, respectively.

\medskip
\noindent
\textbf{Standard PMC approaches.} 
The first class of PMC approaches are complementary to our work. Any of them can be used in conjunction with \approach, to ensure that the individual probabilistic model checking of the \approach\ fragments and \approach\ abstract model is carried out efficiently. We summarise these approaches below.

In comparison to Daws' initial PMC technique~\cite{Daws:2004:SPM:2102873.2102899}, the technique proposed by Hahn et al.~\cite{hahn2011probabilistic} yields a significant improvement in PMC performance. Both PMC approaches derive an algebraic formula for the probability of reaching a set of parametric Markov chain states specified in a PCTL path formula. However, instead of computing a regular expression by exploiting the pDTMC structure as in~\cite{Daws:2004:SPM:2102873.2102899}, the PMC technique from~\cite{hahn2011probabilistic} produces a rational expression and leverages symmetry and ``cancellation'' properties of rational formulae to simplify this expression. The cancellation involves computing the greatest common divisor (GCD) of the denominator and numerator polynomials of the rational PMC formula, and using this GCD polynomial to simplify the formula. The model checkers PARAM~\cite{param} and PRISM~\cite{prism} implement this technique.

According to our classification, Jansen et al.~\cite{Jansen2014} propose a hybrid PMC approach, and we discuss its PMC ``components'' separately. The standard PMC component of~\cite{Jansen2014} consists of sophisticated partial polynomial factorisations that support the efficient simplification of large PMC rational expressions. This PMC approach is implemented by the parametric model checker Storm~\cite{storm}. 

More recently, Baier et al.~\cite{baier2020parametric} have introduced a new technique for obtaining simplified PMC formulae. This technique avoids the computationally expensive calculation of GCD polynomials (which the PMC approaches from~\cite{hahn2011probabilistic,Jansen2014} rely on) by leveraging fraction-free Gaussian elimination, which is an existing efficient method for solving systems of parametric linear equations. This method is implemented in the Storm model checker.

A different type of approach to speeding up PMC is proposed by Gainer et al.~\cite{gainer2018accelerated}. This approach involves the stepwise elimination of the states of the analysed pDTMC, through a process that resembles the mapping of finite automata to regular expressions. The outcome of this elimination is a directed acyclic graph encoding of the PMC result instead of the usual rational formula produced by other PMC techniques. The authors' evaluation of the approach (implemented in their ePMC/\textsc{IscasMC} model checker~\cite{10.1007/978-3-319-06410-9_22}) shows that it can outperform the formula-based PMC engine used by PRISM by up to two orders of magnitude.

\medskip
\noindent
\textbf{Compositional PMC approaches.} These approaches operate in a similar way to \approach. As such, we will compare each of them to our work.

The PMC approach devised by Jansen et al.~\cite{Jansen2014} improves the efficiency of parametric model checking by decomposing the state transition graph induced by the pDTMC under analysis into  strongly connected components (SCCs). PMC expressions are then computed independently for each SCC, and then combined to obtain the final PMC output in the form of a single rational formula over the parameters of the system modelled by the pDTMC. Because it is predetermined by the SCCs of the analysed pDTMC, this decomposition (which is implemented by the parametric model checker Storm) is very rigid. In particular, it may produce SCCs that are too large to be analysed efficiently, and that cannot be further decomposed.
In contrast, the \approach\ fragmentation of a pDTMC is much more flexible. The analysed pDTMC is partitioned into fragments whose size is guided by the fragmentation threshold $\alpha$. These fragments can include one or more small SCCs. Most importantly, \approach\ can split any SCCs that are too large to be analysed individually into multiple fragments. The experimental results from Section~\ref{sec:evaluation} provide ample evidence about the benefits of this flexible pDTMC partitioning. Furthermore, the \approach\ fragmentation can, in theory, be applied repeatedly, e.g., to partition a large fragment into sub-fragments (although this still has to be confirmed experimentally).  

The \approach\ theoretical foundation comprises two complementary parts. The first part, which we introduced in~\cite{RaduePMC}, defines the method for using pDTMC fragments to speed up parametric model checking---but does not provide any method for partitioning a pDTMC into fragments. Therefore, the solution from~\cite{RaduePMC} can only be applied when its users are able to exploit domain knowledge in order to manually specify the fragments of the analysed pDTMC. This limitation represents a significant barrier for the practical adoption of fragmentation-based PMC. The second part of the \approach\ theoretical foundation, which is introduced in this paper and complements our results from~\cite{RaduePMC}, removes this barrier by providing a tool-supported method for the automated fragmentation of pDTMCs.

\section{Conclusion}
\label{sec:conclusion}

We presented \approach, a tool-supported technique for software performability analysis through compositional parametric model checking. \approach\ supports the efficient analysis of reachability, unbounded until and reachability reward properties of parametric discrete-time Markov chains by automatically partitioning these models into fragments that can be analysed independently. The results of these analyses are then combined into a system of closed-form algebraic expressions that represent the solution of the initial parametric model checking problem. 

To evaluate \approach, we used it to analyse 28 PCTL properties of 62 pDTMC variants modelling three types of software systems (i.e., service-based systems, software product lines, and middleware) from different application domains. The experimental results show that \approach\ can analyse pDTMCs with over 10--20 parameters much faster than previous PMC techniques, and---in many cases---when these techniques cannot complete their analyses within 60~minutes on a standard computer. Furthermore, our evaluation showed that the algebraic expressions generated by \approach\ for such models comprise considerably fewer operations and are much faster to evaluate than those produced by previous PMC techniques. 

In future work, we will explore several opportunities for extending the applicability and efficiency of \approach. First, we will examine the possibility to apply \approach\ fragmentation repeatedly, to partition pDTMC fragments that may be too large or too complex to analyse as a whole into sub-fragments. This opportunity, which is unique to our compositional PMC technique, has the potential to support the analysis of complex pDTMCs that cannot be handled by any existing model checkers. 

Second, we aim to enhance the \approach\ fragmentation algorithm with the ability to generate close-to-optimal fragments, i.e., fragments that: (i)~are non-trivial in terms of structure, size and number of parameters; (ii)~can be analysed efficiently; and (iii)~produce PMC expressions of acceptable complexity. One option for obtaining such fragments is to adapt the \approach\ fragmentation threshold to the characteristics of the fragment under construction. 

Last but not least, we plan to improve our \approach\ tool. In particular, we will implement the simplification identified in Section~\ref{sssection:rq2}. To that end, we will update the \approach\ tool to ensure that abstract model parameters associated with constant-valued fragment reachability properties are replaced with the actual values of those properties. Additionally, we will explore options for selecting an effective PMC technique (e.g., standard or compositional) for the analysis of a given pDTCM, pDTMC fragment or pDTMC strongly connected component, paving the way for the development of a highly efficient hybrid parametric model checker that uses the available PMC techniques together.

\section*{Acknowledgements}

This project has received funding from the UKRI project EP/V026747/1 `Trustworthy Autonomous Systems Node in Resilience', and the ORCA-Hub PRF project `COVE'. The authors are grateful to the research groups who have developed the EPMC/IscasMC, PARAM, PRISM and Storm probabilistic and parameteric model checkers: our work would not have been possible without the significant theoretical advances and tools introduced by these research groups.

\section*{CRediT authorship contribution statement}

\textbf{Xinwei~Fang:} Conceptualization, Investigation, Methodology, Software, Validation, Visualization, Writing – original draft.
\textbf{Radu~Calinescu:} Conceptualization, Formal Analysis, Funding Acquisition, Investigation, Methodology, Supervision, Validation, Visualization, Writing -- original draft, Writing -- review \& editing. 
\textbf{Simos~Gerasimou:} Conceptualization, Funding Acquisition, Investigation, Methodology, Software, Supervision, Visualization, Writing -- original draft.
\textbf{Faisal~Alhwikem:} Methodology, Software.

\bibliographystyle{IEEEtranS}
\bibliography{ref}

\end{document}